\documentclass[12pt]{article}
\RequirePackage[colorlinks,citecolor=blue,linkcolor=blue,urlcolor=blue,pagebackref]{hyperref}
\usepackage{autonum}
\usepackage[utf8]{inputenc}
\usepackage{amsmath}
\usepackage{natbib}
\usepackage{bm}
\usepackage{amsthm}
\usepackage{mathtools}
\usepackage{amssymb}
\usepackage{comment}
\usepackage{commath}
\usepackage{changepage}
\usepackage{tikz}
\usetikzlibrary{tikzmark, calc}
\usepackage{xr}
\usepackage{setspace}
\usepackage{multirow}
\usepackage{booktabs}
\usepackage{setspace,float}
\usepackage{footmisc}
\usepackage{bbm}
\usepackage{threeparttable}
\usepackage{charter}
\usepackage{xfrac}
\definecolor{Myblue}{HTML}{3a99d9}
\definecolor{Myred}{HTML}{C0392B}
\definecolor{Mybluegreen}{HTML}{347986}
\definecolor{Mygreen}{HTML}{35837b}
\usepackage{xcolor}
\usepackage{mathrsfs}
\usepackage[english]{babel}
\usepackage[ruled,vlined]{algorithm2e}
\usepackage{tikz}
\usepackage[title]{appendix}
\usepackage[font=small,labelfont=bf]{caption}
\usepackage{enumerate}
\usepackage{dirtytalk}
\usetikzlibrary{decorations.pathreplacing, arrows,positioning}
\renewcommand{\arraystretch}{1.5}
\newtheorem{theorem}{Theorem}[section]
\newtheorem{assumption}{Assumption}
\newtheorem{example}{Example}

\newtheorem{lemma}{Lemma}[section]
\newtheorem{proposition}{Proposition}[section]
\theoremstyle{definition}
\newtheorem{definition}{Definition}[section]
\newtheorem{remark}{Remark}[section]

\DefineFNsymbols{mySymbols}{{\ensuremath\dagger}{\ensuremath\ddagger}\S\P
   *{**}{\ensuremath{\dagger\dagger}}{\ensuremath{\ddagger\ddagger}}}
\setfnsymbol{mySymbols}

\DeclareMathOperator*{\argmax}{arg\,max}

\DeclareMathOperator*{\argsup}{arg\,sup}

\makeatletter
\newcommand*\rel@kern[1]{\kern#1\dimexpr\macc@kerna}
\newcommand*\widebar[1]{%
  \begingroup
  \def\mathaccent##1##2{%
    \rel@kern{0.8}%
    \overline{\rel@kern{-0.8}\macc@nucleus\rel@kern{0.2}}%
    \rel@kern{-0.2}%
  }%
  \macc@depth\@ne
  \let\math@bgroup\@empty \let\math@egroup\macc@set@skewchar
  \mathsurround\z@ \frozen@everymath{\mathgroup\macc@group\relax}%
  \macc@set@skewchar\relax
  \let\mathaccentV\macc@nested@a
  \macc@nested@a\relax111{#1}%
  \endgroup
}
\makeatother
\usepackage{xr}
\makeatletter
\newcommand*{\addFileDependency}[1]{
\typeout{(#1)}
%
%
\@addtofilelist{#1}
%
\IfFileExists{#1}{}{\typeout{No file #1.}}
}\makeatother

\usepackage{dirtytalk}
\usetikzlibrary{decorations.pathreplacing, arrows,positioning}
\newcounter{algorithm}

\DeclareCaptionLabelFormat{alglabel}{Algorithm~#2}
\captionsetup[algorithm]{labelformat=alglabel}
\usepackage[a4paper]{geometry}
\geometry{verbose,tmargin=30mm,bmargin=32mm,lmargin=20mm,rmargin=20mm,bottom = 40mm}
\begin{document}

\title{Individualized Treatment Allocation in Sequential Network Games\thanks{We thank Isaiah Andrews, Kirill Borusyak, Andrew Chesher, Timothy Christensen, Ben Deaner, Aureo de Paula, Duarte Gonçalves, Sukjin Han, Hiroaki Kaido, Keisuke Hirano, Daniel Lewis, Michael Leung, Jonas Lieber, SangMok Lee, Jordan Norris, Jeff Rowley, Shuyang Sheng, Davide Viviano, and Andrei Zeleneev for beneficial comments. We also benefited from the comments of the participants in seminars at Brown University, Chicago Booth, CUHK Shenzhen, Doshisha University, LSE, Stanford, UCL, Western Ontario, WUSTL, Yale, and 2023 AMES, ESG, NASMES, and SEA conferences. 
We also thank Ziyu Jiang, Tian Xie, Yanziyi Zhang, and Yuanqi Zhang for their helpful discussions. Kitagawa gratefully acknowledges financial support from the NSF (grant number: FAIN 234361).}}

\author{Toru Kitagawa\thanks{Brown University. Email$\colon$ \href{mailto:toru_kitagawa@brown.edu}{toru\_kitagawa@brown.edu}}\: and Guanyi Wang\thanks{Washington University in St. Louis. Email$\colon$ \href{mailto:guanyi@wustl.edu}{guanyi@wustl.edu}}}
\maketitle
\begin{abstract}
Designing individualized allocation of treatments so as to maximize the equilibrium welfare of interacting agents has many policy-relevant applications. 
    Focusing on sequential decision games of interacting agents, this paper develops a method to obtain optimal treatment assignment rules that maximize a social welfare criterion by evaluating stationary distributions of outcomes. 
    Stationary distributions in sequential decision games are given by Gibbs distributions, which are difficult to optimize with respect to a treatment allocation due to analytical and computational complexity. 
    We apply a variational approximation to the stationary distribution and optimize the approximated equilibrium welfare with respect to treatment allocation using a greedy optimization algorithm. 
    We characterize the performance of the variational approximation, deriving a performance guarantee for the greedy optimization algorithm via a welfare regret bound.
    We implement our proposed method in simulation exercises and an empirical application using the Indian microfinance data \citep{banerjee2013diffusion}, and show it delivers significant welfare gains.
\end{abstract}

\textbf{Keywords}: Treatment choice, Markov random field, Gibbs distribution, variational approximation, mean field games, graphical potential game.


\section{Introduction}\label{sectionintro}
The question of how best to allocate treatment to units interacting in a network is relevant to many policy areas, including the provision of local public goods \citep{bramoulle2007public}, the diffusion of microfinance (\citet{banerjee2013diffusion}; and \citet{akbarpour2025just}), and vaccination (\citet{galeotti2013strategic}; and \citet{KITAGAWA2023109}). 
Obtaining an optimal individualized allocation, however, is often infeasible due to analytical and computational challenges. 
As a consequence, practical counterfactual policy analysis in the presence of network spillovers is limited to simulating and comparing outcome distributions or welfare values across a few benchmark candidate policies. 
This leaves the magnitude of the potential welfare gains of an optimal individualized assignment policy unknown. 

Focusing on a class of social network models in which interacting agents play sequential decision games (\citet{jackson2002evolution}; \citet{nakajima2007measuring}; \citet{mele2017structural}; and \citet{christakis2020empirical}), this paper develops a method to obtain optimal treatment assignment rules that maximize a social welfare criterion.
We consider an individualized allocation of binary treatments over agents who are heterogeneous in terms of their own observable characteristics, their network configurations, and their neighbors' observable characteristics. 
Each agent chooses a binary outcome so as to maximize their own utility.
This choice depends upon the agent's own characteristics and treatment as well as their neighbors' characteristics, treatments and choices.
The sequential decisions of randomly ordered agents induce a unique stationary distribution of choices \citep{nakajima2007measuring,mele2017structural}. 
We specify the planner's welfare criterion to be the mean of the aggregate outcomes (i.e., the sum of the means of binary outcomes over all agents in the network) at the stationary distribution that is associated with a given treatment allocation.
We aim to maximize the welfare evaluated at the stationary outcome distribution with respect to the individualized allocation of treatments. Consider, for example, targeted information provision in villages with the aim of increasing microfinance adoption, as discussed in \citet{banerjee2013diffusion}. By comparing the stationary distribution of adoption decisions among units after they receive treatments, we determine whom to target in the village to maximize the adoption rate subject to a capacity constraint.

There are analytical and computational challenges to solving the maximization problem for optimal targeting. 
First, fixing an allocation of treatments, the sequential decision games induce a Markov random field (MRF) and the stationary outcome distribution has a Gibbs distribution representation.
The analytical properties of the mean of the aggregate outcomes, however, are difficult to characterize. 
To approximate the joint distribution of outcomes, the literature on MRFs performs numerical methods such as Markov Chain Monte Carlo (MCMC) \citep{geman1984stochastic}. 
If the size of the network is moderate to large though, MCMC can be slow to converge. 
It is, therefore, practically infeasible to perform MCMC to evaluate the welfare at every candidate treatment assignment policy in brute force search for optimal treatment allocation. 
Second, obtaining an optimal individualized assignment is a combinatorial optimization problem with respect to a binary vector of length $N$ with at most $\kappa \geq 1$ number of ones, where $N$ is the number of agents in the network and $\kappa$ is the capacity for the treatment.
Brute force search yields the global optimum; however, it requires evaluating welfare under $\binom{N}{\kappa}$ different allocations.   

We tackle these challenges by proposing a novel method for approximately solving the combinatorial optimization problem for individualized assignment. Our approach builds on variational approximation, which approximates the stationary distribution of the outcomes by a more tractable parametric family of distributions that minimizes the Kullback-Leibler divergence with the target Gibbs distribution.
We then optimize the variationally approximated equilibrium welfare with respect to the assignment vector by a greedy algorithm, which assigns the treatment sequentially to the unit who generates the largest welfare gain given the previous assignments. The variational approximation step reduces the computational burden of running MCMC at each candidate policy. The greedy optimization further reduces the computational complexity of the combinatorial optimization, since it suffices to evaluate the equilibrium welfare only at the order of $N \times \kappa$ times.

In addition to the novel proposal of how to obtain targeting policy, this paper comes with a couple of new theoretical contributions. First, we provide a theoretical justification for maximizing the welfare criterion under a variational approximation by showing that the associated approximation gap of welfare shrinks to zero as the size of network increases. To show this claim, we derive and exploit a novel transportation inequality that bounds the welfare regret by the Kullback-Leibler divergence, and justifies the use of variational approximation to search for welfare-optimal treatment allocation.
Second, we derive a theoretical regret bound to gauge the welfare performance of our greedy algorithm compared with the computationally infeasible brute force optimization.

To highlight this paper's unique contributions, we abstract from estimation of the structural parameters underlying the sequential decision game and assume that they are known. See \citet{geyer1992constrained}, \citet{snijders2002markov}, \citet{wainwright2008graphical}, \citet{chatterjee2013estimating}, \citet{mele2017structural}, \citet{boucher2017my}, and \citet{10.1162/rest_a_01023} for identification and estimation of these parameters, and \cite{wang2024robust} for how to incorporate estimation uncertainty into welfare regret bounds.
In practical terms, our proposed method is useful for computing an optimal assignment of treatment, with point estimates of the structural parameters plugged-in. 

To assess the performance of our proposal, we perform extensive numerical studies. In a small network setting where we can feasibly find a globally optimal policy by brute force, our simulations demonstrate that our greedy algorithm closely replicates a global optimum obtained by grid search.
We augment these numerical studies with an empirical application that illustrates the implementation and welfare performance of our method.
Specifically, we apply our procedure to Indian microfinance data that has been previously analyzed by \citet{banerjee2013diffusion}. 
The data contains information about households in a number of villages, their relation to other households in their village, and whether they chose to purchase a microfinance product.
For each village in the sample, we estimate the structural parameters of an assumed utility function using the method outlined in \citet{snijders2002markov}. 
Plugging in the parameter estimates, we obtain an individualized treatment allocation rule using our algorithm. 
We compare the village welfare attained by our proposed algorithm with the welfare achieved by the centrality-based allocation performed by an NGO called Bharatha Swamukti Samsthe (BSS). As new empirical findings to the literature, the welfare comparisons show that targeting central units in a network as done by NGO is suboptimal, implying that in addition to the network structure, the magnitudes of strategic interactions and treatment spillovers are important factors to consider when designing the treatment allocation.
For all $43$ villages in the sample, our method is associated with a higher welfare-level, with the magnitude of improvement varying from $9.82\%$ to $137.46\%$ (average improvement is $40.69\%$) of the welfare levels attained by the NGO's allocations.
The magnitude of the welfare gain is substantial and demonstrates the benefits of individualized targeting under interference. 


\subsection{Literature Review}
This paper intersects with several literatures in economics and econometrics, including graphical game analysis, MRF and variational approximation, discrete optimization of non-submodular functions, and statistical treatment rules.

Graphical game analysis has a long history in economics, see \citet{rosenthal1973class}, \citet{kakade2003correlated},
\citet{ballester2006s},
\citet{roughgarden2010algorithmic}, \citet{kearns2013graphical}, 
\citet{babichenko2016graphical}, \citet{de2018identifying}, \citet{leung2020equilibrium} and \citet{Parise2023graphon}. 
The most relevant paper to our work are \citet{mele2017structural} and \citet{christakis2020empirical}, which study strategic sequential network formation. \citet{mele2017structural} formulates the network formation game as a potential game \citep{monderer1996potential}, and characterizes the stationary distribution of the network as a Gibbs distribution. 
We apply a similar technique to generate the stationary distribution of actions in our game, while the main focus is to develop a method for approximating optimal targeting in terms of equilibrium welfare. 
\citet{kashaev2023peer} introduces a similar sequential structure into a discrete choice model with peer effect and estimates the model with panel data. 
\citet{badev2021nash} extends the setting in \citet{mele2017structural} to study how behavioral choices change the network formation. 
\citet{ballester2006s} and \citet{galeotti2020targeting} also study targeted interventions on networks, while the utility specification, the objective function, and the action space specified therein differ from ours. 

MRF offer a way to represent the joint distribution of random variables as a collection of conditional distributions. 
We model an individual's choice of outcomes as the maximization of a latent payoff function that depends upon a treatment allocation and their neighbors' choices, and derive an MRF representation of the joint distribution of outcomes. 
\citet{wainwright2008graphical} covers key results on variational approximation and MRF, and 
\citet{chatterjee2016nonlinear} provides an approximation error bound to variational approximation applied to MRF for binary outcomes. 
To the best of our knowledge, this literature has not studied how to obtain an optimal intervention in terms of a criterion function defined on the joint distribution of outcomes characterized as a MRF.


Although it does not introduce sampling uncertainty, this paper shares some motivation with the literature on statistical treatment rules \citep{manski2004statistical, dehejia2005program}.
In econometrics and machine learning, welfare regret typically arises due to uncertainty surrounding the value of underlying parameters (i.e., estimation).
In this work, regret arises from our use of variational approximation and of a greedy algorithm (i.e., identification). 
See \citet{stoye2009minimax,stoye2012minimax}, \citet{hirano2009asymptotics,hirano2020asymptotic}, \citet{chamberlain2011bayesian,chamberlain2020robust}, \citet{tetenov2012statistical}, and \citet{christensen2022optimal} for decision theoretic analyses of statistical treatment rules. 
There is also a growing literature on learning individualized treatment assignments including \citet{kitagawa2018should}, \citet{athey2021policy}, 
\citet{kasy2021adaptive}, \citet{kitagawa2021constrained}, \citet{mbakop2021model}, 
\citet{sun2021empirical}, and \citet{adjaho2022externally}, among others. 
These works do not consider settings that allow for the network spillovers of treatments or outcomes. 

There are some recent works that introduce network spillovers into statistical treatment choice.
\citet{viviano2025policy} and \citet{ananth2020optimal} consider treatment assignment rules taking into account the spillover effects of the treatments. 
In contrast to them, we consider spillovers through strategic interactions together with the treatment spillovers, and focus on fully individualized treatment assignment rules rather than a treatment rule as a constrained function of one's observable characteristics and network information.
\citet{munro2023treatment} studies targeting analysis taking into account spillovers through the market equilibrium. 
\citet{KITAGAWA2023109} considers the allocation of vaccines over an epidemiological network model (a Susceptible-Infected-Recovered network) in simple two-period transition model. \citet{wang2024robust} studies targeting over agents who play a simultaneous game with multiple equilibria.  
In a different context of treatment choice, \citet{kitagawa2022stochastic} applies variational approximation to a quasi-posterior distribution for individualized treatment assignment policies and studies welfare regret performances when assignment policies are drawn randomly from the variationally approximated posterior.

    \section{Model}\label{sectionmodel}
\subsection{Setup}

Let $\mathcal{N}=\{1,2,..., N\}$ be the set of individuals in a network. 
Each unit has a $k$-dimensional vector of observable characteristics that we denote by $X_i$, $i\in\mathcal{N}$. 
Assuming that the support of $X_i$ is bounded, we normalize the measurements of $X_i$ to be nonnegative, such that $X_i\in\mathbb{R}_{+}^{K}$. 
Let $\mathcal{X}=(X_{1},...,X_{N})$ be a matrix that collects the characteristics of units in the population, and let $\mathcal{X}^N$ denote the set of all possible matrices $\mathcal{X}$. Let $\overline{X}\coloneqq(\max_{i\in\mathcal{N}}X_{1i},...,\max_{i\in\mathcal{N}}X_{ki})$. 
Let $D=(d_1,...,d_N)$ denote a vector of treatment allocation, where $d_i\in\{0,1\}$, $i\in \mathcal{N}$, indicates whether unit $i$ is treated ($d_i=1$) or untreated ($d_i=0$).

The social network is represented by an $N\times N$ binary matrix that we denote by $G=\{G_{ij}\}_{i,j\in\mathcal{N}}$, and that is fixed and exogenous in this work. 
$G_{ij}=1$ indicates that units $i$ and $j$ are connected in the social network, whilst $G_{ij}=0$ indicates that they are not. 
Let $\mathcal{N}_i$ indicate the set of neighbors of unit $i$. 
$\widebar{N}$ denotes the maximum number of edges for one unit in the network (i.e., $\widebar{N}=\max_i\vert \mathcal{N}_i\vert$); 
$\underline{N}$ denotes the minimum number of edges for one unit in the network (i.e., $\underline{N}=\min_i\vert \mathcal{N}_i\vert$). 
As a convention, we assume there are no self-links (i.e., $G_{ii}=0,\:\forall i\in\mathcal{N}$). 
We further assume that the following property holds for the network structure$\colon$
\begin{assumption}{(\textbf{Undirected Link})}\label{assumpundir}
The adjacency matrix $G$ is symmetric.
\end{assumption}
\noindent The symmetric property of interaction in Assumption \ref{assumpundir} is a necessary condition for our interacted sequential decision game to be a proper potential game (Definition \ref{defpg} below) that can yield a unique stationary outcome distribution. 
The size of the spillover between units $i$ and $j$ depends not only upon $G_{ij}$ but also upon the treatment allocation and covariates, as we specify further below. 



Let $y = (y_1, \dots, y_N) \in \mathcal{Y}^N$, $\mathcal{Y}=\{0,1 \}$, be a profile of actions of the $N$ individuals in the network. Let the preferences (utilities) of individuals be given by $\{U_i(y,\mathcal{X},D,G;\boldsymbol{\theta})\}_{i=1}^N$, which depend on own and others' actions $y$, planner's treatment allocation $D$, observable characterstics $\mathcal{X}$, network structure $G$, and the structural parameters $\boldsymbol{\theta}$. We specify the individual utility function as a quadratic function of own and neighbors' choices:
    \begin{equation}\label{eq:uti}
    U_i(y,\mathcal{X},D,G;\alpha,\beta,\gamma)=\alpha_iy_i+\sum_{j\in\mathcal{N}_i}\beta_{ij}y_iy_j+\sum_{j\in\mathcal{N}_i}\gamma_{ij}y_j,
\end{equation}
where we let coefficients ($\alpha_i,\beta_{ij}, \gamma_{ij})$ depend on their own covariates and treatment status as well as those of their neighbor unit $j$.
Given a network $G$, covariates $\mathcal{X}$, and a treatment allocation $D$, the coefficient $\alpha_i$ on unit $i$'s choice represents the change in utility with respect to own action, and $\beta_{ij}+ \gamma_{ij}$ and  $\gamma_{ij}$ represent the spillovers of unit $j$'s action to $i$'s utility  when $y_i=1$ and $y_i=0$, respectively.  

In our empirical application, we specify
\begin{equation}\label{eq:alpha}
    \alpha_i=\theta_0+\theta_1d_i+X_i'(\theta_2+\theta_3d_i
)+A_N\theta_4 \sum_{j\in\mathcal{N}_i}m_{ij}d_j,
\end{equation}
and
\begin{equation}\label{eq:beta}
    \beta_{ij}=A_Nm_{ij}(\theta_5+\theta_6d_id_j).
\end{equation}
where $m_{ij}=m(X_i,X_j)$ is a (bounded) nonnegative real-valued function of personal characteristics measuring the distance between the characteristics of units $i$ and $j$. 
As shown in Proposition \ref{propoten} and Theorem \ref{theorem1} below, independent of their functional form specifications, $\gamma_{ij}$'s are not identifiable parameters since they do not appear in the stationary outcome distribution.
$A_N$ is a term that controls the magnitude of spillovers to make the unknown parameters ($\theta_4,\theta_5,\theta_6$) independent of the size of the network;
$A_N = \mathcal{O}(1/N)$ corresponds to a dense network and $A_N = \mathcal{O}(1)$ corresponds to a sparse network. Our framework does not restrict the density of network as far as the strategic interaction term, $\sum_{j\in\mathcal{N}_i}A_Nm_{ij}(\theta_5+\theta_6d_id_j)$, is bounded as network size increase.

\subsubsection*{\textbf{Running example: Information Injection on Microfinance \citep{banerjee2013diffusion}}} A network corresponds to a village consisting of $N$ households connected by a friendship network \(G\). Each household \(i\) makes a microfinance adoption decision \(Y_i\). Treatment $d_i$ indicates whether household \(i\) attends an information seminar on available microfinance products. Household $i$'s adoption decision could depend on whether their friends had attended the information seminar due to the information diffusion through friendship (i.e., spillover effects). Moreover, their adoption decision can depend on whether their friends adopt or not. e.g., households whose friends become wealthy due to microfinance are attracted to microfinance more than those households whose friends do not adopt microfinance. The NGO observes: the household characteristics \(X\), and social network $G$, and chooses a treatment allocation \(D\) to maximize the overall adoption rate in the village.

\begin{example}\textbf{(Customer Purchase Decisions)}
    Individual $i$ makes a purchase decision $Y_i$ (i.e., buy or not buy) for one product (e.g., Dropbox subscription, Orange from Sainsbury, iPhone). 
    In this example, the social planner is the company that is trying to maximize the total number of customers that purchase its products. 
    Individuals' purchase decisions sequentially depend upon the purchase decision of their friends or of their colleagues. 
    The company observes individuals' friendships and then decides how to allocate discount offers to achieve its own targets. (e.g., \citealp{richardson2002mining})
\end{example}

\begin{example}{(\textbf{Criminal Network})}
In a criminal network, suspects are connected by a social network. 
Suspect $i$ makes a decision whether to commit a crime, $Y_i=0$, or not, $Y_i=1$. 
The social planner in this example is the government or a police force that is trying to minimize the total number of crimes in the long run. 
The decision that a suspect makes about whether to commit a crime is based upon whether they and their friends have been arrested before ($d_i=1$ denotes they have been arrested before and $d_i=0$ denotes they have not been arrested in the past). 
The social planner observes the criminal network and decides which suspects to arrest. (e.g, \citealp{lee2021key})
\end{example}

\subsection{Potential Game}\label{sec:poten}
Before introducing the sequential structure of the game and deriving the stationary outcome distribution, we first formulate potential games. 
The concept of a potential game has been used to study strategic interaction since \citet{rosenthal1973class}, providing a tool to analyze the Nash equilibria of (non) cooperative games in various settings (e.g., \citealp{jackson2010social}, and \citealp{bramoulle2014strategic}).

\begin{definition}{(\textbf{Potential Game} \citep{monderer1996potential})}\label{defpg} Let $(U_i(y_i,y_{-i}): i=1, \dots, N)$ be the pay-offs of a game with $N$ players. It is a potential game if there exists a potential function $\Phi:\mathcal{Y}^N\rightarrow\mathbb{R}$ such that for all $i\in\mathcal{N}$ and for all $y_i,y_i'\in\mathcal{Y}$, it holds 
\begin{equation}\label{eqpt}
U_i(y_i,y_{-i})-U_i(y_i',y_{-i})=\Phi(y_i,y_{-i})-\Phi(y_i',y_{-i}).
\end{equation}
    
\end{definition}
The change in potentials from any player's unilateral deviation matches the change in their payoffs. 
Nash equilibria of the simultaneous move game with the pay-offs $(U_i(y_i,y_{-i}): i=1, \dots, N)$ must be the local maximizers of the potential. 
\citet[Theorem 4.5]{monderer1996potential} states that 
\begin{equation}\label{Eq:symmetric_utility}
\frac{\partial U_i}{\partial y_i\partial y_j}=\frac{\partial U_j}{\partial y_j\partial y_i}
\end{equation}
is a necessary and sufficient condition for a game featuring a twice continuously differentiable utility function to be a potential game.
For the discrete outcome case, a condition\footnote{%
Replacing the second-order derivative in Eq.\ref{Eq:symmetric_utility} with second-order differences. 
See \citet[\S Corollary 2.9]{monderer1996potential} for further details.} that is analogous to Eq.\ref{Eq:symmetric_utility}, is a necessary and sufficient condition for the existence of a potential function. 
\citet{chandrasekhar2014tractable}, and \citet{mele2017structural} also use a potential game framework to analyze Nash equilibria in a network game.  
This condition, however, requires that $\beta_{ij}=\beta_{ji}$ for all $i \neq j\in\mathcal{N}$, \footnote{
For a potential function to exist, after eliminating zero terms, we require that $U_i(1,0,y_{-ij})-U_i(1,1,y_{-ij})+U_j(1,1,y_{-ij})-U_j(0,1,y_{-ij})=0$. 
This implies that $-\beta_{ij}+\beta_{ji}=0$.} 
which restricts the spillover effect of unit $i$'s choice on unit $j$. 
By assuming our game is a potential game, we ensure the existence of at least one pure strategy Nash equilibrium \citep{monderer1996potential}. The next proposition shows a potential function for the utility function specified in Eq. \ref{eq:uti}.

\begin{proposition}{(\textbf{Potential Function})}\label{propoten}
Under Assumption \ref{assumpundir}, a potential function $\Phi(y,\mathcal{X},D,G;\boldsymbol{\theta})$ for $U_i(y,\mathcal{X},D,G;\boldsymbol{\theta})$ specified in Eq.\ref{eq:uti} can be defined as:
\begin{equation}\label{eqpotentialgame}
\begin{split}
\Phi(y,\mathcal{X},D,G;\boldsymbol{\theta})&=\sum_{i=1}^N\alpha_iy_i+\frac{1}{2}\sum_{i=1}^N \sum_{j\in\mathcal{N}_i} \beta_{ij} y_iy_j.
        \end{split}
    \end{equation} 
\end{proposition}
Proof of Proposition \ref{propoten} is provided in Appendix \ref{appenlemma1}.
Notice that the potential function is not the summation of the utility function across all units;
summation of the utility function counts the interaction terms twice and violates Eq.\ref{eqpt}. As we show in Theorem \ref{theorem1} below, the likelihood depends on the structural parameters only through the potential function, implying that $\gamma_{ij}$ cannot be identified by the data. As far as identification of the other parameters $(\alpha_i, \beta_{ij})$ is concerned, we can normalize $\gamma_{ij} = 0$ without loss of generality, while $\gamma_{ij} = 0$ is not an innocuous normalization if one is interested in counterfactual analysis on agents' utilities. This is because in the presence of the spillovers through actions, utility level of $i$'s baseline choice $y_i = 0$ depends on the neighbors' actions and setting $\gamma_{ij} = 0$ shuts down this channel of spillovers.

By characterizing our game as a potential game, we can employ the stationary outcome distribution that we derived in Theorem \ref{theorem1} to evaluate the planner's expected welfare.

\subsubsection*{\textbf{Running example: Information Injection on Microfinance \citep{banerjee2013diffusion}}}  To guarantee the existence of a potential function, our analysis imposes the structure of bilateral spillovers and their symmetries. The bilateral structure of the spillovers is a common specification in the peer effect model, e.g., \cite{nakajima2007measuring}. The symmetry of the magnitude of the spillovers of household $i$, $\beta_{ij} = \beta_{ji}$, can restrict heterogeneity of units. For instance, in the context of microfinance program with the source of spillovers being consumption externality through comparisons with friends, the symmetry assumption holds if friends share unobserved heterogeneity such as prone to jealousy.

\subsection{Sequential Decision Process}\label{sectionsdp}

We now introduce sequential games and corresponding stationary outcome distributions. Let $t \in \{ 1,2, \dots \}$ index a period and $Y_i^t\in\mathcal{Y}$ be unit $i$'s choice (outcome) made at time $t$. 
Let $Y^t = (Y_1^t,...,Y_N^t ) \in \mathcal{Y}^N$ and we view $\{ Y^t: t = 1,2,\dots \}$ be a stochastic process with its realization denoted by $\{ y^t = (y_i^t)_{i=1}^N : t=1,2, \dots \}$. 
We denote the outcome vector excluding $y^t_i$ by $y_{-i}^t$.

In the initial period $t=0$, the social planner observes the connections in the social network and individuals' attributes, and decides the treatment allocation $D$. 
Then, at the beginning of every period $t=1,2,\dots$, an individual $i$ is randomly chosen from $\mathcal{N}$ by nature.
The chosen unit $i$ can set own action (outcome) $y_i^t$ given the action profiles of the others in the previous period $y_{-i}^{t-1}$. 
All the other units maintain the same choices as in the last period. 

Let $O^t \in \mathcal{N}$ indicate a unit chosen by nature in period $t$. The next assumption restricts the dependence of $\{O^t: t=1,2,\dots\}$ on other variables in the model.  


\begin{assumption}{(\textbf{Decision Process})}\label{assdp}
The probability of unit $i$ being selected at time $t$ given $(y^{t-1},\mathcal{X},D,G)$ is positive and does not depend upon $y_{i}^{t-1}$ for all $i \in \mathcal{N}$ and $t = 1,2,\dots$, i.e.,
    \begin{equation}
        \rho_i(y_{-i}^{t-1}) := \Pr(O^t=i\vert y^{t-1},\mathcal{X},D,G) = \Pr(O^t=i\vert y_{-i}^{t-1},\mathcal{X},D,G) >0.
    \end{equation}
\end{assumption}

We assume that the selected unit $i$ in period $t$ chooses own action $y_i^t$ so as to maximize their current utility myopically, observing the attributes, treatment status, and the actions of their neighbors.
Before doing so, however, the selected unit $i$ receives idiosyncratic shocks $(\varepsilon_{1it}, \varepsilon_{0it})$. 
Unit $i$ then chooses $Y_i^t=1$ if and only if:
\begin{equation}\label{eq:us}
    U_i(1,y_{-i}^{t-1},\mathcal{X},D,G;\boldsymbol{\theta})+\varepsilon_{1it}\geq U_i(0,y_{-i}^{t-1},\mathcal{X},D,G;\boldsymbol{\theta})+\varepsilon_{0it}.
\end{equation}
The next assumption specifies the distribution of the idiosyncratic shocks.
\begin{assumption}{(\textbf{Utility Shocks})}\label{assps}
    $\varepsilon_{1it}$ and $\varepsilon_{0it}$ follow the Type 1 extreme value distribution and are independent and identically distributed among units and across time.
\end{assumption}

Note that the Nash equilibrium of the potential game in Section \ref{sec:poten} is defined with the utilities \textit{without} the utility shocks. Hence, the shocks $(\varepsilon_{1it},\varepsilon_{0it})$ added in the sequential setting should be interpreted as optimization errors of the selected individual, rather than her unobserved characteristics.

Under Assumption \ref{assps}, the conditional probability of unit $i$ choosing $Y_i^t=1$ is:
\begin{equation}
    P(Y_{i}^t=1\vert Y_{-i}^{t-1}=y_{-i}^{t-1},\mathcal{X},D,G;\boldsymbol{\theta})=\frac{\exp[U_i(1,y_{-i}^{t-1},\mathcal{X},D,G;\boldsymbol{\theta})]}{\sum_{y_i\in\{0,1\}}\exp[U_i(y_i,y_{-i}^{t-1},\mathcal{X},D,G;\boldsymbol{\theta})]}.
\end{equation}
Given initial value $y^0$, sequence $[Y^1,...,Y^t, \dots]$ evolves as a Markov chain such that:
\begin{equation}
            Y^t_i|(Y^{t-1}=y^{t-1}) = \begin{cases} y^{t-1}_i & \text{w/p} \mspace{10mu}  1-\rho_i(y_{-i}^{t-1}) \\
            y & \text{w/p} \mspace{10mu} \rho_i(y_{-i}^{t-1}) \cdot P(Y_{i}^t=y\vert Y_{-i}^{t-1}=y_{-i}^{t-1}),
            \end{cases} 
             \quad \forall i\in\mathcal{N},
\end{equation}
for $y\in\{0,1\}$.
Under Assumptions \ref{assumpundir} to \ref{assps}, this Markov chain is \textit{irreducible} and \textit{aperiodic},\footnote{%
It is irreducible since every configuration could happen in a finite time given our assumption on the selection process. 
It is aperiodic since the selected unit has a positive probability to choose the same choice as in the last period.} 
which has a unique stationary distribution. 

The individual decision process is a stochastic best response dynamic process \citep{blume1993statistical}, which evolves as a Markov chain of decisions. 
\citet{jackson2002evolution} shows that the sequential decision process plays the role of a stochastic equilibrium selection mechanism in the static potential game of Section \ref{sec:poten}. 
\citet{lee2009multiple} performs counterfactual predictions of policy interventions in the presence of multiple equilibria, with best response dynamics playing the role of an equilibrium selection mechanism.

\subsection{Stationary Distribution}
Given the Markov chain structure of $\{ Y^t \}$, the stationary joint distribution of the outcomes in our sequential decision game is given by the next theorem. We denote the random variable following the stationary oucome distribution by $Y \in \mathcal{Y}$.

\begin{theorem}{\textbf{Unique Stationary Distribution} \citep{nakajima2007measuring,mele2017structural}:}\label{theorem1} Under Assumption \ref{assumpundir} to \ref{assps}, the outcomes of the sequential game $\{Y^t \}$ has a unique stationary distribution:
    \begin{equation}\label{eqjointdis}
             P[Y=y\vert \mathcal{X},D,G;\boldsymbol{\theta}]=\frac{\exp[\Phi(y,\mathcal{X},D,G;\boldsymbol{\theta})]}{\sum_{\delta\in\{0,1\}^N} \exp[\Phi(\delta,\mathcal{X},D,G;\boldsymbol{\theta})]}.
         \end{equation}
    \end{theorem}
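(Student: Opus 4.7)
The plan is to verify that the candidate Gibbs distribution $\pi(y)\propto\exp[\Phi(y,\mathcal{X},D,G;\boldsymbol{\theta})]$ is stationary by checking the detailed balance (reversibility) equations for the induced Markov chain on $\{0,1\}^N$, and then to invoke irreducibility and aperiodicity (already flagged in the footnote) to upgrade stationarity to uniqueness. The substantive engine is the potential game identity from Definition \ref{defpg}, which will let me convert logit choice probabilities written in terms of $U_i$ into the same logit form written in terms of $\Phi$.

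First, I would write the one-step transition kernel explicitly. For two states $y,y'\in\{0,1\}^N$ that agree outside a single coordinate $i$ (write $y=(y_i,y_{-i})$ and $y'=(y_i',y_{-i})$),
\begin{equation*}
K(y,y') \;=\; \rho_i(y_{-i})\cdot P\bigl(Y_i^t=y_i'\,\big|\,Y_{-i}^{t-1}=y_{-i}\bigr),
\end{equation*}
with $K(y,y')=0$ if $y$ and $y'$ differ in more than one coordinate, and $K(y,y)$ given by the complementary mass. Crucially, by Assumption \ref{assdp}, $\rho_i(y_{-i})$ depends only on $y_{-i}$, not on $y_i$, so $\rho_i$ is symmetric in the swap $y_i\leftrightarrow y_i'$.

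Second, I would use the potential-game identity
\begin{equation*}
U_i(y_i,y_{-i})-U_i(y_i',y_{-i})=\Phi(y_i,y_{-i})-\Phi(y_i',y_{-i})
\end{equation*}
to rewrite the Type-I extreme-value conditional choice probability under Assumption \ref{assps} as
\begin{equation*}
P\bigl(Y_i^t=y_i'\,\big|\,y_{-i}\bigr)=\frac{\exp[U_i(y_i',y_{-i})]}{\sum_{z\in\{0,1\}}\exp[U_i(z,y_{-i})]}=\frac{\exp[\Phi(y_i',y_{-i})]}{\sum_{z\in\{0,1\}}\exp[\Phi(z,y_{-i})]},
\end{equation*}
since adding a common constant (depending only on $y_{-i}$) to both numerator and denominator leaves the ratio unchanged. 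Plugging this into $K(y,y')$ and multiplying by $\pi(y)\propto\exp[\Phi(y_i,y_{-i})]$, a short cancellation shows
\begin{equation*}
\pi(y)\,K(y,y') \;=\; \frac{\rho_i(y_{-i})\,\exp[\Phi(y_i,y_{-i})]\,\exp[\Phi(y_i',y_{-i})]}{Z\cdot\sum_{z\in\{0,1\}}\exp[\Phi(z,y_{-i})]},
\end{equation*}
which is manifestly symmetric in $(y_i,y_i')$. Hence $\pi(y)K(y,y')=\pi(y')K(y',y)$, i.e.\ $\pi$ satisfies detailed balance and is therefore invariant. For states differing in more than one coordinate the detailed balance equation holds trivially since both sides are zero.

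Finally, I would close by appealing to irreducibility and aperiodicity. Irreducibility follows because Assumption \ref{assdp} gives $\rho_i(y_{-i})>0$ for every $i$ and every $y_{-i}$, and the logit probabilities under Assumption \ref{assps} are strictly positive, so any configuration is reachable from any other in at most $N$ steps. Aperiodicity follows because with positive probability the currently selected unit re-plays its previous action, giving $K(y,y)>0$. Standard Markov chain theory on a finite state space then yields a unique stationary distribution, which must coincide with the $\pi$ verified above, giving exactly \eqref{eqjointdis}. The only genuinely delicate point, and the one I would be most careful to spell out, is the reversibility calculation: it depends simultaneously on the potential-game identity (to convert $U_i$-differences into $\Phi$-differences) and on Assumption \ref{assdp} (so that $\rho_i$ does not introduce an asymmetry between $y_i$ and $y_i'$); dropping either ingredient would break the argument.
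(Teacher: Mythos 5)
Your proof is correct: the detailed-balance computation $\pi(y)K(y,y')=\pi(y')K(y',y)$, relying on the potential-game identity to replace $U_i$-differences by $\Phi$-differences and on Assumption \ref{assdp} so that $\rho_i(y_{-i})$ is common to both sides, together with irreducibility and aperiodicity on the finite state space, is exactly the standard argument. The paper does not reprove this result but defers to \citet{nakajima2007measuring} and \citet[Theorem 1]{mele2017structural}, whose proof proceeds by the same reversibility argument you give, so your approach coincides with the one the paper implicitly relies on.
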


Theorem \ref{theorem1} shows that, given the parametric specification of the distribution of shocks (Assumption \ref{assps}), the joint distribution of the outcomes is given by a Gibbs distribution characterized by the potentials.\footnote{\citet{hsieh2020specification} obtains the same Gibbs distribution by introducing configuration specific shocks.\label{foothsieh}} The set of Nash equilibria of the potential game corresponds to the set of modes of the stationary distribution. 

We can view the joint distribution of outcomes in the stationary distribution as a Markov random field (see, e.g., \citealp{bremaud2013markov}). 
The random field $\{Y_i\}_{i=1}^N$ is a collection of random variables on the state space $\mathcal{Y}$. 
This random field is a Markov random field if for all $i\in\mathcal{N}$ and $y\in\mathcal{Y}^N$:
\begin{equation}\label{eq:mrfcon}
    P(Y_i=y_i\vert Y_{-i}=y_{-i})=P(Y_i=y_i\vert Y_{j\in \mathcal{N}_i}=y_{j\in \mathcal{N}_i}),
\end{equation}
where $Y_{j \in \mathcal{N}_i}$ is the subvector of $Y$ restricted to $i$'s neighbors.

Given the specification of our utility function, the conditional distribution of $Y_i$ satisfies this Markov property. 
By connecting $Y$ to MRF, the \textit{Hammersley-Clifford Theorem} \citep{clifford1971markov, besag1974spatial} establishes that the joint distribution of $Y$ must follow a \textit{Gibbs distribution}, which is consistent with the result of Theorem \ref{theorem1}.

The stationary distribution of the outcomes shown in Theorem \ref{theorem1} is structural in the sense that the specification of the potential function in the Gibbs distribution relies on the functional form specification of the latent payoff function of agents. 
An advantage of the current structural approach is that we are transparent about the assumptions that we impose on the behavior of agents, on the structure of social interaction, and on the equilibrium concept. 
The structural approach, accordingly, disciplines the class of joint distributions of observed outcomes to be analyzed. 
As an alternative to the structural approach, we can consider a reduced-form approach where we model the conditional distribution of the observed outcomes given the treatment vector. 
Maintaining the family of Gibbs distributions, the reduced-form approach corresponds to introducing a more flexible functional form for the potential functions without guaranteeing that it is supported as a Nash equilibrium of the potential game. 

\subsubsection*{\textbf{Running example: Information Injection on Microfinance \citep{banerjee2013diffusion}}} The structure of sequential game introduced above can model the communication process among the villagers, e.g., in each short time window $t$,  a villager can revise her/his decision of taking up the treatment by knowing the decisions of the friends. 
After many periods of interactions, the framework leads to the joint distribution of households' take-up decisions as presented in Theorem \ref{theorem1}. 
In practical terms, we view a snapshot of the take-up status $Y$ in a given period after treatment assignment as a draw from the stationary Gibbs distribution of dimension $N$. The sequential structure is a theoretical device to obtain the stationary outcome distribution, and we do not require the training sample for estimating the structural parameters to be panel data with observations of $\{ Y^t \}$.

\section{Treatment Allocation}\label{sectionall}
The objective of the social planner is to select a treatment assignment $D^* \in \{0, 1 \}^N$ that maximizes equilibrium mean outcomes subject to a capacity constraint that the number of individuals that are treated cannot exceed $\kappa > 0$:

\begin{equation}\label{eqobj}
    D^*\in\argmax_{D\in\{0,1\}^N}\sum_{i=1}^N\mathbb{E}_P[Y_i\vert \mathcal{X},D,G;\boldsymbol{\theta}],
    \end{equation}
    
\begin{equation}
    s.t.\quad\sum_{i=1}^N d_i\leq \kappa,
\end{equation}
where $\mathbb{E}_P[Y_i\vert \mathcal{X},D,G;\boldsymbol{\theta}]$ is the expectation with respect to the stationary joint distribution shown in Theorem \ref{theorem1}. 


In this work, we assume that the structural parameters
$\boldsymbol{\theta}$ underlying the potential game are given and abstract from uncertainty in parameters estimation. 
There have been several estimation approaches studied in the literature which utilize MCMC \citep{geyer1992constrained,snijders2002markov}, pseudo-likelihood \citep{besag1974spatial,boucher2017my}, and variational approximations \citep{wainwright2008graphical}. 

Our proposal fits to the following scenario of empirical policy design. First, a planner estimates the structural parameters using a training sample of networks that have implemented the treatments. Next, using our proposed method, the planner informs optimal allocations of treatments to the networks which have not implemented the treatments, assuming that the structural parameters are invariant among them.

\subsection{Welfare Approximation}
We cannot directly maximize the equilibrium welfare in $D$;
instead, we seek to maximize the approximated welfare. 
We now discuss what prevents us from maximizing the equilibrium welfare.

The objective function $W(D)$ from Eq.\ref{eqobj} is:
\begin{equation}\label{eqz}
    \begin{split}
        W(D)&=\sum_{i=1}^N\mathbb{E}_P[Y_i\vert \mathcal{X},D,G]=\sum_{i=1}^N \sum_{y\in\{0,1\}^N}y_i\frac{\exp[\Phi(y,\mathcal{X},D,G;\boldsymbol{\theta})]}{\sum_{\delta\in\{0,1\}^N} \exp[\Phi(\delta,\mathcal{X},D,G;\boldsymbol{\theta})]}.
    \end{split}
\end{equation}
We define the \textit{partition function} as $\mathcal{Z}\coloneqq \sum_{\delta\in\{0,1\}^N} \exp[\Phi(\delta,\mathcal{X},D,G;\boldsymbol{\theta})]$.
Since the partition function $\mathcal{Z}$ sums all possible configurations (of which there are $2^N$), it is infeasible to evaluate the expectation when $N$ is moderate to large, e.g., when $N> 276$, there are more configurations than atoms in the observable universe \citep{de2020econometric}.

Given this well-known problem, we seek to approximate the distribution $P$ using a tractable distribution $Q$.
Let $\mu_i^P\coloneqq\mathbb{E}_{P}[Y_i\vert \mathcal{X},D,G]$ and $\mu_i^Q\coloneqq\mathbb{E}_{Q}[Y_i\vert \mathcal{X},D,G]$. Since we have:
\begin{equation} \label{eq:Pinsker}
        \begin{split}
            W(D)&= \sum_{i=1}^N \mu_i^{P}\leq \sum_{i=1}^N \vert\mu_i^{P}-\mu_i^{Q}\vert+\sum_{i=1}^N\mu_i^{Q},
        \end{split}
    \end{equation}  
The approximation error can be bounded by:
\begin{equation}\label{equpp}
    \sum_{i=1}^N \mu_i^{P}-\sum_{i=1}^N \mu_i^{Q}\leq\sum_{i=1}^N \vert\mu_i^{P}-\mu_i^{Q}\vert.
\end{equation}
In what follows, we bound the approximation error in Eq.~\ref{equpp} first by the Wasserstein distance (Lemma \ref{lemma:eb}) and next by the Kullback–Leibler (KL) divergence $\mathbb{KL}(Q\Vert P)$ (Theorem \ref{thm:klthm}). Specifically, we consider the Wasserstein 1-distance based on Hamming distance, as defined below

\begin{definition}[Wasserstein
distance $W_1$]
    Let $P$ and $Q$ be two probability distributions over $\{0,1\}^N$.  
Define $\Omega(P,Q)$ as the set of all couplings of $P$ and $Q$, i.e., all joint distributions of $(Y, Y')$ such that $Y \sim P$ and $Y' \sim Q$. Let the Hamming distance be
$d_{\mathrm{H}}(Y, Y') \;=\; \sum_{i=1}^N \mathbbm{1}\{Y_i \neq Y_i'\}.$
Then the Wasserstein 1-distance equipped with Hamming distance is given by
$$
W_1(P,Q)
\;=\;
\inf_{\omega \in \Omega(P,Q)}
\mathbb{E}_\omega\bigl[d_{\mathrm{H}}(Y, Y')\bigr].$$
\end{definition}
\begin{lemma}[\textbf{Error Bound}]\label{lemma:eb}
    Let $P$ and $Q$ be two probability distributions over $\{0,1\}^N$. We have:
    \begin{equation}
     \sum_{i=1}^N \vert\mu_i^{P}-\mu_i^{Q}\vert\leq W_1(P,Q).
    \end{equation}
\end{lemma} 
Proof of Lemma \ref{lemma:eb} is provided in Appendix \ref{lemmaproof:eb}. To obtain an upper bound of the Wasserstein 1-distance in terms of $\mathbb{KL}(Q\Vert P)$, we develop a version of Talagrand's transportation inequality for Gibbs distributions.\footnote{A distribution $P$ on $\mathbb{R}^N$ satisfies Talagrand's transportation inequality with constant $C>0$ if for any probability measure 
$Q$ absolutely continuous w.r.t. $P$, we have 
\refstepcounter{equation}\label{eq:talagrand} $$W_2(P,Q)\leq C\sqrt{\mathbb{KL}(Q\|P)}, \hspace{1cm}(\theequation)$$ where $W_2(P,Q)= \inf_{\omega \in \Omega(P,Q)}
\mathbb{E}_\omega\bigl[d^2_{\mathrm{H}}(Y, Y')\bigr]$
. Since $d_H(Y,Y')\leq d_H^2(Y,Y')$ holds for any $Y,Y' \in \{0,1\}^N$, Talagrand's transportation inequality 
implies $W_1(P,Q)\leq W_2(P,Q)\leq C\sqrt{\mathbb{KL}(Q\Vert P)}$ for any $Q$ absolutely continuous w.r.t. $P$.}
Availability of Talagrand's inequality hinges on the following assumption called as Dobrushin’s condition \citep{dobrushin1970prescribing}, which restricts the magnitude of the spillover effects. This Dobrushin’s condition also appears in the incomplete information network games to guarantee the uniqueness of the pure strategy Bayesian Nash equilibrium \citep{lee2014binary}.\label{footdobur}
\vspace{-0.2cm}
\begin{assumption}{\textbf{Limited Interactions}:}\label{ass:boundspill} The coefficients of strategic interaction satisfy
    \begin{equation}
        \max_{i=1,...,N}\sum_{j\in\mathcal{N}_i}\beta_{ij} < 4.
    \end{equation}
\end{assumption}
With the utility specification of Eq.\ref{eq:alpha} and \ref{eq:beta}, this condition can be rewritten as:
\begin{equation}
        \vert\theta_5\vert+\vert\theta_6\vert < 4\big(A_N\max_{i=1,...,N}\sum_{j=1}^Nm_{ij}G_{ij}\big)^{-1}. 
    \end{equation}With Dobrushin's condition added, we obtain the following key lemma:

\begin{lemma}{\textbf{$W_1$ Transportation Inequality for Binary Gibbs distributions}}\label{pro:tala}:
    Under Assumption \ref{assumpundir} to \ref{ass:boundspill}, there exists a universal constant $C_{trans}>0$ such that the stationary distribution $P$, defined in Eq.\ref{eqjointdis}, satisfies:
    \begin{equation} \label{eq:W1_transportation}
        W_1(P,Q)\leq C_{trans}\sqrt{\mathbb{KL}(Q\Vert P)},
    \end{equation}
    for all probability measures $Q$ on $\{0,1\}^N$.
\end{lemma}
Proof of Lemma \ref{pro:tala} is provided in Appendix \ref{lemmaproof:tI}.\footnote{\label{footnote:univ constant} The proof shows existence of a universal constant $C_{trans} > 0$ while it does not provide an explicit formula for the constant. We leave a determination of a universal constant for future research.} To our knowledge, the trasportation inequality in this lemma is new and can be of independent interest. Combining Lemmas \ref{lemma:eb} and \ref{pro:tala}, we have the following theorem that bounds the welfare approximation error by the KL divergence.
\begin{theorem}\label{thm:klthm}
    Under Assumption \ref{assumpundir} to \ref{ass:boundspill}, there is a universal constant $C_{trans}$ such that the stationary distribution $P$, defined in Eq.\ref{eqjointdis}, satisfies:
    \begin{equation}
\sum_{i=1}^N \vert\mu_i^{P}-\mu_i^{Q}\vert \leq C_{trans}\sqrt{\mathbb{KL}(Q\Vert P)}.
\end{equation}
for all probability measures $Q$ on $\{0,1\}^N$.
\end{theorem}
This novel transportation
inequality links the approximation error of the welfare to the Kullback–Leibler divergence. A similar inequality holds for a class of social welfare functions
as long as their approximation errors can be bounded by the Wasserstein distance.
Given such bound, in the next section, we obtain a best approximation of $P$ by $Q$ by minimizing $\mathbb{KL}(Q\Vert P)$ with respect to $Q$ subject to that $Q$ belongs to a simple class of distributions of $Y$. 
The reason that we focus on minimizing the KL divergence rather than the Wasserstein distance is because minimizing the KL divergence is computationally more tractable when $P$ is a Gibbs distribution. This computational advantage is well known and exploited in the literature of variational approximation for exponential random graphs \citep{wainwright2008graphical}; see Section~\ref{sectionmfm} below.

\begin{remark}\label{remark:31}
We may want to target maximizing the expected utilitarian welfare (i.e., the summation of individual utilities) when choosing the optimal treatment allocation. In which case, the objective function becomes
\begin{align}
        W_U(D) &= \sum_{i=1}^N \mathbb{E}_P[U_i(y,\mathcal{X},D,G;\boldsymbol{\theta})\vert \mathcal{X},D,G] \\ &=\sum_{i=1}^N \left[\alpha_i + \sum_{j \neq i}^N \beta_{ij} \mathbb{E}_P[y_i y_j\vert \mathcal{X},D,G] + \sum_{j \neq i}^N \gamma_{ij} \mathbb{E}_P[y_j\vert \mathcal{X},D,G] \right] 
\end{align}
As discussed in Section \ref{sec:poten}, however, when agent $i$'s utility level with $y_i=0$ depends on others' actions, i.e., $\gamma_{ij} \neq 0$ in Eq. \ref{eq:uti}, we cannot identify the utility level so that maximizing $W_U(D)$ empirically is not feasible. Note also that setting $\gamma_{ij} = 0$ is not an innocuous normalization since it alters the objective function for $D$. This contrasts with the welfare criterion $W(D)$ of (Eq. \ref{eqobj}) that does not depend on $\gamma_{ij}$.
\end{remark}

\subsection{Mean Field Method}\label{sectionmfm}
Using an independent Bernoulli distribution to approximate the target distribution is called \textit{naive mean field approximation} \citep{wainwright2008graphical}. 
This method can be viewed as a specific method in the general approach of \textit{variational approximation}, which approximates a complicated probability distribution by a distribution belonging to a class of analytically tractable parametric distributions.
In Eq.\ref{eq:Pinsker}, $P$ corresponds to the target distribution to be approximated and $Q$ corresponds to a simple parametric distribution approximating $P$.
We consider the class of independent Bernoulli distributions as a parametric family for $Q$, since it delivers a feasible and fast optimization algorithm and the magnitude of its approximation error is already established in the literature. 

The probability mass function of an independent Bernoulli distribution $Q$ is expressed as:

\begin{equation}
    Q(Y=y)=\prod_{i=1}^N (\mu_i^Q)^{y_i}(1-\mu_i^Q)^{1-y_i}.
\end{equation}

Let $\mu^Q$ be an $N\times 1$ vector that collects $\{\mu_i^Q\}_{i=1}^N$. 
The Kullback–Leibler divergence between $Q$ and $P$ equals:

\begin{equation}
       \begin{split}
            \mathbb{KL}(Q\Vert P) &= \mathbb{E}_Q\Big[\log\frac{Q(y)}{P(y)}\Big]\\
            &=\log \mathcal{Z}-\Big[\alpha'\mu^Q+\frac{1}{2}(\mu^{Q})'\mathcal{B}\mu^Q-\sum_{i=1}^N\big[\mu_i^Q\log(\mu_i^Q)+(1-\mu_i^Q)\log(1-\mu_i^Q)\big]\Big],
       \end{split}
    \end{equation}
    
where $\alpha\coloneqq(
\alpha_i)_{i\in\mathcal{N}}'$ is a $N\times 1$ weighting vector and $\mathcal{B}\coloneqq(\beta_{ij})_{i,j\in\mathcal{N}}$ is a $N\times N$ matrix. The last line holds since the diagonal entries of $\mathcal{B}$ are zero and 
\begin{equation}
    \begin{split}
        \mathbb{E}_Q[y'\mathcal{B} y]&=\mathbb{E}_Q\Big[\sum_{i=1}^N\sum_{j\neq i}^N\beta_{ij}y_iy_j\Big]=\sum_{i=1}^N\sum_{j\neq i}^N\beta_{ij}\mathbb{E}_Q[y_iy_j]=\sum_{i=1}^N \sum_{j\neq i}^N \beta_{ij}\mathbb{E}_Q[y_i]\mathbb{E}_Q[y_j]\\&=(\mu^{Q})'\mathcal{B}\mu^Q.
    \end{split}
\end{equation}
Recall $\mathcal{Z}$ in Eq.\ref{eqz} sums over all possible configurations. 
$\mathcal{Z}$ is, therefore, independent of $Y$ (i.e., it is constant). 
We define $\mathcal{A}(\mu^Q,\mathcal{X},D,G)$ as:
\begin{equation}\label{eq:A}
    \mathcal{A}(\mu^Q,\mathcal{X},D,G) \coloneqq \alpha'\mu^Q+\frac{1}{2}(\mu^{Q})'\mathcal{B}\mu^Q-\sum_{i=1}^N\big[\mu_i^Q\log(\mu_i^Q)+(1-\mu_i^Q)\log(1-\mu_i^Q)\big].
\end{equation}
As such, minimizing $\mathbb{KL}(Q\Vert P)$ is equivalent to maximizing $\mathcal{A}(\mu^Q,\mathcal{X},D,G)$ \citep{wainwright2008graphical}. 
We denote by $\Tilde{\mu}$ the result of the following optimization:
\begin{equation}\label{eqva}
    \begin{split}
        \Tilde{\mu}&=\argsup_{\mu^Q}\mathcal{A}(\mu^Q,\mathcal{X},D,G)\\&= \argsup_{\mu^Q}\alpha'\mu^Q+\frac{1}{2}(\mu^Q)'\mathcal{B}\mu^Q-\sum_{i=1}^N\big[\mu_i^Q\log(\mu_i^Q)+(1-\mu_i^Q)\log(1-\mu_i^Q)\big].
    \end{split}
\end{equation}
Then the approximated distribution $Q^*$ is expressed as:
\begin{equation}
    Q^*(Y=y)=\prod_{i=1}^N (\Tilde{\mu}_i)^{y_i}(1-\Tilde{\mu}_i)^{1-y_i}.
\end{equation}
The first order condition of Eq.\ref{eqva} is:
\small
\begin{equation}\label{eqmui}
\begin{split}
        \Tilde{\mu}_i &= \frac{1}{1+\exp[-(\alpha_i+\sum\limits_{j\in\mathcal{N}_i}\beta_{ij}\Tilde{\mu}_j)]}= \Lambda \Big[\alpha_i+\sum\limits_{j\in\mathcal{N}_i}\beta_{ij}\Tilde{\mu}_j\Big].
\end{split}
\end{equation}
\normalsize
Despite the fact that our approach speaks to a general specification of the utility function, we use the specification of Eq.\ref{eq:alpha} and \ref{eq:beta} to derive the theoretical results in the following sections for illustrative purposes. Although the objective function is non-concave, optimization of (Eq.\ref{eqva}) can make use of an iterative search of a fixed point, as proposed in the literature \citep{wainwright2008graphical}. We summarize it in the next algorithm:

\medskip

\refstepcounter{algorithm}\label{almu}
ALGORITHM 1:
\begin{enumerate}
\item[] \hspace{-0.5cm}Step 1. Initialize the iteration. For each unit $i\in\mathcal{N}$, draw an initial value $\tilde{\mu}_i^0 \sim U[0,1],$
and collect these draws into the vector $\tilde{\mu}^0=(\tilde{\mu}_1^0,\dots,\tilde{\mu}_N^0)'$. Set $t=1$.
\item[] \hspace{-0.5cm}Step 2. Given the previous iterate $\tilde{\mu}^{t-1}$, update the components of $\tilde{\mu}^t$ sequentially, i.e., for each $i=1,\dots,N$, compute $\tilde{\mu}_i^t \gets \Lambda\left(\alpha_i+\sum_{j\in\mathcal{N}_i}\beta_{ij}\tilde{\mu}_j^{\,t-1}\right).$
\item[] \hspace{-0.5cm}Step 3. After all units have been updated, evaluate the convergence criterion by computing the change in the objective function, $\Delta^t := \mathcal{A}(\tilde{\mu}^t,\mathcal{X},D,G)-\mathcal{A}(\tilde{\mu}^{t-1},\mathcal{X},D,G)$. Given a tolerance level $\varrho$, iterate Step 2 while $\Delta^t > \varrho$. If $\Delta^t \le \varrho$, then terminate the iteration and define the fixed-point as $\tilde{\mu}\gets \tilde{\mu}^t.$
\end{enumerate}
\medskip

Algorithm \ref{almu} shows an iterative procedure to compute variationally approximated mean value $\tilde{\mu}$. This algorithm performs $\mathcal{O}(N)$ operations in every iteration of Step 2 and, as shown in Proposition \ref{prounimax} below, this procedure is guaranteed to converge to the global optimum.  
As an alternative to variational approximation, MCMC can be used to simulate the mean value $\mu$ of the unique stationary distribution (Eq.\ref{eqjointdis}).
Since MCMC may require exponential time for convergence \citep{chatterjee2013estimating} though, simulating $\mu$ is infeasible for a large network (i.e., MCMC needs to be run $\mathcal{O}(\kappa N)$ times).
The next proposition shows that the optimization of (Eq.\ref{eqva}) has a unique maximizer and the iteration in  Algorithm \ref{almu} converges to it.

\begin{proposition}{\textbf{Unique Maximizer}:}\label{prounimax}
    Under Assumptions \ref{assumpundir} to \ref{ass:boundspill} and the utility function specification of Eq.\ref{eq:alpha} and \ref{eq:beta}, the optimization problem (Eq.\ref{eqva}) defining $\tilde{\mu}$ has a unique maximizer and the iteration procedure of Algorithm \ref{almu} converges to it. 
\end{proposition}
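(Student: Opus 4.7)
My plan is to show that the mean-field objective $\mathcal{A}$ admits a unique interior maximizer via a contraction argument, and then to identify Algorithm \ref{almu} as the Picard iteration of the same contraction map. First, I would observe that $\mathcal{A}$ is continuous on the compact cube $[0,1]^N$ once the entropy term is extended by continuity (using $0\log 0 = 0$), so a maximizer exists by the extreme value theorem. Because the derivative of $-\mu_i\log\mu_i - (1-\mu_i)\log(1-\mu_i)$ diverges as $\mu_i \to \{0,1\}$, any maximizer must lie in the open cube $(0,1)^N$ and hence satisfy the first-order condition in Eq.\ref{eqmui}, which takes the fixed-point form $\mu = T(\mu)$, where $T:[0,1]^N\to(0,1)^N$ is the map whose $i$-th component is the right-hand side of Eq.\ref{eqmui}.

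The crux is to show that $T$ is a strict sup-norm contraction. Differentiating, the $j$-th partial derivative of $T_i$ equals $\Lambda'(\cdot)\,A_N m_{ij}G_{ij}(\theta_5+\theta_6 d_i d_j)$. Using $\Lambda'(x)=\Lambda(x)(1-\Lambda(x))\le 1/4$, Assumption \ref{asssp} (so $m_{ij}\le \overline{m}$), and $|\mathcal{N}_i|\le \widebar{N}$, the mean value theorem yields
\[
\|T(\mu)-T(\mu')\|_\infty \;\le\; \tfrac14 A_N \widebar{N}\, \overline{m}\, (|\theta_5|+|\theta_6|)\,\|\mu-\mu'\|_\infty \;=:\; L\,\|\mu-\mu'\|_\infty.
\]
Under Assumption \ref{ass:an}, the factor $A_N\widebar{N}$ is $\mathcal{O}(1)$, and together with boundedness of the structural parameters this delivers $L<1$ in the relevant regime. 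Banach's fixed point theorem then supplies a unique fixed point $\tilde\mu$, hence a unique interior critical point of $\mathcal{A}$, hence the unique maximizer.

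For the convergence claim I would identify Algorithm \ref{almu} as the Picard iteration $\mu^{(k+1)}=T(\mu^{(k)})$ (or its Gauss--Seidel variant, updating one coordinate at a time), whereupon Banach's theorem simultaneously delivers geometric convergence of $\mu^{(k)}\to\tilde\mu$ from any initializer in $[0,1]^N$. As a cross-check, direct computation gives
\[
\nabla^2\mathcal{A}(\mu) \;=\; 2w_2 \;-\; \operatorname{diag}\!\Big(\tfrac{1}{\mu_i(1-\mu_i)}\Big),
\]
whose diagonal entries are at most $-4$ while its off-diagonal row sums are at most $A_N\widebar{N}\,\overline{m}(|\theta_5|+|\theta_6|)$; in the same regime, a Gershgorin bound gives strict negative definiteness, so $\mathcal{A}$ is strictly concave on $(0,1)^N$, reinforcing uniqueness and identifying the fixed-point iteration with coordinate ascent on a strictly concave objective.

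The main obstacle is pinning down the precise parameter regime in which $L<1$. Assumption \ref{ass:an} alone only guarantees $A_N\widebar{N}=\mathcal{O}(1)$, so the contraction step implicitly requires a smallness condition on $A_N\widebar{N}\,\overline{m}(|\theta_5|+|\theta_6|)$ (strictly less than $4$); this is the analogue of the classical high-temperature (Dobrushin-type) regime in mean-field analyses of Ising-like models. I would state this condition explicitly and verify its compatibility with the parameter range maintained throughout the paper; without it, the variational objective may possess multiple local maxima and neither uniqueness nor unambiguous convergence of Algorithm \ref{almu} can be guaranteed.
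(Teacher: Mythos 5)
Your proof takes a genuinely different route from the paper's official proof of this proposition. The paper's argument is purely variational: it notes that $\mathcal{A}$ is continuous and differentiable in the interior, computes $\partial\mathcal{A}/\partial\mu_i^Q$ and observes that the entropy contribution $-\log\mu_i^Q+\log(1-\mu_i^Q)$ drives the derivative to $+\infty$ as $\mu_i^Q\to 0$ and to $-\infty$ as $\mu_i^Q\to 1$, concludes that any global maximizer is interior and satisfies the first-order condition, and then appeals to the coordinate-ascent property of the iteration (each coordinate update exactly maximizes a strictly concave one-dimensional problem, so the objective increases at every step). Your contraction argument is instead the content of the paper's separate ``Proof of Contraction Mapping'' subsection in the appendix, where exactly your smallness condition $A_N\overline{m}(|\theta_5|+|\theta_6|)\widebar{N}\le 4$ is imposed; your Lipschitz computation via $\Lambda'\le 1/4$ and the mean value theorem reproduces that argument essentially line for line (the paper works in the $\ell_1$ norm rather than the sup norm, which changes nothing of substance), and your Gershgorin cross-check on $\nabla^2\mathcal{A}=2w_2-\operatorname{diag}\bigl(1/(\mu_i(1-\mu_i))\bigr)$ is a correct addition that does not appear in the paper.

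The reservation you raise at the end is well taken and is the substantive point of comparison. The proposition is stated under Assumptions 1--5 only, and, as you say, under those assumptions the contraction constant need not be below one, so the Banach argument does not close without the extra condition. But observe that the paper's own official proof does not close either: showing that every global maximizer is interior and satisfies the first-order condition, and that coordinate ascent monotonically increases the objective, establishes existence of an interior stationary limit point, not \emph{uniqueness} of the maximizer nor convergence of the algorithm to the \emph{global} maximum. The paper effectively concedes this in the main text, where convergence to the unique global optimum is claimed only ``if, in addition, we constrain the magnitude of the spillovers'' by $A_N\overline{m}(|\theta_5|+|\theta_6|)\widebar{N}\le 4$. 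So your honest flagging of the smallness condition is not a defect of your proposal relative to the paper; it identifies precisely the condition the paper also needs (and imposes elsewhere) for the uniqueness and global-convergence claims to be fully rigorous, and your route, with that condition made explicit, is the more complete of the two arguments.
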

Proof of Proposition \ref{prounimax} is provided in Appendix \ref{appenum}. The iteration algorithm has been used in the literature \citep{wainwright2008graphical}, while to our knowledge, the conditions for the contraction mapping property and convergence to the global optimum shown in Proposition \ref{prounimax} are new. 

The following theorem shows how the approximation error due to variational approximation (measured in terms of the Kullback-Leibler divergence) depends on the size of the network.
\vspace{-0.3cm}
\begin{theorem}{\textbf{Approximation Error Bound}:}\label{theorem2}
        Let $Q^*$ denote the independent Bernoulli distribution solving Eq.\ref{eqva}. 
        Under Assumptions \ref{assumpundir} to \ref{assps} and the utility function specification of Eq.\ref{eq:alpha} and \ref{eq:beta}, the Kullback–Leibler divergence of $Q^*$ from $P$ is bounded from above by:
        \small
        \begin{equation}\label{eqkl}
            \mathbb{KL}(Q^*\Vert P)\leq C_1A_N\widebar{N}+C_2N+\mathcal{O}\left(\sqrt{A_N^2\widebar{N}^2N}\right)+\mathcal{O}\left(\sqrt{A_N^3\widebar{N}^2N^2}\right)+o(N),
        \end{equation}
        \normalsize
        where $C_1,C_2$ are known constants that depend only upon $\boldsymbol{\theta}$ and $\overline{m}\coloneqq\max_{i,j}m_{ij}$, and $A_N$ is defined in Eq.\ref{eq:alpha}.
    \end{theorem}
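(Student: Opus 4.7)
The plan is to bound $\mathbb{KL}(Q^*\|P) = \log\mathcal{Z} - \mathcal{A}(\tilde{\mu})$ by combining the Gibbs variational principle with the nonlinear large deviations technique of \citet{chatterjee2016nonlinear}, which is tailored to bilinear Hamiltonians on the Boolean hypercube.

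The starting point is the Gibbs variational formula $\log\mathcal{Z} = \sup_{Q}\{\mathbb{E}_Q[\Phi(Y)] - \mathbb{E}_Q[\log Q(Y)]\}$, where the supremum is over all probability measures on $\{0,1\}^N$ and is attained at $P$. Restricting the supremum to the family of independent Bernoulli product measures $\{Q_\mu : \mu\in[0,1]^N\}$ recovers precisely $\mathcal{A}(\mu)$, so $\mathcal{A}(\tilde{\mu}) = \sup_\mu \mathcal{A}(\mu) \leq \log\mathcal{Z}$ and therefore $\mathbb{KL}(Q^*\|P)$ equals the non-negative mean-field gap $\log\mathcal{Z} - \sup_\mu\mathcal{A}(\mu)$, which is what must be bounded from above.

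Next, I would invoke the nonlinear large deviations estimate applied to the quadratic potential $\Phi(y) = w_1'y + y'w_2 y$. The argument controls the mean-field gap in terms of summaries of the gradient $\nabla\Phi(y) = w_1 + 2w_2 y$ together with a discretization radius that is optimized at the end. Using Assumptions \ref{assumpundir}--\ref{asssp} together with Assumption \ref{ass:an}, the relevant matrix norms are: $\|w_1\|_\infty = \mathcal{O}(1)$ (depending on $\boldsymbol{\theta}$, $\overline{m}$, and $A_N\widebar{N}$); $\|w_2\|_\infty = \mathcal{O}(A_N)$; the maximal row-sum $\max_i \sum_j |w_2^{ij}| = \mathcal{O}(A_N\widebar{N})$; and the Frobenius norm $\|w_2\|_F^2 = \mathcal{O}(A_N^2 \widebar{N} N)$, since $G$ has at most $N\widebar{N}$ nonzero entries. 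These norms feed into the four error contributions of Eq.\ref{eqkl}: the deterministic $C_1 A_N\widebar{N}$ term arises from the worst-case per-unit interaction row; the $C_2 N$ term captures the gap between the entropy of $P$ and the product-measure entropy $H(\tilde{\mu})$, which can grow linearly in the number of agents; the $\mathcal{O}(\sqrt{A_N^2\widebar{N}^2 N})$ term comes from controlling $\mathbb{E}_P[y'w_2 y] - \tilde{\mu}'w_2\tilde{\mu}$ via a Frobenius-norm bound on the covariance of $Y$ under $P$; and the $\mathcal{O}(\sqrt{A_N^3\widebar{N}^2 N^2})$ term is the cross-product term obtained by optimally balancing the covering radius against the gradient complexity in the discretization step.

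Collecting these contributions yields the stated bound, with $C_1, C_2$ written explicitly as functions of $\boldsymbol{\theta}$ and $\overline{m}$ and the remaining higher-order residuals absorbed into $o(N)$. The main obstacle will be sharp tracking of the $A_N$ and $\widebar{N}$ dependence throughout: the \citet{chatterjee2016nonlinear} bound is presented in a scale-free form, so the work is in re-running the covering/gradient-complexity argument with the explicit matrix norms above and verifying that the optimal choice of covering radius reproduces the quoted rates. A secondary technical point is the $L^2$ concentration of $y'w_2 y$ under $P$ at a scale commensurate with $\|w_2\|_F$, which is needed to justify the plug-in substitution of $\tilde{\mu}'w_2\tilde{\mu}$; this step relies on the Lipschitz structure of $\nabla\Phi$ together with the row-sum bound on $w_2$.
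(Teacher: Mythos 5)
Your proposal follows essentially the same route as the paper: both reduce the bound to the mean-field gap $\log\mathcal{Z}-\sup_{\mu}\mathcal{A}(\mu)$ and then apply the nonlinear large deviations estimate of \citet{chatterjee2016nonlinear} (their Theorem 1.6) to the quadratic potential $f(\mu)=w_1'\mu+\mu'w_2\mu$, with the substantive work being exactly the norm computations you list ($\lVert f\rVert$, $\lVert \nabla_i f\rVert$, $\lVert \nabla_i\nabla_j f\rVert$, and a covering of the gradient image) followed by an optimization over the discretization radius ($\epsilon=N^{-1}$ in the paper). One caveat on attribution: the $C_2N$ term (with $C_2=2\log 2$) does not arise from an entropy gap between $P$ and the product measure but from the log-cardinality of the covering set of boundary gradients, $\log\vert\mathcal{M}(\epsilon)\vert\leq\log\bigl(2^{2N-1}N^2(N+1)\bigr)$, built by covering each of the two quadratic pieces over the $2^N$ hypercube vertices, so when you run the covering argument you will find the linear-in-$N$ contribution there rather than in an entropy comparison.
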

This theorem follows as a corollary of \citet[\S Theorem 1.6]{chatterjee2016nonlinear}. Proof of Theorem \ref{theorem2} is provided in Appendix \ref{apptheo31}.
Theorem \ref{theorem2} shows that the upper bound on the approximation error depends not only on the size of the network $N$ but also on the magnitude of the spillover effect $A_N\widebar{N}$. 
However, by our construction, $A_{N}\bar{N}$ is set constant independent of $N$ regardless of whether the network is dense ($A_{N} = 1/N$) or sparse ($A_{N}$ is constant), Theorem \ref{theorem2} clarifies that the leading term in the approximation error bound Eq.\ref{eqkl} grows at $\mathcal{O}(N)$. 

Recall from Eq.\ref{equpp} and Eq.\ref{eqkl} that the error due to approximating the welfare at $P$ by the welfare at $Q^*$ can be bounded by $\sqrt{\mathbb{KL}(Q^*\Vert P)}\leq \mathcal{O}(N^{1/2})$. 
If our objective is to maximize $\frac{1}{N}\sum_{i=1}^N \mu_i^P$, Theorem \ref{theorem2} implies that the approximation error of $\frac{1}{N}\sum_{i=1}^N \mu_i^{Q^{\ast}}$ can be bounded from above by $\frac{1}{N}\sqrt{\mathbb{KL}(Q^*\Vert P)}\leq \mathcal{O}(N^{-1/2})$, which converges to zero as $N$ becomes large. 

\begin{remark}\label{remark:32}
We derive the result in Theorem \ref{theorem2} by assuming that the structural parameters $\boldsymbol{\theta}$ are independent of $N$.
Recently, \citet{Joseph2022} discusses the potential issue of using variational approximation when structural parameters grow with the network size $N$, which makes the $C_1$ and $C_2$ grow with $N$. Our regret bound in Theorem \ref{theoremregret} rules out this situation by assuming the Dobrushin's condition.
\end{remark}

\subsection{Implementation}\label{Section:imple}
Having shown how to approximate the average outcome of the Gibbs distribution using the mean field method, this section proposes an algorithm to optimize the approximated welfare in treatment allocation $D$.

Suppose that the set of feasible allocations is subject to a capacity constraint, $\sum_{i=1}^N d_i\leq \kappa$, where $\kappa\in\mathbb{N}_{+}$ specifies the maximum number of units that can be treated. 
We denote the set of feasible allocations by $\mathcal{D}_{\kappa} \equiv \{D\in\{0,1\}^N:\sum_{i=1}^N d_i\leq \kappa\}$, and the approximated welfare by:
\begin{equation}
   \Tilde{W}(D)= \sum_{i=1}^N \Tilde{\mu}_i.
\end{equation}
We seek to maximize the approximated welfare:
\begin{equation}\label{eqmaximization}
\begin{split}
\Tilde{D}\in\argmax_{D\in\mathcal{D}_{\kappa}}\Tilde{W}(D).
\end{split}
\end{equation}
As shown in the Eq.\ref{eqmui}, $\{\Tilde{\mu}_i\}_{i=1}^N$ is a large non-linear simultaneous equation system.
The approximated mean value $\Tilde{\mu}_i$ of each unit $i$ depends upon the approximated mean value $
\Tilde{\mu}_{j}$ and the treatment assignment $d_j$ of her neighbor, unit $j$.
Hence, the optimization problem (Eq.\ref{eqmaximization}) becomes a complicated combinatorial optimization. 

We propose a greedy algorithm (Algorithm \ref{algo}) to solve this problem heuristically.

\medskip

\refstepcounter{algorithm}
\label{algo}
ALGORITHM 2: The adjacency matrix $G$, covariates $\mathcal{X}$, parameter vector $\boldsymbol{\theta}$, and treatment capacity $\kappa$ are given. 
\begin{enumerate}
\item[] \hspace{-0.5cm}Step 1. Initialize the treatment allocation vector at $D^0=(d_1^0,\dots,d_N^0)'=0_{N\times 1},$
so that no unit is initially assigned treatment. For $t=1, \dots, \kappa$, we iterate the following steps: 
\item[] \hspace{-0.5cm}Step 2. Given $D^{t-1} = (d_1^{t-1}, \dots, d_N^{t-1})$, select one untreated unit $i$ in $D^{t-1}$, temporarily set $i$'s treatment status to $d_i\gets 1$,
and denote the resulting candidate treatment vector by $D'$. Compute the associated variationally approximated mean value $\tilde{\mu}$ by running Algorithm \ref{almu} under $D'$ and the marginal welfare gain $\Delta_i := \tilde{W}(D')-\tilde{W}(D^{t-1})$. Repeating for every unit $i$ with $d_i^{t-1}=0$, obtain $\{\Delta_i: d_i^{t-1}=0\}$.
\item[] \hspace{-0.5cm}Step 3. Select the untreated unit that yields the largest welfare improvement: $i^* \gets \arg\max_i \Delta_i.$ 
\item[] \hspace{-0.5cm}Step 4. If $\Delta_{i^*} > 0$, update the treatment allocation $D^{t-1}$ to $D^t = (d_1^t, \dots, d_N^t)$ by assigning treatment to unit $i^*$, $d_{i^*}^{t}=1$ and keep $d_{i}^{t} = d_{i}^{t-1}$ for all $i \neq i^{*}$, and return to Step 2 as long as the capacity is not exhausted ($t<\kappa$). If $\Delta_{i^*} < 0$, terminate the algorithm and output the treatment allocation as $\tilde{D} \gets D^{t-1}$. 
\item[] \hspace{-0.5cm}Step 5. If the iterations of Steps 2 - 4 reach to $t = \kappa$, terminate the algorithm and output the treatment allocation as $\tilde{D} \gets D^{\kappa}$. 
\end{enumerate}

\medskip

Algorithm \ref{algo} is a greedy algorithm that assigns treatment sequentially to the unit contributing the most to the welfare objective, repeating this process until the capacity constraint binds. Specifically, in Step 2, Algorithm \ref{algo} computes the marginal gain from assigning treatment to each untreated unit, and in Step 4, it assigns a treatment to a most influential untreated unit in terms of generating the largest welfare gain. If no unit generates a positive welfare gain, the algorithm stops without exhausing the capacity. We provide a theoretical performance guarantee for this greedy algorithm in Section \ref{sectionawm} and examine its numerical performance in Section \ref{sectionsimul}.

\subsection{Theoretical Analysis}\label{sectionawm}
In this section, we analyze the regret of the treatment allocation rule computed using our greedy algorithm. Let $D^*\in\argmax_{D\in\mathcal{D}_{\kappa}}W(D)$. Then $W(D^*)$ denotes the maximum value of $W(D)$. 
\textit{Regret} is the gap between the maximal equilibrium (oracle) welfare $W(D^*)$ and the equilibrium welfare attained at the treatment allocation rule computed using our greedy algorithm $W(D_G)$. 
We decompose regret into four terms:
 \footnotesize
    \begin{align}
            W(D^*)-W(D_G)&=\underbrace{W(D^*)-\Tilde{W}(D^*)}_{\leq  C_{trans}\sqrt{\mathbb{KL}(Q^*\Vert P)}}+\underbrace{\Tilde{W}(D^*)-\Tilde{W}(\Tilde{D})}_{\leq 0}+\underbrace{\Tilde{W}(\Tilde{D})-\Tilde{W}(D_G)}_{\text{Regret from greedy}}+\underbrace{\Tilde{W}(D_G)-W(D_G)}_{\leq  C_{trans}\sqrt{\mathbb{KL}(Q^*\Vert P)}} \notag \\
            &\leq  2C_{trans}\sqrt{\mathbb{KL}(Q^*\Vert P)}+\Tilde{W}(\Tilde{D})-\Tilde{W}(D_G). \label{eqregret}
        \end{align}
\normalsize
The first term corresponds to the approximation error of using variational approximation;
the second term comes from using the maximizer of the approximated equilibrium welfare $\Tilde{D}$;
the third term comes from using our greedy algorithm instead of using the maximizer of the variationally approximated welfare;
and the last component is again introduced by using the approximated equilibrium welfare $\Tilde{W}(D)$.

By Theorem \ref{theorem2}, the first term in the right-hand side of Eq.\ref{eqregret} can be bounded above at the order of $O(N^{1/2})$. 
The second term in the right-hand side of Eq.\ref{eqregret} captures the regret of the approximated welfare due to greedy optimization. To obtain a nontrivial analytical bound for it, we assume nonnegative treatment and spillover effects as stated in the next assumption. 

\begin{assumption}{(\textbf{Positivity and Monotonicity})}\label{assptse}
    We assume that (i) $\theta_1,\theta_3,\theta_4,\theta_6\geq 0$,
and
        (ii) for any $k\in\mathcal{N}$,
        \begin{equation}\label{asspara1}
        A_N\theta_4\sum_{i\neq k}m_{ik}G_{ik}+\theta_1+X_k'\theta_3\leq 4N.
        \end{equation}
    \end{assumption}
Assumption \ref{assptse} (i) restricts the signs of both own treatment effects and spillover treatment effects, while (ii) constrains the magnitude of the own treatment effects in relation to the network size. 
These sign restrictions on $\bm{\theta}$ are plausible in many applications, such as allocating vaccinations to increase social health, providing discount coupons to stimulate purchases, and assigning tax auditing to reduce tax evasion.

\begin{theorem}{\textbf{Performance Guarantee for greedy Algorithm}:}\label{theoremgreedy}
         Under Assumptions \ref{assumpundir} to \ref{assptse} and the utility function specification of Eq.\ref{eq:alpha} and \ref{eq:beta}, 
         the greedy algorithm enjoys the following approximation guarantee for the problem in Eq.\ref{eqmaximization}:
\begin{equation}\label{Eq:upper}
     \Tilde{W}(\Tilde{D})-\Tilde{W}(D_G)\leq  C_{greedy} \Tilde{W}(\Tilde{D}),
\end{equation}
where $D_G$ is the treatment assignment rule that is obtained by Algorithm \ref{algo}, and $C_{greedy}>0$ is a constant that depends on the curvature and submodularity ratio of the objective function $\Tilde{W}(D)$ (see Appendix for their definitions).
\end{theorem}

This theorem provides a performance guarantee of the greedy optimization, showing that the gap of the approximated welfare value between the global and greedy optima can be bounded by a universal constant fraction of the global maximum. 
This form of the performance guarantee is desirable to have at least, as commonly presented in the literature of integer optimization for submodular functions, e.g., \citet{nemhauser1978analysis} and \citet{bian2017guarantees}, despite that it does not guarantee that the gap asymptotically vanishes.

Combining Eq.\ref{eqkl}, Eq.\ref{eqregret}, and Eq.\ref{Eq:upper}, we obtain the next theorem:
\vspace{-0.2cm}
\begin{theorem}{\textbf{Regret Bound}:}\label{theoremregret}
    Let $D^*$ denote the maximizer of $\Tilde{W}(\mathcal{D})$ and $D_G$ be the assignment vector obtained by Algorithm \ref{algo}. 
    Under Assumptions \ref{assumpundir} to \ref{assptse} and the utility function specification of Eq.\ref{eq:alpha} and \ref{eq:beta}, the regret is bounded from above by:
    
   \begin{equation}
     W(D^*)-W(D_G)\leq  \mathcal{O}(N^{1/2})+C_{greedy}\Tilde{W}(\Tilde{D}).
    \end{equation}
\end{theorem}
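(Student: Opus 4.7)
My plan is to combine the two core ingredients that are already in place: the regret decomposition in Eq.\ref{eqregret}, the variational approximation bound in Theorem \ref{theorem2}, and the greedy performance guarantee in Theorem \ref{theoremgreedy}. Since the excerpt has largely assembled these pieces in the discussion preceding the statement, the proof is really a careful bookkeeping of rates rather than a new argument.

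First, I would start from Eq.\ref{eqregret}, which gives the four-term decomposition and simplifies (using the definition of $\tilde{D}$ as the maximizer of $\tilde{W}$ over $\mathcal{D}_{\kappa}$) to
\begin{equation*}
W(D^*) - W(D_G) \;\leq\; \sqrt{8\,\mathbb{KL}(Q^*\|P)} \;+\; \bigl[\tilde{W}(\tilde{D}) - \tilde{W}(D_G)\bigr].
\end{equation*}
I would then bound each summand in turn. For the first summand, I would invoke Theorem \ref{theorem2} together with Assumption \ref{ass:an} (so that $A_N\bar{N} = \mathcal{O}(1)$, hence $A_N \leq \mathcal{O}(1)$ since $\bar{N} \geq 1$). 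Inspecting the four terms of Eq.\ref{eqkl}, the $C_2 N$ term dominates and yields $\mathbb{KL}(Q^*\|P) \leq \mathcal{O}(N)$; taking square roots delivers the $\mathcal{O}(N^{1/2})$ piece, exactly as summarized in Eq.\ref{eq:437}.

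For the second summand, I would apply Theorem \ref{theoremgreedy} directly. Rearranging the inequality $\tilde{W}(D_G) \geq \tfrac{1}{\xi}(1 - e^{-\xi\gamma})\tilde{W}(\tilde{D})$ yields
\begin{equation*}
\tilde{W}(\tilde{D}) - \tilde{W}(D_G) \;\leq\; \Bigl[1 - \tfrac{1}{\xi}(1 - e^{-\xi\gamma})\Bigr]\tilde{W}(\tilde{D}),
\end{equation*}
which matches Eq.\ref{Eq:upper}. The remaining observation is that $\tilde{W}(\tilde{D}) \leq N$, which follows immediately because $\tilde{W}(D) = \sum_{i=1}^N \tilde{\mu}_i$ and each $\tilde{\mu}_i \in [0,1]$ as the probability under the Bernoulli approximating distribution $Q^*$ (formally, it is the output of the logistic map $\Lambda(\cdot)$ in Eq.\ref{eqmui}). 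Hence the second summand is at most $\mathcal{O}(N)\bigl[1 - \tfrac{1}{\xi}(1 - e^{-\xi\gamma})\bigr]$. Adding the two bounds yields the claim.

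Honestly, there is no serious obstacle here: every nontrivial analytical step has already been carried out in Theorems \ref{theorem2} and \ref{theoremgreedy}. The only points to handle with care are (i) verifying that Assumption \ref{ass:an} is the precise sparsity condition that collapses the four-term KL bound to $\mathcal{O}(N)$ and (ii) making explicit the trivial-looking but essential remark that $\tilde{W}(\tilde{D}) \leq N$, since the $\mathcal{O}(N)$ prefactor in front of the greedy gap comes entirely from this bound. Once these two points are stated, the proof is a one-line aggregation of the displayed inequalities.
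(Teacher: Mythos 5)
Your proposal is correct and follows exactly the paper's own argument: the paper obtains Theorem \ref{theoremregret} by combining the decomposition in Eq.\ref{eqregret}, the rate $\sqrt{8\mathbb{KL}(Q^*\|P)} \leq \mathcal{O}(N^{1/2})$ from Eq.\ref{eq:437} (via Theorem \ref{theorem2} and Assumption \ref{ass:an}), the rearranged greedy guarantee in Eq.\ref{Eq:upper}, and the observation that $\Tilde{W}(\Tilde{D}) \leq N$. Your two points of care---the role of Assumption \ref{ass:an} and the source of the $\mathcal{O}(N)$ prefactor---are exactly the steps the paper itself flags in the text preceding the theorem.
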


Theorem \ref{theoremregret} is our key result. 
It characterizes the guaranteed convergence rate of the overall regret, showing its dependence on the network complexity and the network size. 
The dependence upon the parameters in the utility function is shown implicitly via the terms $C_1$ and $C_2$ in Theorem \ref{theorem2}.  
If we examine the average equilibrium welfare, then the regret bound becomes:

    \begin{equation}
     \frac{1}{N}(W(D^*)-W(D_G))\leq  \mathcal{O}(N^{-1/2})+\frac{1}{N}C_{greedy}\Tilde{W}(\Tilde{D}).
    \end{equation}
The first term is the approximation error and shrinks to zero as $N$ goes to infinity. 
Given that $\Tilde{W}(\Tilde{D})$ can grow proportional to $N$, the per-unit average regret that is associated with our greedy algorithm can converge to a constant.

\section{Data and Applications}\label{sec:danda}
We illustrate our proposed method using the dataset of \citet{banerjee2013diffusion}, which examines take-up of a microfinance initiative in India.\footnote{The dataset is available at \href{https://doi.org/10.7910/DVN/U3BIHX}{https://doi.org/10.7910/DVN/U3BIHX}.} 
A detailed description of the study can be found in the original paper. 
This study features 43 villages in Karnataka that participated in a newly available microfinance loan program. 
Bharatha Swamukti Samsthe (BSS)---an Indian non-governmental microfinance institution administering the initiative---provided information about the availability of microfinance and program details (the treatment) to individuals that they identified as `leaders' (e.g., teachers, shopkeepers, savings group leaders, etc.) so as to maximize the number of households that chose to adopt the microfinance product. 
The data provide network information at the household level (network data is available across 12 dimensions, including financial and medical links, social activity, and known family members) for each village. 
We use all available households characteristics that are available in the dataset (quality of access to electricity, quality
of latrines, number of beds, number of rooms, the number of beds per capita, and the number
of rooms per capita) as covariates.
The program started in 2007, and the survey for microfinance adoption was finished in early 2011. 
We treat each household's choice about whether to purchase microfinance or not as observations drawn from a stationary distribution of the sequential game.

The most common occupations in these villages are in agriculture, sericulture, and dairy production \citep{banerjee2013diffusion}. 
In addition, these villages had almost no exposure to microfinance institutions and other types of credit before this program. 
We allow the parameters of our utility function to be different across villages, and estimate them for each village using Markov Chain Monte Carlo Maximum Likelihood \citep{snijders2002markov}.\footnote{MCMC Maximum Likelihood evaluates the score function by MCMC at each Newton-Raphson step. The number of likelihood evaluations, determined by a convergence criterion for the parameter sequence in Newton-Raphton steps (i.e., $ \| \theta^{t-1} - \theta^t \|$ is sufficiently small), is typically much smaller than $N\kappa$, the number of welfare evaluations needed for greedy optimization for treatment allocation.}\label{footmcmc} We also note that each household is connected to approximately $10$ others on average across all of the $43$ villages.
Comparing this with the total number of households in each village (there are between $107$ and $341$ households in each village), we find that the household network in each village is a sparse network. 
We, therefore, choose $A_N=1$ in our simulation and application. In addition, we choose $m(X_i,X_j) = \frac{1}{1+\vert X_i-X_j\vert}$, which is a monotonically decreasing function in the metric between $X_i$ and $X_j$.

\subsection{Simulation Exercises}\label{sectionsimul}
In this section, we evaluate the performance of our greedy algorithm in simulation exercises. We consider small networks of size $N=5,7,9,11,13$ or $15$ to compute the distribution of outcomes at every possible assignment vector, and use a brute force method to find an optimal treatment allocation.

We design the simulation setting to be relevant to empirical settings. For each choice of $N$, we randomly select $N$ households from Village~1 (as indexed in the dataset of \citet{banerjee2013diffusion}) and treat them as the population of a small network, maintaining the covariates of the sampled households and setting the estimated structural parameters in Village 1 as the truth. In estimation of the structural parameters, we adopt the specification of the utility function (Eqs.~\ref{eq:alpha} and \ref{eq:beta}). The parameter estimates in Village 1 satisfies Dobrushin's condition.
To match the network structure with the empirical application, we estimate a binary regression of network formations, $G_{ij}=\mathbbm{1}\{\beta_0+\beta_1\vert X_i-X_j \vert+\epsilon_{ij}\geq 0\},$ $\epsilon_{ij} \sim_{iid} Logistic$, using the whole observations from Village~1. 
We then generate $100$ networks of size $N$ (keeping the sampled households fixed) from the estimated network formation model. For each of these networks, we compute an optimal treatment allocation with our method subject to the capacity constraint of $\kappa = \lceil0.3N\rceil$, and simulate the corresponding equilibrium welfare. We then report its average over these $100$ networks to assess the performance of our method. 
We then repeat the same simulation exercise for Village~4 (as indexed in the dataset of \citet{banerjee2013diffusion}). The reason that we chose Village 4 is because the structural parameter estimates of Village 4 does \textit{not} satisfy Dobrushin’s condition. So this case in comparison to Village 1 offers an illustration of how our proposed method performs when Dobrushin's condition fails.

First, we consider all possible treatment allocations subject to the capacity constraint and perform brute force search to find an optimal assignment. 
For instance, when $N=15$, the number of feasible assignment vectors is $32,768$. 
We compute the joint distribution of outcomes at each possible treatment allocation by applying the joint probability mass function of the Gibbs distribution (Eq.\ref{eqjointdis}). 
Second, to assess the welfare loss from implementing the variational approximation, we evaluate the regret of a treatment assignment rule that is obtained by maximizing the variationally approximated welfare over every feasible treatment allocation meeting the capacity constraint (without greedy optimization). 
We label this method of obtaining the optimal treatment assignment as \textit{brute force with variational approximation} (BFVA). 
Third, we perform random allocation that assigns treatment to a fraction $\kappa$ of units independently of personal characteristics and network structure. Fourth, we compute the Bonacich centralities\footnote{Following \cite{ballester2006s}, Bonacich centrality is defined as follows:
      consider a network \(g\) with adjacency \(n\)-square matrix \(G\) and a scalar \(a\) such that $M(g,a) = \bigl[\mathbf{I} - aG\bigr]^{-1}$
is well‐defined and nonnegative, where $\mathbf{I}$ denotes the $N$-square identity matrix. The vector of Bonacich centralities of parameter \(a\) in \(g\) is
$
b(g,a) = \bigl[\mathbf{I} - aG\bigr]^{-1}\,\mathbf{1}.
$
Given that $a$ is a nuisance parameter, we set $a=0.1$, following \citet{galeotti2020targeting}.} for each network, and assign treatment to units in the order of decreasing centrality until the capacity constraint binds. We labeled this method of obtaining the optimal treatment assignment as \textit{Centrality}.


Figure \ref{figure1} shows the simulated average performances of the three aforementioned methods and the greedy targeting rule in terms of in-sample average welfare, with details summarized in Table \ref{table brute} in the Appendix. On the left-hand side of Figure \ref{figure1}, we compute the regret for each method scaled by the average welfare level of the global optimal allocation. 
The right-hand side of Figure \ref{figure1} shows the results obtained using the parameters estimated from Village 4, which allow us to evaluate performance when Assumption \ref{ass:boundspill} is violated.

Figure \ref{figure1} shows that our greedy algorithm performs as well as the brute-force method in a small-network setting, indicating strong performance of our approach.
Moreover, our greedy algorithm can achieve the same performance as BFVA, which means that using our greedy algorithm has a negligible effect upon regret. We also observe that targeting based on Bonacich centrality performs similarly to random allocation.

From the right-hand side of Figure \ref{figure1}, we observe that our method continues to perform as well as the optimal allocation. The regret associated with centrality and random allocation methods increases when Assumption \ref{ass:boundspill} is violated.

\begin{figure}[h]
\centering
\begin{minipage}[t]{0.48\textwidth}
\centering
\includegraphics[width=7cm,height=5cm]{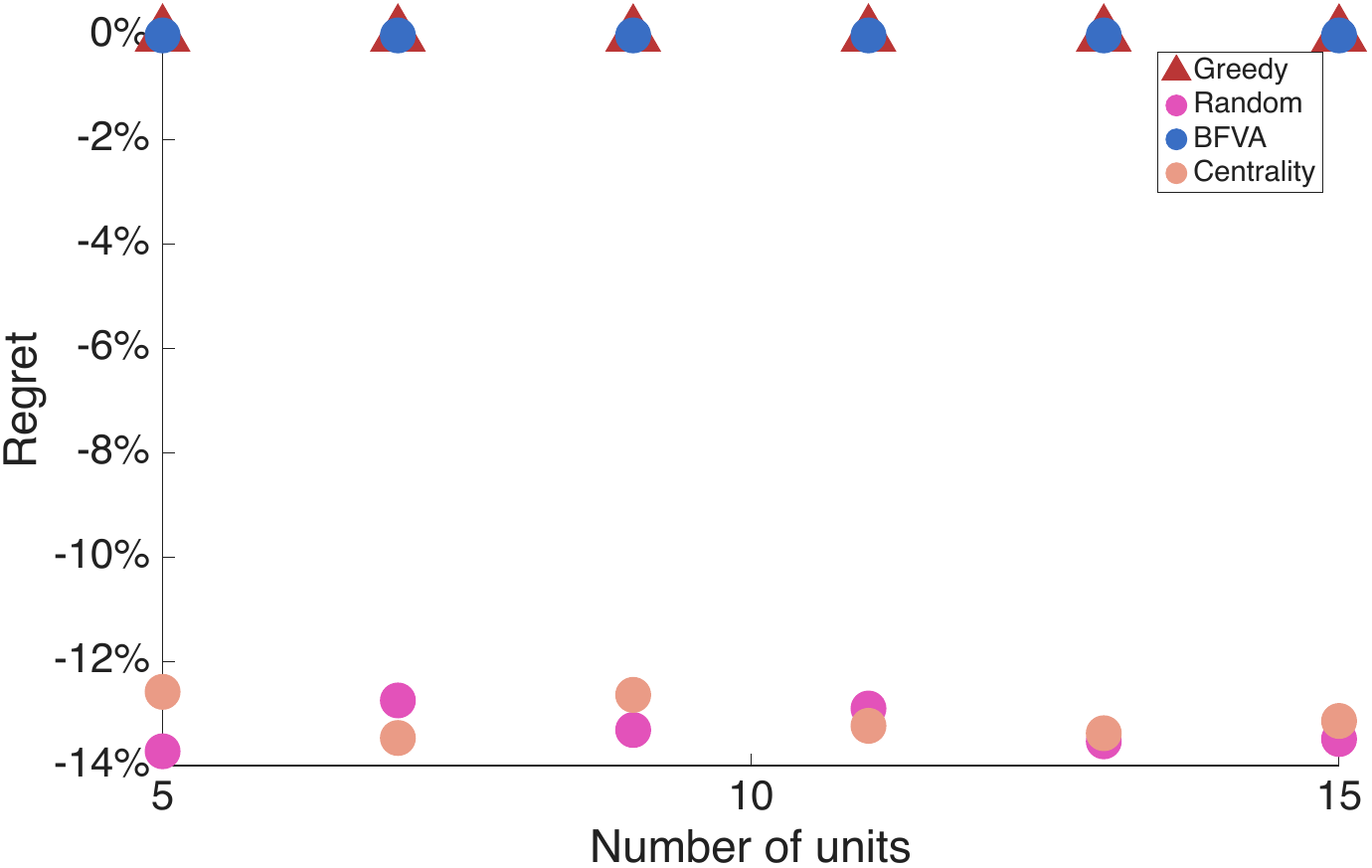}
\end{minipage}
\begin{minipage}[t]{0.48\textwidth}
\centering
\includegraphics[width=7cm,height=5cm]{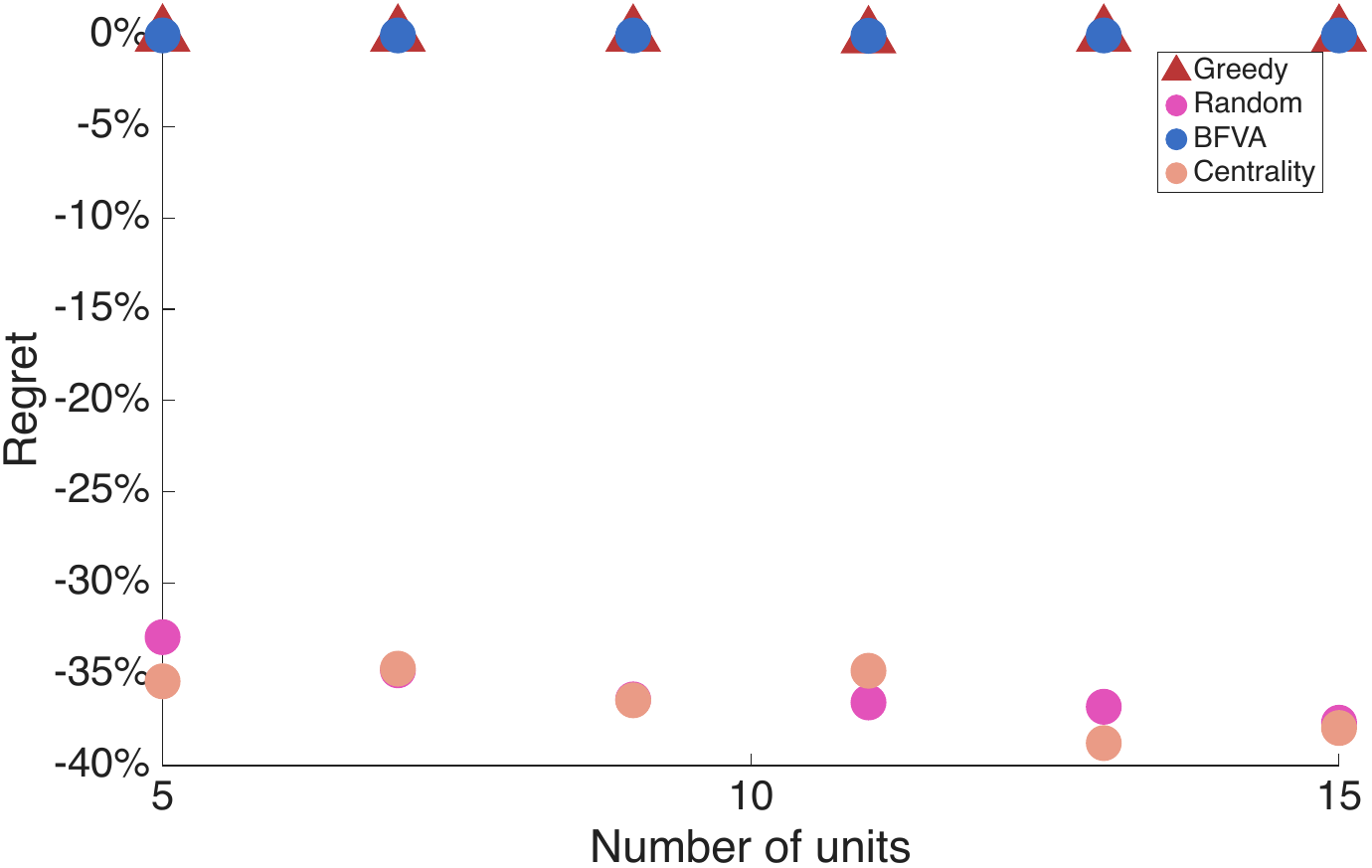}
\end{minipage}
\caption{Average simulated performance of four treatment allocation rules: greedy optimization with variational approximation (Greedy), random assignment (Random), brute force search with variational approximation (BFVA), and an assignment based on Bonacich centrality (Centrality). The left panel reports regret relative to the global optimal allocation using parameters estimated from village 1. The right panel reports the same comparison using parameters estimated from Village 4, where Assumption \ref{ass:boundspill} is violated.}
\label{figure1}
\end{figure}

\subsection{Empirical Application}\label{sectionappli}

Our target in this application is to maximize the participation rate of microfinance (4 years after program assignment) given a capacity constraint on treatment (i.e., Eq.\ref{eqobj});
we set our capacity constraint equal to the number of households that BSS contacted in the original study. 
For each iteration of the procedure, we set the number of draws in the Gibbs sampling procedure equal to $200N^2$.  

Table \ref{tableemprical} shows the average probability of taking up microfinance observed in each village (column Sample Avg.) and the prediction of village-level take-up probability obtained from our MCMC-MLE estimates (column Welfare under Original).
We refer to the former as the \textit{Sample Average} and the latter as the \textit{Welfare under Original Allocation}.
We provide standard errors for the Sample Average, which we calculate using network HAC estimation (\citet{leung2019inference}; and \citet{kojevnikov2021limit}).
We compute Welfare under Original Allocation by substituting the estimated parameters and the original treatment allocation (used by BSS) into our model.
To further evaluate the performance of our proposed method, we randomly draw 100 treatment allocations that satisfy the capacity constraint in each village, and calculate the probability of purchasing microfinance for each allocation.
We refer to the average probability over these draws as the \textit{Welfare under Random Allocation}. In addition, we compute welfare based on treatment assignments determined by Bonacich centrality for each network, which we denote as the \textit{Welfare under Centrality}. 
We then implement our proposed method with the estimated parameters to find the optimal treatment allocation rule. 
We refer to the share of households adopting microfinance according to the optimal treatment allocation and our model as the \textit{Welfare under 
 Greedy Allocation}.
Table \ref{tableemprical} records these statistics for the $43$ villages in the dataset, with the final column comparing the Welfare under our Greedy Allocation with the Welfare under Original Allocation. It also contains bootstrap standard errors for welfare based on 100 bootstrap samples for each village. The bootstrap samples are obtained by drawing outcomes from the MCMC stationary distribution simulated under the structural parameters estimated with the original sample. To obtain the standard errors for the Welfare under Greedy Allocation, for each bootstrap sample, we estimate the structural parameters and simulate welfare under a greedy optimal allocation.\footnote{If an optimal welfare is differentiable in the structural parameters and the greedy optimal allocation under the true structural parameter value is unique, we expect the bootstrap delta method applies to yield asymptotically valid standard errors.}

First, we note that the estimated average share of households who adopt microfinance under the MCMC-MLE estimates fits the data well for all $43$ villages. 
Second, we find that the centrality-based allocation delivers a level of welfare comparable to both the original treatment allocation and the random allocation. This result suggests that the centrality-based approach may fail to capture the spillover effects arising from strategic interactions. It also implies that the leaders that BSS selected were not particularly effective in encouraging take-up by other households.
Third, we find that our proposed method compares favourably to the method that is implemented in \citet{banerjee2013diffusion}, yielding a treatment allocation that attains a higher welfare level.
As shown in Table \ref{tableemprical}, the welfare gain is positive for all $43$ villages (exceeding $100\%$ in some villages).
This indicates that if the specification of the sequential network game is correct in the context of the current application, individualized treatment allocation that takes into account network spillovers can generate large welfare gains.
Existing empirical work around social networks has not quantified the welfare gain from individualized treatment allocation under spillovers due to a lack of feasible procedures to obtain an optimal individualized assignment policy. 
In contrast, we uncover evidence of the significant welfare gains that can be realized by exploiting network spillovers. 

To highlight the computational advantage of our proposed method, we compare the computation time for equilibrium welfare by the variational approximation and the one by MCMC. For a village with $175$ individuals under a fixed set of estimated parameters, computing an approximate welfare value for a given treatment allocation takes $0.04$ seconds using the variational approximation, whereas it takes $15$ seconds to approximate the welfare using MCMC with 50,000 draws. 
Since the welfare criterion must be evaluated $N \cdot \kappa \approx 4{,}500$ times when $\kappa = \lceil 0.3N \rceil$, the time difference of evaluating the welfare once between the variational approximation and MCMC amounts to approximately $20$ hours for obtaining an optimal treatment allocation.

\citet{akbarpour2025just} shows that a targeted treatment allocation under a capacity constraint is dominated by a random allocation under a slightly relaxed capacity constraint.
\citet{beaman2021can} points out that the result of \citet{akbarpour2025just} holds only under the following three conditions: (i) agents must adopt a new behavior when exposed to other agent who has adopted the behavior, (ii) the time period for social interactions is sufficiently long, and (iii) the interactions within the network are frequent.
Our results do not contradict the result of \citet{akbarpour2025just} since condition (i) does not hold in our framework; agents make decisions of adopting or not adopting strategically subject to idiosyncratic shocks.

\begin{table}[h!]
\setstretch{1}
\begingroup
\setlength{\tabcolsep}{3pt} 
\renewcommand{\arraystretch}{.3}
 \begin{adjustwidth}{0cm}{}
 \linespread{1}
\footnotesize
\centering 
\begin{threeparttable}
 \begin{tabular}{@{}cc|cccc|cc@{}}
   \toprule
    \multirow{2}{*}{\textit{Village}}
    & \multicolumn{1}{c}{\multirow{2}{*}{\textit{Sample Avg. \scriptsize(s.e.)}}}
    & \multicolumn{4}{c}{\textit{Welfare under}}
    & \multicolumn{2}{c}{\textit{Welfare Gain$^*$}}\\
    &&\multicolumn{1}{c}{\textit{Original \scriptsize(s.e.)}} & \textit{Random \scriptsize(s.e.)} 
    & \textit{Centrality \scriptsize(s.e.)} 
    & \textit{Greedy \scriptsize(s.e.)} 
    & \textit{Level \scriptsize(s.e.)} 
    & \textit{$\%$}\\
  \midrule
 \textbf{1}& $.24$ ($.03$)&$.24$ ($.03$)&$.24$ ($.03$)&$.24$ $(.03)$&$.27$ ($.04$)&$.03$ ($.02$)& $12.97\%$\\
 \textbf{2}&$.15$ ($.02$)&$.17$ ($.03$)&$.19$ ($.03$)&$.18$ $(.03)$&$.22$ ($.04$)&$.05$ ($.02$)& $29.73\%$\\
 \textbf{3}&$.14$ ($.02$)&$.13$ ($.02$)&$.13$ ($.03$)&$.13$ $(.02)$&$.17$ ($.05$)&$.03$ ($.03$)& $25.57\%$\\
 \textbf{4}&$.08$ ($.02$)&$.12$ ($.03$)&$.12$ ($.02$)&$.19$ $(.03)$&$.21$ ($.03$)&$.10$ ($.02$)& $83.88\%$\\
 \textbf{5}&$.23$ ($.03$)&$.21$ ($.04$)&$.21$ ($.03$)&$.20$ $(.04)$&$.26$ ($.04$)&$.05$ ($.02$)& $23.70\%$\\
 \cmidrule(lr){2-2}\cmidrule(lr){3-6}\cmidrule(lr){7-8}
 \textbf{6}&$.18$ ($.03$)&$.29$ ($.02$)&$.28$ ($.03$)&$.31$ $(.02)$&$.36$ ($.05$)&$.07$ ($.03$)& $23.37\%$\\
 \textbf{7}&$.30$ ($.04$)&$.26$ ($.04$)&$.28$ ($.03$)&$.27$ $(.04)$&$.35$ ($.04$)&$.09$ ($.02$)& $34.03\%$\\
 \textbf{8}&$.12$ ($.03$)&$.13$ ($.03$)&$.14$ ($.03$)&$.14$ $(.03)$&$.19$ ($.04$)&$.06$ ($.02$)& $45.97\%$\\
 \textbf{9}&$.21$ ($.03$)&$.19$ ($.02$)&$.19$ ($.02$)&$.19$ $(.03)$&$.21$ ($.03$)&$.02$ ($.02$)& $10.52\%$\\
 \textbf{10}&$.36$ ($.04$)&$.39$ ($.04$)&$.43$ ($.04$)&$.35$ $(.05)$&$.50$ ($.05$)&$.10$ ($.03$)& $26.55\%$\\
\cmidrule(lr){2-2}\cmidrule(lr){3-6}\cmidrule(lr){7-8}
 \textbf{11}&$.45$ ($.05$)&$.50$ ($.04$)&$.50$ ($.04$)&$.50$ $(.04)$&$.54$ ($.04$)&$.03$ ($.02$)& $6.85\%$\\
 \textbf{12}&$.15$ ($.02$)&$.16$ ($.02$)&$.15$ ($.02$)&$.18$ $(.02)$&$.27$ ($.05$)&$.11$ ($.03$)& $69.88\%$\\
 \textbf{13}&$.19$ ($.02$)&$.22$ ($.03$)&$.22$ ($.03$)&$.23$ $(.03)$&$.27$ ($.04$)&$.06$ ($.02$)& $25.15\%$\\
 \textbf{14}&$.17$ ($.03$)&$.18$ ($.03$)&$.18$ ($.02$)&$.18$ $(.03)$&$.21$ ($.03$)&$.03$ ($.01$)& $18.12\%$\\
 \textbf{15}&$.27$ ($.02$)&$.28$ ($.02$)&$.29$ ($.02$)&$.29$ $(.03)$&$.34$ ($.03$)&$.06$ ($.02$)& $22.46\%$\\
 \cmidrule(lr){2-2}\cmidrule(lr){3-6}\cmidrule(lr){7-8}
 \textbf{16}&$.35$ ($.04$)&$.47$ ($.05$)&$.46$ ($.04$)&$.48$ $(.04)$&$.54$ ($.04$)&$.07$ ($.03$)& $15.75\%$\\
 \textbf{17}&$.19$ ($.03$)&$.21$ ($.02$)&$.20$ ($.03$)&$.20$ $(.02)$&$.25$ ($.03$)&$.04$ ($.01$)& $16.81\%$\\
 \textbf{18}&$.19$ ($.03$)&$.19$ ($.04$)&$.20$ ($.03$)&$.19$ ($.04$)&$.21$ ($.04$)&$.01$ ($.03$)& $6.14\%$\\
 \textbf{19}&$.08$ ($.02$)&$.10$ ($.02$)&$.11$ ($.02$)&$.10$ ($.02$)&$.13$ ($.03$)&$.03$ ($.02$)& $33.41\%$\\
 \textbf{20}&$.19$ ($.02$)&$.21$ ($.02$)&$.20$ ($.02$)&$.25$ ($.04$)&$.34$ ($.05$)&$.13$ ($.04$)& $63.50\%$\\
 \cmidrule(lr){2-2}\cmidrule(lr){3-6}\cmidrule(lr){7-8}
 \textbf{21}&$.35$ ($.04$)&$.34$ ($.04$)&$.34$ ($.03$)&$.32$ ($.04$)&$.42$ ($.04$)&$.08$ ($.02$)& $23.19\%$\\
 \textbf{22}&$.25$ ($.04$)&$.26$ ($.04$)&$.25$ ($.02$)&$.26$ ($.04$)&$.30$ ($.02$)&$.04$ ($.01$)& $15.33\%$\\
 \textbf{23}&$.21$ ($.03$)&$.22$ ($.03$)&$.22$ ($.03$)&$.21$ ($.03$)&$.29$ ($.03$)&$.07$ ($.02$)& $29.64\%$\\
 \textbf{24}&$.24$ ($.03$)&$.21$ ($.03$)&$.19$ ($.03$)&$.24$ ($.03$)&$.27$ ($.03$)&$.07$ ($.01$)& $31.81\%$\\
 \textbf{25}&$.23$ ($.02$)&$.26$ ($.04$)&$.26$ ($.04$)&$.26$ ($.04$)&$.27$ ($.06$)&$.01$ ($.03$)& $4.08\%$\\
 \cmidrule(lr){2-2}\cmidrule(lr){3-6}\cmidrule(lr){7-8}
 \textbf{26}&$.19$ ($.04$)&$.21$ ($.03$)&$.22$ ($.03$)&$.21$ ($.03$)&$.27$ ($.03$)&$.06$ ($.01$)& $29.24\%$\\
 \textbf{27}&$.09$ ($.02$)&$.12$ ($.02$)&$.13$ ($.02$)&$.12$ ($.01$)&$.16$ ($.03$)&$.04$ ($.01$)& $29.91\%$\\
 \textbf{28}&$.12$ ($.03$)&$.10$ ($.02$)&$.13$ ($.02$)&$.09$ ($.02$)&$.34$ ($.04$)&$.24$ ($.03$)& $233.82\%$\\
 \textbf{29}&$.10$ ($.02$)&$.08$ ($.01$)&$.08$ ($.01$)&$.08$ ($.02$)&$.12$ ($.02$)&$.04$ ($.02$)& $54.14\%$\\
 \textbf{30}&$.11$ ($.05$)&$.17$ ($.04$)&$.15$ ($.03$)&$.16$ ($.04$)&$.24$ ($.06$)&$.07$ ($.03$)& $41.43\%$\\
 \cmidrule(lr){2-2}\cmidrule(lr){3-6}\cmidrule(lr){7-8}
 \textbf{31}&$.15$ ($.02$)&$.21$ ($.04$)&$.20$ ($.04$)&$.22$ ($.04$)&$.29$ ($.05$)&$.08$ ($.02$)& $40.00\%$\\
 \textbf{32}&$.08$ ($.02$)&$.12$ ($.02$)&$.12$ ($.02$)&$.16$ ($.03$)&$.23$ ($.03$)&$.12$ ($.03$)& $103.30\%$\\
 \textbf{33}&$.15$ ($.02$)&$.12$ ($.02$)&$.12$ ($.02$)&$.11$ ($.02$)&$.13$ ($.03$)&$.01$ ($.03$)& $7.25\%$\\
 \textbf{34}&$.18$ ($.04$)&$.25$ ($.03$)&$.25$ ($.04$)&$.25$ ($.03$)&$.28$ ($.05$)&$.03$ ($.02$)& $11.77\%$\\
 \textbf{35}&$.11$ ($.02$)&$.12$ ($.02$)&$.12$ ($.02$)&$.12$ ($.02$)&$.16$ ($.02$)&$.04$ ($.01$)& $34.18\%$\\
 \cmidrule(lr){2-2}\cmidrule(lr){3-6}\cmidrule(lr){7-8}
 \textbf{36}&$.17$ ($.02$)&$.19$ ($.02$)&$.20$ ($.02$)&$.20$ ($.02$)&$.24$ ($.02$)&$.04$ ($.01$)& $20.42\%$\\
 \textbf{37}&$.30$ ($.04$)&$.34$ ($.03$)&$.34$ ($.04$)&$.35$ ($.02$)&$.40$ ($.05$)&$.06$ ($.02$)& $18.29\%$\\
 \textbf{38}&$.15$ ($.03$)&$.21$ ($.04$)&$.20$ ($.04$)&$.19$ ($.05$)&$.25$ ($.05$)&$.04$ ($.02$)& $18.12\%$\\
 \textbf{39}&$.21$ ($.03$)&$.18$ ($.03$)&$.16$ ($.02$)&$.17$ ($.03$)&$.22$ ($.02$)&$.04$ ($.01$)& $23.12\%$\\
 \textbf{40}&$.16$ ($.03$)&$.18$ ($.02$)&$.19$ ($.02$)&$.26$ ($.03$)&$.31$ ($.03$)&$.13$ ($.02$)& $70.85\%$\\
 \cmidrule(lr){2-2}\cmidrule(lr){3-6}\cmidrule(lr){7-8}
 \textbf{41}&$.16$ ($.04$)&$.18$ ($.03$)&$.19$ ($.03$)&$.19$ ($.03$)&$.28$ ($.05$)&$.10$ ($.04$)& $55.28\%$\\
 \textbf{42}&$.18$ ($.03$)&$.17$ ($.03$)&$.16$ ($.03$)&$.17$ ($.03$)&$.24$ ($.04$)&$.07$ ($.02$)& $40.34\%$\\
 \textbf{43}&$.24$ ($.04$)&$.27$ ($.03$)&$.30$ ($.04$)&$.28$ ($.03$)&$.42$ ($.06$)&$.15$ ($.04$)& $53.93\%$\\
\bottomrule
   \end{tabular}
   \end{threeparttable}
   \caption{Welfare performance comparison using $43$ Indian villages microfinance data from \cite{banerjee2013diffusion};
   $\ast$ Welfare Gain compares the equilibrium welfare simulated under Greedy Allocation and the equilibrium welfare simulated under Original Allocation implemented by BSS.}
\label{tableemprical}
   \end{adjustwidth}
   \endgroup
\end{table}

\section{Conclusion}\label{sectionconclusion}
In this work, we have introduced a novel method to obtain individualized treatment allocation rules that maximize the equilibrium welfare in sequential network games. 
To handle the analytical and computational challenges of analyzing the stationary distribution of outcomes, we use variational approximation and maximize the approximated welfare criterion using a greedy maximization algorithm over treatment allocations. 
We bound the welfare regret, taking into account the approximation errors of the variational approximation and of the greedy maximization. 

There are several questions remained to be studied.
First, we assume that the network structure does not change in response to the treatment allocation. One could apply the framework of \citet{badev2021nash} to incorporate joint determination of network formation and individual choices into policy targeting, but we leave the details for future research.
Second, we may want to perform inference for the welfare at the obtained assignment rule, taking into account the uncertainty of parameter estimates and a potential winner's bias \citep{andrews2020inference}. 
Third, we have used a naive mean field method in this work. 
As is mentioned in \citet{wainwright2008graphical}, using a structural mean field method can improve the performance of an approximation and can lead to better welfare performance. Fourth, we assume that the network structure in our data is perfectly measured, which may not hold in practice. This issue has been examined in \citet{lewbel2025estimating}.

\appendix

\bigskip
\section{Main results}
In this section of the Appendix, we provide the proofs for the main results of Section \ref{sectionall}. We first denote the matrix norms induced by vector norms as $\Vert A\Vert_{a,b}\coloneqq \sup\{\Vert Ax\Vert_{b}:\Vert x\Vert_a\leq 1\}$. 
Let $p$ be a generic measure with support $\mathcal{Y}^N$, and denote the conditional distribution of $Y_i$ given $Y_{-i}$ as $p_i(Y_i\vert Y_{-i})\coloneqq \frac{p(Y)}{\sum_{Y_{i}\in\mathcal{Y}}p(Y_i,Y_{-i})}$ given the choices of the other units $Y_{-i}$. A matrix $A=(a_{ij})_{i,j\le N}$ is a coupling matrix if it satisfies $a_{ii}=0$ for all $i$ and for $i\neq j$
  \[
    \bigl\|p_i(\cdot \mid Y_{-i}) 
         \;-\;
         p_i(\cdot \mid Y_{-i}')\bigr\|_{\mathrm{TV}}
    \;\le\;
    a_{ij},
  \]
  whenever $Y,Y'\in \mathcal{Y}^N$ differ only at the $j$-th coordinate.
\subsection{Proof of Lemma \ref{lemma:eb}}\label{lemmaproof:eb}
\begin{proof}
    For any coupling $\omega \in \Omega(P,Q)$ (i.e.\ a joint distribution on $(Y,Y') \in \mathcal{Y}^N \times \mathcal{Y}^N$
with marginals $P$ and $Q$), define the Hamming cost
\[
\mathbb{E}_\omega\Bigl[\sum_{i=1}^N \mathbf{1}\{Y_i \neq Y_i'\}\Bigr]
~=~
\sum_{i=1}^N \mathbb{E}_\omega[\mathbf{1}\{Y_i \neq Y_i'\}]
~=~
\sum_{i=1}^N Pr_\omega(Y_i \neq Y_i').
\]
For each coordinate $i$, given $Y_i$ is a Bernoulli random variable, we have
\[
Pr_\omega(Y_i \neq Y_i')
~\ge~
\bigl| P (Y_i=1)-Q(Y_i=1)\bigr|.
\]
Summing over $i=1,\ldots,N$, we get
\[
\mathbb{E}_\omega\Bigl[\sum_{i=1}^N \mathbf{1}\{Y_i \neq Y_i'\}\Bigr]
~
\ge~
\sum_{i=1}^N 
\bigl|
P(Y_i=1)
-
Q(Y_i=1)
\bigr|.
\]
Since $W_1(P,Q)$ is the infimum of the above expected Hamming cost over all couplings
$\gamma \in \Omega(P,Q)$, the result follows:
\[
W_1(P,Q)
=
\inf_{\omega \in \Omega(P,Q)}
\mathbb{E}_\omega\Bigl[\sum_{i=1}^N \mathbf{1}\{Y_i \neq Y_i'\}\Bigr]
~
\ge~
\sum_{i=1}^N \bigl|P(Y_i=1) - Q(Y_i=1)\bigr|.
\]
\end{proof}

\subsection{Proof of Lemma \ref{pro:tala}}\label{lemmaproof:tI}
\begin{proof}
We aim to show: 
\begin{equation} 
        W_1(P,Q)\leq C_{trans}\sqrt{\mathbb{KL}(Q\Vert P)},
    \end{equation}
  The proof consists of the following two steps. First, we show that our stationary distribution \(P\) satisfies the weak transport inequality $\overline{\mathrm T}_f(Q\mid P)\le \mathbb{KL}(Q\mid P)$,
    where \(\overline{\mathrm{T}}_f(Q\mid P)\) denotes the weak transport cost from Definition~\ref{def:5.1}. Second, we show that $W_1(P,Q)\le \sqrt{\frac{1}{c_{\mathrm{trans}}}\,\overline{\mathrm{T}}_f(Q\mid P)}$.
    
    \begin{definition}{\textbf{Weak transport cost}}\label{def:5.1}:
  Let $p$ and $q$ be probability measures on $\mathcal{Y}^N$, define 
  the weak transport cost between $p$ and $q$ as 
 \begin{equation}
    \overline{\mathrm{T}}_f(q\mid p)=\inf_{\pi\in\Omega(p,q)}
      \int_{\mathcal{Y}^N}
        f(
          y - \int_{\mathcal{Y}^N} y'\, \pi(dy'\vert y)
        )
      \,p(\mathrm{d}y).
\end{equation}
where $f:\mathcal{Y}^N\rightarrow[0,+\infty]$ is a lower-semicontinuous convex function, and the infimum is taken over all couplings $\pi$ of $p$ and $q$ 
  (i.e.\ measures on $\mathcal{Y}^N \times \mathcal{Y}^N$ with marginals 
  $p$ and $q$). For each $y \in \mathcal{Y}^N$, $\pi(dy'\vert y)$ denotes 
  the conditional measure satisfying $\pi(\mathrm{d}y'\,\mathrm{d}y) 
  = \pi(dy'\vert y)\,p(\mathrm{d}y)$ ($p$‐almost surely). We will say that $p$ satisfies the \textbf{weak transport inequality} if for every probability
measure $q$ on $\mathcal{Y}^N$,
    \begin{equation}\label{eq:tfineq}
        \max\big(\overline{\mathrm{T}}_f(q \mid p),\overline{\mathrm{T}}_f(p \mid q)\big)\leq \mathbb{KL}(q\mid p).
    \end{equation}
\end{definition}
    To prove the weak transport inequality, we use Lemma~\ref{thm:5.3} from \citet{gozlan2017kantorovich}, which establishes the equivalence between the weak transport inequality and the dimension-free convex concentration property.\footnote{For comparison, \citet{gozlan2009characterization} establish the equivalence between the dimension-free concentration property and Talagrand’s transportation inequality (Eq.~\ref{eq:talagrand}). In particular, when \(Y\) is an i.i.d.\ Gaussian vector, the dimension-free concentration inequality of \citet{sudakov1978extremal} holds for all Lipschitz functions. It follows that Talagrand’s transportation inequality holds in that setting, which in turn implies the weak transport inequality since
\[
  \overline{\mathrm{T}}_f(Q\mid P)\le W_2^2(P,Q).
\]
By contrast, for dependent random vectors, the dimension-free concentration property generally fails \citep{gotze2019higher}, so Talagrand’s transportation inequality is not available in our setting. Instead, such distributions satisfy the dimension-free \emph{convex} concentration inequality by Lemma~\ref{lemma:a54}.} It therefore remains to show that our stationary distribution $P$ satisfies the dimension-free convex concentration property.
    \begin{lemma}{\citep[\S Corollary~5.11, Theorem~8.8, 8.15]{gozlan2017kantorovich}}\label{thm:5.3}
  Let $Y$ be a random vector in $\mathcal{Y}^N$ with distribution~$p$. 
  The following conditions are equivalent:
  \begin{enumerate}
    \item There exists $K$ such that $Y$ has the dimension-free 
      convex concentration property with constant~$K$.
    \item There exists $c$ such that $p$ satisfies the weak transport inequality (Eq.\ref{eq:tfineq}) with $f(y) = c\Vert y\Vert_1^2$.
  \end{enumerate}
\end{lemma}
\begin{definition}{\textbf{Convex Concentration Property:}}
    A random vector $Y$ in $\mathbb{R}^N$ has a convex concentration 
property if there exists a constant $0<K<\infty$ such that for any $L$-Lipschitz convex function 
$g: \mathbb{R}^N \to \mathbb{R}$ and any $t > 0$, it holds
    \begin{equation}\label{eq:convexcon}
        \Pr\bigl(\lvert g(Y) - \mathrm{Med}\,g(Y)\rvert \ge t\bigr)\leq 2 \exp\big(-t^2/K^2L^2\big). 
    \end{equation}\end{definition}
To show the dimension-free convex concentration property of our stationary distribution $P$, we apply the result from Lemma \ref{lemma:a54}, which guarantees the stationary distribution $P$ satisfies the dimension-free convex concentration inequality (Eq.\ref{eq:convexcon}) if it satisfies the approximate tensorization (Definition \ref{def:AT}).
\begin{lemma}{(\citet[\S Proposition~5.4]{adamczak2019note})}\label{lemma:a54}
If $Y$ is a $[-1,1]^N$ - valued random vector with law $p$, which satisfies the approximate tensorization $\mathrm{AT}(C)$,
  then $Y$ satisfies the dimension-free convex concentration inequality (Eq.\ref{eq:convexcon})
  with constant $K$ depending only on $C$.
\end{lemma}
\begin{definition}{\textbf{Approximate tensorization of entropy:}}\label{def:AT}
  We say that a measure $p$ on $\mathcal{Y}^N$ has the \emph{approximate tensorization property} 
  with constant $C$ (abbreviated as $\mathrm{AT}(C)$) if for every function 
  $g:\mathcal{Y}^N\to[0,\infty)$,
  \[
    \mathrm{Ent}_p(g)
    \;\le\;
    C\,\mathbb{E}_p\Bigl[\;\sum_{i=1}^N 
        \mathrm{Ent}_{p(\cdot \mid Y_{-i})}\bigl(g\bigr)
    \Bigr],
  \]
\noindent where $ \mathrm{Ent}_p(g):= \mathbb{E}_p\bigl[g \log g\bigr]
-\mathbb{E}_p[g]\log\bigl(\mathbb{E}_p[g]\bigr)
\in [0,\infty)$ is the entropy functional of nonnegative function $g$. 
\end{definition}
To show that our stationary distribution satisfies the approximate tensorization property, we apply the following lemma, which gives sufficient conditions for a probability distribution to satisfy approximate tensorization.
\begin{lemma}{\citep[\S Theorem~4.2]{gotze2019higher}}\label{thm:3.2}
  Let $p$ be a measure with $p(y) > 0$ for all $y \in \mathcal{Y}^N$. Define
  \begin{equation}\label{eq:beta1}
    \chi \;=\; 
      \min_{1\le i\le N}\;\min_{Y\in \mathcal{Y}^N} 
        p_i\bigl(Y_i\mid Y_{-i}\bigr).
  \end{equation}
  Let $A=(a_{ij})_{i,j\le N}$ satisfy $a_{ii}=0$ for all $i$ and for $i\neq j$
  \[
    \bigl\|p_i(\cdot \mid Y_{-i}) 
         \;-\;
         p_i(\cdot \mid Y_{-i}')\bigr\|_{\mathrm{TV}}
    \;\le\;
    a_{ij},
  \]
  whenever $Y,Y'\in \mathcal{Y}^N$ differ only at the $j$-th coordinate.
  Assume moreover that 
\begin{equation}
    \|A\|_{2,2} <1.
\end{equation}
Then $p$ satisfies the approximate tensorization property 
  $\mathrm{AT}(C)$ with
  \[
    C 
    \;=\; 
    \frac{1}{\chi\,\bigl(1-\|A\|_{2, 2}\bigr)^2}.
  \]
\end{lemma}
Lemma~\ref{lem:3.1} below verifies the assumptions of Lemma~\ref{thm:3.2} for our stationary distribution $P$. In particular, Lemma~\ref{lem:3.1} implies that
\[
\|A\|_{2,2}\le 1-\alpha,
\]
so Lemma~\ref{thm:3.2} yields that $P$ satisfies the approximate tensorization property 
\[
AT\!\left(\frac{1}{\chi\alpha^2}\right),
\]
where $\chi$ depends on $(\theta,\bar{X},G,\alpha)$ through \eqref{eq:beta1}. In addition, the second statement of Lemma \ref{lem:3.1} guarantees that $\chi\geq(1-C_{\alpha})$. Therefore, $P$ satisfies the approximate tensorization property $AT(\frac{1}{(1-C_{\alpha})\alpha^2})$. Applying Lemma \ref{thm:5.3}, we conclude that there exists a constant
\(c_{\mathrm{trans}}>0\) such that \(P\) satisfies the weak transport inequality with cost
\[
f(x)=c_{trans}\|x\|_1^2.
\]
That is, for every probability measure \(Q\) on \(\{0,1\}^N\),
\[
\overline{\mathrm T}_f(Q\mid P)\le \mathbb{KL}(Q\mid P),
\]
where
\begin{equation}\label{eq:wtp}
    \overline{\mathrm{T}}_f(Q\mid P)=\inf_{\omega}
      \int_{\{0,1\}^N}
        c_{trans}\biggl\Vert
          y - \int_{\{0,1\}^N} y'\, \pi(dy'\vert y)
        \biggr\Vert_1^2
      \,P(\mathrm{d}y), 
\end{equation}
which completes the proof of the first step. Therefore, we have
\begin{equation}\label{eq:weaktransport}
    \sqrt{\frac{1}{c_{trans}}\overline{\mathrm{T}}_f(Q\mid P)}\leq \sqrt{\frac{1}{c_{trans}}\mathbb{KL}(Q\mid P)}.
\end{equation}
Given the Wasserstein 1-distance equipped with Hamming distance, and the weak transport cost defined in Eq.\ref{eq:wtp}, if we can show 
\[
W_1(P,Q) \leq \sqrt{\frac{1}{c_{trans}}\overline{T}_f(Q\mid P)},
\]
combining with Eq.(\ref{eq:weaktransport}) with $C_{trans}=\frac{1}{\sqrt{c_{trans}}}$ leads to the current proposition. To this goal, let $P, Q$ be probability measures on $\{0,1\}^N$. Define the Wasserstein 1-distance (with Hamming distance) as
\[
W_1(P, Q) = \inf_{\omega \in \Omega(P,Q)} \mathbb{E}_\omega[\|Y - Y'\|_1],
\]
where $\|y-y'\|_1 = \sum_{i=1}^{N}|y_i - y_i'|$ is the Hamming distance. Define the weak transport cost as
\[
\overline{\mathrm{T}}_f(Q\mid P) = \inf_{\omega \in \Omega(P,Q)} \int_{\{0,1\}^N}C_{trans}\|y - \mathbb{E}_\omega[Y'|Y=y]\|_1^2 P(dy).
\]
Define another weak transport cost as
\[
\overline{\mathrm{T}}'(Q\mid P) = \inf_{\omega \in \Omega(P,Q)} \int_{\{0,1\}^N}C_{trans}^{1/2}\|y - \mathbb{E}_\omega[Y'|Y=y]\|_1 P(dy).
\]
By Jensen's inequality and the square root function is concave, we have
\begin{equation}\label{eq:tq'}
 \begin{split}
     \overline{\mathrm{T}}'(Q\mid P)&=\inf_{\omega \in \Omega(P,Q)} \int_{\{0,1\}^N}\sqrt{C_{trans}\|y - \mathbb{E}_\omega[Y'|Y=y]\|_1^2} P(dy)\\
     &\leq\inf_{\omega \in \Omega(P,Q)} \left[\int_{\{0,1\}^N}C_{trans}\|y - \mathbb{E}_\omega[Y'|Y=y]\|_1^2 P(dy)\right]^{1/2}\\
&=\sqrt{\overline{\mathrm{T}}_f(Q\mid P)}.
 \end{split}
\end{equation}
To show
\begin{equation}\label{eq:teq}
    W_1(P, Q) = \frac{1}{\sqrt{C_{trans}}}\overline{\mathrm{T}}'(Q\mid P),
\end{equation}
consider each component in $\|y - \mathbb{E}_\omega[Y'|Y=y]\|_1$ separately. Given coupling $\pi$ and conditioning on $Y=y$, we have
\[
\mathbb{E}_\omega[Y_i'|Y=y] = \mathbb{P}_\omega(Y_i'=1|Y=y) =: p_i(y).
\]
Thus,
\[
|y_i - p_i(y)| = \mathbb{P}_\omega(Y_i' \neq y_i|Y=y).
\]
Summing over units, we get:
\[
\|y - \mathbb{E}_\omega[Y'|Y=y]\|_1 = \sum_{i=1}^{N}\mathbb{P}_\omega(Y_i' \neq y_i|Y=y) = \mathbb{E}_\omega[\|y - Y'\|_1|Y=y].
\]
Integrating w.r.t. $P(dy)$, we have the exact equality:
\[
\int_{\{0,1\}^N}\|y - \mathbb{E}_\omega[Y'|Y=y]\|_1 P(dy) = \mathbb{E}_\omega[\|Y - Y'\|_1].
\]
Since this holds for any coupling $\pi$, we have equality of objective functions for each coupling. Thus, the infimum over all couplings must coincide.
This establishes the desired equivalence. Combining Eq.\ref{eq:tq'} and Eq.\ref{eq:teq}, we finish the proof of our claim.

\end{proof}
\subsection{Lemma \ref{lem:3.1}}
\begin{lemma}\label{lem:3.1}
  Let $P$ be the stationary distribution defined in Eq.\ref{eqjointdis} and the utility function specification of Eq.\ref{eq:alpha} and \ref{eq:beta}. Then, a coupling matrix is given by $J$, with each element $J_{ij}=  \frac{A_N}{4}m_{ij}G_{ij}\big(\vert\theta_5\vert+\vert\theta_6\vert \big) $. Under Assumption \ref{ass:boundspill}, there exists $\alpha \in (0,1)$ such that
\begin{equation}\label{eq:2if}
    \Vert J\Vert_{2,2}\leq \Vert J\Vert_{\infty,\infty}\leq
    1 - \alpha
    \quad
    \text{holds for }P.
\end{equation}
  Moreover, 
  \[
    P_i\bigl(\,\cdot \mid Y_{-i}\bigr)
    \;\in\;
    \bigl(1-C_{\alpha},\; 
           C_{\alpha}\bigr)
    \quad
    \text{for some } 
      C_{\alpha}\in(0,1)
      \text{ depending only on }
      \alpha, \theta, \widebar{N}, \widebar{m}, \text{ and } \bar{X}
  \]
  uniformly in $i, N,$ and $Y_{-i}$.
\end{lemma}
\begin{proof}
\begin{equation}
    P_i\bigl(1 \mid Y_{-i}\bigr)=\frac{\exp\big(\Phi(y_i=1,Y_{-i})\big)}{\exp\big(\Phi(y_i=1,Y_{-i})\big)+\exp\big(\Phi(y_i=0,Y_{-i})\big)},
\end{equation}
\begin{equation}
    P_i\bigl(0 \mid Y_{-i}\bigr)=\frac{\exp\big(\Phi(y_i=0,Y_{-i})\big)}{\exp\big(\Phi(y_i=1,Y_{-i})\big)+\exp\big(\Phi(y_i=0,Y_{-i})\big)},
\end{equation}
where
\begin{equation}
\small
    \Phi(y_i=1,Y_{-i}) = \alpha_i+\sum_{j\neq i} \alpha_j Y_{j}+\frac{A_N}{2}\sum_{\ell=1}^N \sum_{j=1}^N m_{\ell j}G_{\ell j} Y_{\ell}Y_j(\theta_5+\theta_6 d_\ell d_j),
\end{equation}
\begin{equation}
    \Phi(y_i=0,Y_{-i}) =\sum_{j\neq i} \alpha_j Y_{j}+\frac{A_N}{2}\sum_{\ell\neq i} \sum_{j\neq i} m_{\ell j}G_{\ell j} Y_{\ell}Y_j(\theta_5+\theta_6 d_\ell d_j),
\end{equation}
where $\alpha_i=\theta_0 +\theta_1  d_i+X_i'(\theta_2 +\theta_3d_i)+A_N\sum_{j=1}^N \theta_4 m_{ij}G_{ij}d_j$. Recall $\Lambda(x)=\frac{\exp(x)}{1+\exp(x)}$ as a sigmoid function of $x\in\mathbb{R}$, then we have
\begin{equation}\label{eq:sig}
\small
    P_i\bigl(1 \mid Y_{-i}\bigr) = \Lambda\left(\alpha_i+A_N\sum_{j=1}^N m_{ij}G_{ij} Y_j(\theta_5+\theta_6 d_id_j)\right).
\end{equation}
\begin{equation}
\small
    P_i\bigl(0 \mid Y_{-i}\bigr) = 1-P_i\bigl(1 \mid Y_{-i}\bigr).
\end{equation}
Let $i\neq k$ be fixed and $z,y\in\mathcal{Y}$ be such that $y$ and $z$ differ in the $k$-th coordinate only, i.e., $y=T_{k}z$. Therefore,
\begin{equation}
    d_{TV}(P_i\bigl(\,\cdot \mid z_{-i}\bigr),P_i\bigl(\,\cdot \mid y_{-i}\bigr)) = \vert P_i\bigl(1 \mid z_{-i}\bigr)-P_i\bigl(1 \mid y_{-i}\bigr)\vert=\vert \Lambda(g_i(y)) - \Lambda(g_i(T_ky)) \vert,
\end{equation}
where $g_i(Y)=\alpha_i+A_N\sum_{j=1}^N m_{ij}G_{ij} Y_j(\theta_5+\theta_6 d_id_j)$, and hence,
\begin{equation}
  d_{TV}(P_i\bigl(\,\cdot \mid z_{-i}\bigr),P_i\bigl(\,\cdot \mid y_{-i}\bigr))\leq  \frac{1}{4}\vert g_i(y)- g_i(T_ky)\vert  \leq \frac{1}{4}\big\vert A_Nm_{ik}G_{ik}(\theta_5+\theta_6)\big\vert. 
\end{equation}
Thus, a coupling matrix can be given by $J$, with each element $J_{ij}=  \frac{A_N}{4}m_{ij}G_{ij}\big(\vert\theta_5\vert+\vert\theta_6\vert \big) $. First inequality in Eq.\ref{eq:2if} holds by
\begin{equation}
    \Vert J\Vert_{2,2}\leq\sqrt{\Vert J\Vert_{\infty,\infty}\Vert J^T\Vert_{\infty,\infty}}=\Vert J\Vert_{\infty,\infty},
\end{equation}
where the eigenvalue satisfies $\vert \lambda_i(JJ^T) \vert\leq \Vert J J^T\Vert\leq \Vert J\Vert \Vert J^T \Vert  $ for any operator norm, and $J$ is symmetric. Under Assumption \ref{ass:boundspill}, there exists $\alpha \in (0 ,1)$ such that it holds
\begin{equation} 
\Vert J\Vert_{\infty,\infty}=\frac{A_N}{4}\big(\vert\theta_5\vert+\vert\theta_6\vert \big)\max_{i=1,...,N}\sum_{j=1}^Nm_{ij}G_{ij}\leq 1-\alpha.
\end{equation}
The second statement follows by using Eq.\ref{eq:sig},
\begin{equation}
    P_i(1\vert Y_{-i})\leq \Lambda(\vert \theta_0\vert+\vert \theta_1\vert+\overline{X}'\vert \theta_2\vert+\overline{X}'\vert \theta_3\vert+A_N\overline{N}\overline{m}\vert\theta_4\vert+4(1-\alpha))\eqqcolon C_{\alpha}.
\end{equation}
Therefore, $P_i(0\mid Y_{-i})\geq 1-C_{\alpha}.$ In addition, by the symmetry of the logit distribution around the origin, $P_i(1\vert Y_{-i}) \geq 1- C_{\alpha}$ also follows.
\end{proof}



\begin{singlespace}
\bibliographystyle{ecta}  
\bibliography{ref}
\end{singlespace}

\newpage

\begin{center}\LARGE {Supplement to ``Individualized Treatment Allocations in Sequential Network Games''}
\end{center}

\bigskip
\section{Additionl Table}
\begin{table}[h]
\begingroup
\setlength{\tabcolsep}{6pt} 
\renewcommand{\arraystretch}{0.88}
 \begin{adjustwidth}{0cm}{}
 \linespread{1}
\footnotesize
\centering 
\begin{threeparttable}
 \begin{tabular}{@{}lcccccc@{}}
   \hline
   \toprule
    & \multicolumn{6}{c}{\begin{tabular}[c]{@{}c@{}}\textit{Network Size}\end{tabular}}\\
  \cmidrule(lr){2-7}
  \textbf{\textit{Allocation Rule}} & \begin{tabular}[c]{@{}c@{}}$N=5$\end{tabular} & \begin{tabular}[c]{@{}c@{}}$N=7$\end{tabular} & \begin{tabular}[c]{@{}c@{}}$N=9$\end{tabular} & \begin{tabular}[c]{@{}c@{}}$N=11$\end{tabular} & \begin{tabular}[c]{@{}c@{}}$N=13$\end{tabular} & \begin{tabular}[c]{@{}c@{}}$N=15$\end{tabular}\\
  \midrule
  \textit{Village $1$}\\
  &\\
  \textbf{Brute force}&$0.28$ &$0.27$ &$0.27$ &$0.27$ &$0.27$ &$0.28$ \\
  &  ($<0.01$)& ($<0.01$)&($<0.01$)&($<0.01$)& ($<0.01$)&($<0.01$)\\
  \textbf{Brute force with var. approx.}&$0.28$ &$0.27$ &$0.27$ &$0.27$ &$0.27$ &$0.28$ \\
  &  ($<0.01$)& ($<0.01$)&($<0.01$)&($<0.01$)& ($<0.01$)&($<0.01$)\\
  \textbf{Greedy with var. approx.}&$0.28$ &$0.27$ &$0.27$ &$0.27$ &$0.27$ &$0.28$ \\
  &  ($<0.01$)& ($<0.01$)&($<0.01$)&($<0.01$)& ($<0.01$)&($<0.01$)\\
  \textbf{Random}& $0.24 $ &$0.23$ &$0.24$ &$0.23$ &$0.24$ &$0.24$ \\
  &  ($<0.01$)& ($<0.01$)&($<0.01$)&($<0.01$)& ($<0.01$)&($<0.01$)\\
  \textbf{Centrality}& $0.25$ &$0.23$ &$0.24$ &$0.23$ &$0.24$ &$0.24$ \\
  &  ($<0.01$)& ($<0.01$)&($<0.01$)&($<0.01$)& ($<0.01$)&($<0.01$)\\
   \textit{Village $4$}\\
  &\\
  \textbf{Brute force}&$0.13$ &$0.13$ &$0.13$ &$0.12$ &$0.13$ &$0.14$ \\
  &  ($<0.01$)& ($<0.01$)&($<0.01$)&($<0.01$)& ($<0.01$)&($<0.01$)\\
   \textbf{Brute force with var. approx.}&$0.13$ &$0.13$ &$0.13$ &$0.12$ &$0.13$ &$0.14$ \\
  &  ($<0.01$)& ($<0.01$)&($<0.01$)&($<0.01$)& ($<0.01$)&($<0.01$)\\
  \textbf{Greedy with var. approx.}&$0.13$ &$0.13$ &$0.13$ &$0.12$ &$0.13$ &$0.14$ \\
  &  ($<0.01$)& ($<0.01$)&($<0.01$)&($<0.01$)& ($<0.01$)&($<0.01$)\\
  \textbf{Random}& $0.09$ &$0.08$ &$0.08$ &$0.08$ &$0.08$ &$0.09$ \\
  &  ($<0.01$)& ($<0.01$)&($<0.01$)&($<0.01$)& ($<0.01$)&($<0.01$)\\
  \textbf{Centrality}& $0.08$ &$0.08$ &$0.08$ &$0.08$ &$0.08$ &$0.08$ \\
  &  ($<0.01$)& ($<0.01$)&($<0.01$)&($<0.01$)& ($<0.01$)&($<0.01$)\\
\bottomrule
   \end{tabular}
   \end{threeparttable}
   \caption{Comparison of Four Allocation Methods}
\label{table brute}
   \end{adjustwidth}
   \endgroup
\end{table}

\section{Lemma and Proposition}\label{appendixA}
\subsection{Preliminary Lemma}
In this section, we collect various lemmas that we use to prove our main results.

\begin{lemma}\label{lemmatau}
For any $\tau\in\mathbb{R}_+$, there is a finite set of $N\times 1$ vectors $\mathcal{M}(\tau)$ such that
\begin{equation}
    \vert\mathcal{M}(\tau)\vert\leq 2^N,
\end{equation}
and for any $N\times 1$ vector $Y$ with entries in $\{0,1\}$, there exists a $M\in \mathcal{M}(\tau)$ such that
\begin{equation}
    \sum_{i}(Y_i-M_i)^2\leq\tau^2.
\end{equation}
\end{lemma}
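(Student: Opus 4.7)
The plan is to construct $\mathcal{W}(\tau)$ explicitly and verify both requirements directly. Since the hypothesis restricts $M$ to have entries in $\{0,1\}$, the natural candidate is the full discrete cube $\mathcal{W}(\tau) = \{0,1\}^N$, which does not depend on $\tau$. This set is finite with cardinality exactly $2^N$, so the cardinality bound $|\mathcal{W}(\tau)| \leq 2^N$ is satisfied with equality, regardless of how small $\tau$ is chosen.

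For the covering property, given any binary $M \in \{0,1\}^N$, I would simply take $W = M$, which lies in $\mathcal{W}(\tau)$ by construction. Then $\sum_i (M_i - W_i)^2 = 0 \leq \tau^2$ for every $\tau > 0$, completing the verification. So the proof reduces to a two-line argument: exhibit the candidate set and pick $W = M$ as the approximant.

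There is essentially no technical obstacle to the lemma as stated, because requiring $M \in \{0,1\}^N$ collapses the covering question: the trivial net $\{0,1\}^N$ already matches the claimed cardinality bound exactly and yields zero approximation error. The only subtlety I would flag is that if the intended statement were instead $M \in [0,1]^N$ (which would be more consistent with applications to variational approximation, where one covers the space of mean-parameter vectors $\mu^Q$), then a genuine discretization argument would be needed: partition $[0,1]$ at a spacing $\delta \asymp \tau/\sqrt{N}$, take the product grid, and use a volume/packing argument to bound the size. In that case the cardinality bound would depend on $\tau$ and $N$, and the key step would be optimizing $\delta$ against $\tau$.

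Assuming the statement as written, I would therefore present the proof in a single display: define $\mathcal{W}(\tau) := \{0,1\}^N$; observe $|\mathcal{W}(\tau)| = 2^N$; and for arbitrary $M \in \{0,1\}^N$ take $W := M$ so that $\sum_i (M_i - W_i)^2 = 0$. The lemma is best read as a preparatory bookkeeping statement that will be invoked downstream (most likely in the proof of Theorem \ref{theorem2}), where the $2^N$ cardinality budget is spent on a union bound over binary outcome configurations; the substantive work will appear there rather than here.
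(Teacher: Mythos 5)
Your proof is correct and is essentially identical to the paper's: the authors also take $\mathcal{W}(\tau)$ to be the set of vertices of the unit hypercube $\{0,1\}^N$ and choose $W=M$ so that the squared distance is exactly zero. Your remark that the lemma is a trivial bookkeeping device whose $2^N$ budget is spent later (in the covering-number count inside the proof of Theorem \ref{theorem2}) is also accurate.
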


\begin{proof}
    Since all the entries of $Y$ are binary, $0$ or $1$, $Y$ must be a vertex of the N-dimensional unit hypercube $[0,1]^N$. 
    As such, we let $\mathcal{M}$ be the collection of all vertices of the N-dimensional unit hypercube. 
    For any $Y$, we can always find an element in $\mathcal{M}$ such that
    \begin{equation}
        \sum_{i}(Y_i-M_i)^2=0\leq\tau^2.
    \end{equation}
    Then $\vert \mathcal{M}(\tau)\vert=2^N$.
\end{proof}

\begin{lemma}\label{lemmamonoto}
  Suppose Assumptions \ref{assumpundir} to \ref{assptse} hold. 
  Let $R \subset \mathcal{N}$ be a treatment allocation set, $R = \{ i \in \mathcal{N} : d_i = 1 \}$ such that $\mathcal{N}\setminus R \neq \emptyset$. 
  Given $k\in\mathcal{N}\setminus R$, let $\tilde{\mu}_i$, $i=1,\dots,N$, be a solution of the first-order conditions of Eq.\ref{eqmui} when the treatment allocation set is $R \cup \{k\}$ and $\breve{\mu}_i$, $i=1,\dots,N$, be a solution of Eq.\ref{eqmui} when the treatment allocation set is $R$, i.e., 
    \begin{equation}
 \Tilde{\mu}_i=\Lambda\big(\theta_0+\theta_1d_i +X_i'(\theta_2 +\theta_3d_i)+A_N\theta_5 \sum_{j\neq i}m_{ij}G_{ij}\Tilde{\mu}_j+A_N\sum_{\substack{j\neq i\\j\in R}}m_{ij}G_{ij}(\theta_4+\theta_6d_i\Tilde{\mu}_j)+\mathcal{M}_{ik}(\Tilde{\mu})\big), 
    \end{equation}
    \begin{equation}
       \breve{\mu}_i=\Lambda\big(\theta_0+\theta_1 d_i+X_i'(\theta_2 +\theta_3d_i)+A_N \theta_5\sum_{j\neq i}m_{ij}G_{ij}\breve{\mu}_j+A_N\sum_{\substack{j\neq i\\j\in R}}m_{ij}G_{ij}(\theta_4+\theta_6d_i\breve{\mu}_j)\big),
    \end{equation}
where $\mathcal{M}_{ik}: [0,1]^N \to \mathbb{R}_+$ is defined by 
\begin{equation}
    \mathcal{M}_{ik}(\mu)=A_Nm_{ik}G_{ik}(\theta_4+\theta_6d_i\mu_k)+\mathbbm{1}\{i=k\}(X_i'\theta_3+A_N\sum_{j\in R}\theta_6m_{ij}G_{ij}\mu_j).
\end{equation}
Then, $\Tilde{\mu}_i\geq \breve{\mu}_i$ holds for all $i\in \mathcal{N}$ at any $\mathcal{X}\in\mathbb{R}^{N\times k}$ and $G\in\{0,1\}^{N\times N}$.
\end{lemma}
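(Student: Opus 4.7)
The natural approach is a monotone fixed-point iteration. For any treatment set $S \subseteq \mathcal{N}$, let $T^S:[0,1]^N\to[0,1]^N$ denote the coordinate update map whose $i$-th component equals the right-hand side of Eq.\ref{eqmui} with $d_j = \mathbbm{1}\{j\in S\}$; its fixed points are exactly the solutions of the first-order conditions under treatment set $S$. I will establish (a) $T^S$ is coordinatewise non-decreasing in $\mu$, and (b) $T^{R\cup\{k\}}(\mu)\geq T^R(\mu)$ componentwise for every $\mu\in[0,1]^N$. Property (a) is immediate: $\Lambda$ is strictly increasing, and the coefficient of each $\mu_j$ in the argument of $\Lambda$ is $A_N m_{ij} G_{ij}(\theta_5 + \theta_6 d_i d_j)\geq 0$, using Assumptions \ref{assumpundir}, \ref{asssp}, and the non-negativity of $\theta_5,\theta_6$ that underlies the sufficient condition for Assumption \ref{assptse}.

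Property (b) follows by reading off the non-negative increments that are collected in $\mathcal{M}_{ik}(\mu)$ in the lemma statement. For $i\neq k$, the only change when $k$ joins the treated set is that the $j=k$ summand now appears in the neighbor-treatment and joint treatment-choice sums, contributing the non-negative amount $A_N m_{ik} G_{ik}(\theta_4 + \theta_6 d_i \mu_k)$; for $i=k$, the switch of $d_k$ from $0$ to $1$ injects further non-negative terms, in particular $X_k'\theta_3 + A_N\theta_6\sum_{j\in R} m_{kj}G_{kj}\mu_j\geq 0$, where $X_k'\theta_3\geq 0$ relies on the normalization $X_k\in\mathbb{R}_+^K$ from Section \ref{sectionmodel} together with $\theta_3\geq 0$. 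All such increments are non-negative and $\Lambda$ is monotone, so $T^{R\cup\{k\}}(\mu)\geq T^R(\mu)$ coordinatewise.

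Given (a) and (b), set $\mu^{(0)}:=\tilde{\mu}^{\prime}$ and $\mu^{(n+1)}:=T^{R\cup\{k\}}(\mu^{(n)})$. Property (b) together with the fixed-point equation $\tilde{\mu}^{\prime}=T^R(\tilde{\mu}^{\prime})$ gives $\mu^{(1)}\geq\tilde{\mu}^{\prime}=\mu^{(0)}$, and (a) propagates this inductively to $\mu^{(n+1)}\geq\mu^{(n)}$ for all $n$. Since $\{\mu^{(n)}\}$ is monotone and lies in the compact cube $[0,1]^N$, it converges to a fixed point $\mu^{\star}$ of $T^{R\cup\{k\}}$ satisfying $\mu^{\star}\geq\tilde{\mu}^{\prime}$. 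Proposition \ref{prounimax} guarantees that the variational problem under the treatment set $R\cup\{k\}$ has a unique maximizer, which must coincide with $\tilde{\mu}$; hence $\mu^{\star}=\tilde{\mu}$ and $\tilde{\mu}\geq\tilde{\mu}^{\prime}$ coordinatewise.

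The delicate step is this last identification, since in principle $T^{R\cup\{k\}}$ could admit several fixed points while the monotone iteration only selects one of them. I would close the gap by appealing to Proposition \ref{prounimax}, or, if one prefers a self-contained route, by invoking the contraction-mapping refinement in Appendix \ref{appenpropo31} under the additional spillover bound $A_N\overline{m}(|\theta_5|+|\theta_6|)\widebar{N}\leq 4$, which renders the fixed point globally unique and trivializes the identification $\mu^{\star}=\tilde{\mu}$.
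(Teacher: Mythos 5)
Your proposal is correct and follows essentially the same route as the paper: the paper also runs the monotone iteration $\mu^{(0)}=\tilde{\mu}'$, $\mu^{(n+1)}=T^{R\cup\{k\}}(\mu^{(n)})$, uses the non-negativity of $\mathcal{M}_{ik}$ and of the coefficients on $\mu_j$ to get $\mu^{(n+1)}\geq\mu^{(n)}$, and then identifies the limit with $\tilde{\mu}$ via exactly the contraction-mapping argument you offer as your fallback (under the side condition $A_N\overline{m}(\theta_5+\theta_6)\widebar{N}\leq 4$). The one caution is that your primary identification route via Proposition \ref{prounimax} is slightly weaker than you suggest, since uniqueness of the \emph{maximizer} of $\mathcal{A}$ does not by itself preclude multiple solutions of the first-order conditions, so the contraction route (the paper's choice) is the one that actually closes the argument.
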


\begin{proof}
    Let us define $\Tilde{\mu}$ as a vector with elements $\{\Tilde{\mu}_i\}_{i=1}^N$ and $\breve{\mu}$ as a vector with elements $\{\breve{\mu}_i\}_{i=1}^N$. By Assumption \ref{assptse} (i), $\mathcal{M}_{ik} \geq 0$.       
We define 
\begin{equation}
    \Tilde{\mu}_i^1=\Lambda\big(\theta_0+\theta_1d_i +X_i'(\theta_2 +\theta_3d_i)+A_N\theta_5 \sum_{j\neq i}m_{ij}G_{ij}\breve{\mu}_j+A_N\sum_{\substack{j\neq i\\j\in R}}m_{ij}G_{ij}(\theta_4+\theta_6d_i\breve{\mu}_j)+\mathcal{M}_{ik}(\breve{\mu})\big),
\end{equation}
for all $i\in\mathcal{N}$, where $\Tilde{\mu}^1_i\geq \breve{\mu}_i$. 
Then, we use $\Tilde{\mu}^1$ to generate $\Tilde{\mu}^2$,
\begin{equation}
    \Tilde{\mu}_i^2=\Lambda\big(\theta_0+\theta_1d_i +X_i'(\theta_2 +\theta_3d_i)+A_N\theta_5 \sum_{j\neq i}m_{ij}G_{ij}\Tilde{\mu}_j^1+A_N\sum_{\substack{j\neq i\\j\in R}}m_{ij}G_{ij}(\theta_4+\theta_6d_i\Tilde{\mu}_j^1)+\mathcal{M}_{ik}(\Tilde{\mu}^1)\big),
\end{equation}
for all $i\in\mathcal{N}$, where $\Tilde{\mu}^2_i\geq \Tilde{\mu}_i^1$. 
We iterate the above process until it converges.
As shown in the proof of Proposition \ref{prounimax}, this iteration is a contraction mapping, which guarantees convergence to $\Tilde{\mu}$ by Banach Fixed-Point Theorem.\footnote{Banach Fixed-Point Theorem states that if $(X,d)$ is a non-empty complete metric space with a contraction mapping $T: X\rightarrow X$, then $T$ admits a unique fixed-point $x^*$.}
\end{proof}

\begin{lemma}\label{lemmabound}
    Under Assumptions \ref{assumpundir} to \ref{assptse}, an upper bound on the curvature $\xi^{up}$ and a lower bound on the submodularity ratio $\gamma^{low}$ satisfy
    \begin{equation}
        \xi\leq \xi^{up}<1,
    \end{equation}
    \begin{equation}
        \gamma\geq\gamma^{low}>0.
    \end{equation}

\end{lemma}
\begin{proof}    
We first rewrite our objective function in terms of set function ($\mathcal{D}=\{i\in\mathcal{N}:d_i=1\}$)
\begin{equation}
            \begin{split}
                \Tilde{W}(\mathcal{D})&= \sum_{i\in \mathcal{D}}\Lambda\big[\theta_0+\theta_1 +X_i'(\theta_2+\theta_3) +A_N\theta_5 \sum_{\substack{j\neq i\\j\in\mathcal{N}}}m_{ij}G_{ij}\Tilde{\mu}_j+A_N\sum_{\substack{j\neq i\\j\in \mathcal{D}}}m_{ij}G_{ij}(\theta_4+\theta_6\Tilde{\mu}_j)\big]\\&\quad+\sum_{k\in\mathcal{N}\setminus\mathcal{D}}\Lambda\big[\theta_0+X_k'\theta_2+A_N\theta_4\sum_{\ell\in\mathcal{D}}m_{k\ell}G_{k\ell}+A_N\theta_5\sum_{\substack{\ell\neq k\\\ell\in\mathcal{N}}}m_{k\ell}G_{k\ell}\Tilde{\mu}_{\ell}\big].
            \end{split}
        \end{equation}  
\noindent\textbf{Curvature}:\\
\noindent The curvature is defined as the smallest value of $\xi$ such that
\begin{equation}
\Tilde{W}(R\cup \{k\})-\Tilde{W}(R)\geq (1-\xi)[\Tilde{W}(S\cup \{k\})-\Tilde{W}(S)]\quad   \forall S\subseteq R\subset \mathcal{N}, \forall k\in \mathcal{N}\setminus R.
\end{equation}
As a consequence,
\begin{equation}\label{eq2}
        \xi =\max_{S\subseteq R\subset\mathcal{N},k\in\mathcal{N}\setminus R} 1-\frac{\Tilde{W}(R\cup \{k\})-\Tilde{W}(R)}{\Tilde{W}(S\cup \{k\})-\Tilde{W}(S)}.
\end{equation}
Define $\Tilde{W}_i(\mathcal{D})$ as
\begin{equation}
    \begin{split}
        \Tilde{W}_i(\mathcal{D}) &\coloneqq  \mathbbm{1}_{\{i\in \mathcal{D}\}}\Lambda\big[\theta_0+\theta_1 +X_i'(\theta_2+\theta_3) +A_N\theta_5 \sum_{\substack{j\neq i\\j\in\mathcal{N}}}m_{ij}G_{ij}\Tilde{\mu}_j+A_N\sum_{\substack{j\neq i\\j\in \mathcal{D}}}m_{ij}G_{ij}(\theta_4+\theta_6\Tilde{\mu}_j)\big]\\&\quad+\mathbbm{1}_{\{i\in\mathcal{N}\setminus\mathcal{D}\}}\Lambda\big[\theta_0+X_i'\theta_2+A_N\theta_4\sum_{\ell\in\mathcal{D}}m_{i\ell}G_{i\ell}+A_N\theta_5\sum_{\substack{\ell\neq i\\\ell\in\mathcal{N}}}m_{i\ell}G_{i\ell}\Tilde{\mu}_{\ell}\big].
    \end{split}
\end{equation}
We can upper bound the denominator in Eq.\ref{eq2} by
\begin{equation}
    \begin{split}
        \Tilde{W}(S\cup \{k\})-\Tilde{W}(S)&=\sum_{i=1}^N \Tilde{W}_i(S\cup \{k\})-\Tilde{W}_i(S)\leq N,
    \end{split}
\end{equation}
We derive a lower bound of the numerator in Eq.\ref{eq2} in what follows. Given $R \subset \mathcal{N}$ and $k \in \mathcal{N} \setminus R$, let $\Tilde{\mu}$ denote the solution of Eq.\ref{eqva} with $d_i=1$ for $i\in R\cup \{k\}$ and $\breve{\mu}$ denote the solution of Eq.\ref{eqva} with $d_i=1$ for $i\in R$. 
To derive a lower bound for $\tilde{W}(R\cup \{k\})-\tilde{W}(R)$, consider

\begin{equation}
    \overline{\phi}_i(\mu)\coloneqq\theta_0+\theta_1 +X_i'(\theta_2+\theta_3) +A_N\theta_5 \sum_{j\neq i}m_{ij}G_{ij}\mu_j+A_N\sum_{\substack{j\neq i\\j\in R}}m_{ij}G_{ij}(\theta_4+\theta_6\mu_j)+A_Nm_{ik}G_{ik}(\theta_4+\theta_6\mu_k),
\end{equation}

\begin{equation}
    \underline{\phi}_i(\mu)\coloneqq\theta_0+\theta_1 +X_i'(\theta_2+\theta_3)+A_N\theta_5 \sum_{j\neq i}m_{ij}G_{ij}\mu_j+A_N\sum_{\substack{j\neq i\\j\in R}}m_{ij}G_{ij}(\theta_4+\theta_6\mu_j),
\end{equation}

\begin{equation}
    \overline{\varphi}_c(\mu)\coloneqq\theta_0+X_c'\theta_2+A_N\theta_5\sum_{b\neq c}G_{cb}m_{bc}\mu_b+A_N\theta_4\sum_{\substack{z\in R}}m_{cz}G_{cz}+A_N\theta_4m_{ck}G_{ck},
\end{equation}

\begin{equation}
    \underline{\varphi}_c(\mu)\coloneqq\theta_0+X_c'\theta_2+A_N\theta_5\sum_{b\neq c}G_{cb}m_{bc}\mu_b+A_N\theta_4\sum_{\substack{z\in R}}m_{cz}G_{cz},
\end{equation}

\begin{equation}
    \overline{\psi}_k(\mu)\coloneqq\theta_0+\theta_1+X_k'(\theta_2+\theta_3) +A_N\theta_5\sum_{e\neq k}m_{ke}G_{ke}\mu_{e}+A_N\sum_{\substack{l\in R}}m_{kl}G_{kl}(\theta_4+\theta_6\mu_l),
\end{equation}

\begin{equation}
    \underline{\psi}_k(\mu)\coloneqq\theta_0+X_k'\theta_2+A_N\theta_5\sum_{e\neq k}m_{ke}G_{ke}\mu_{e}+A_N\theta_4\sum_{l\in R}m_{kl}G_{kl}.
\end{equation}
Then, we have
\begin{equation}\label{eqapp91}
    \begin{split}
        &\quad \Tilde{W}(R\cup \{k\})-\Tilde{W}(R)\\&=\sum_{i\in R}\left[\Lambda\left(\overline{\phi}_i(\Tilde{\mu})\right)-\Lambda\left(\underline{\phi}_i(\breve{\mu})\right)\right]
        +\sum_{c\in \mathcal{N}\setminus R\cup\{k\}}\left[\Lambda \Big(\overline{\varphi}_c(\Tilde{\mu})\Big)-\Lambda\left(\underline{\varphi}_c(\breve{\mu})\right)\right]+\Lambda\left(\overline{\psi}_k(\Tilde{\mu}) \right)-\Lambda\left(\underline{\psi}_k(\breve{\mu})\right).
    \end{split}
\end{equation}

By Lemma \ref{lemmamonoto}, we can bound the above equation from below by replacing $\breve{\mu}_i  $ with $\Tilde{\mu}_i$ for all $i\in\mathcal{N}$. 
Then, Eq.\ref{eqapp91} is bounded by

\begin{equation}
    \begin{split}
         &\Tilde{W}(R\cup \{k\})-\Tilde{W}(R)\\
         \geq &\sum_{i\in R}\left[\Lambda\left(\overline{\phi}_i(\Tilde{\mu})\right)-\Lambda\left(\underline{\phi}_i(\Tilde{\mu})\right)\right]
        +\sum_{c\in \mathcal{N}\setminus R\cup\{k\}}\left[\Lambda \Big(\overline{\varphi}_c(\Tilde{\mu})\Big)-\Lambda\left(\underline{\varphi}_c(\Tilde{\mu})\right)\right]+\Lambda\left(\overline{\psi}_k(\Tilde{\mu}) \right)-\Lambda\left(\underline{\psi}_k(\Tilde{\mu})\right).
           \end{split}
\end{equation}

Using the mean value theorem, and letting $\phi_i\in\left(\underline{\phi}_i(\Tilde{\mu}),\overline{\phi}_i(\Tilde{\mu})\right)$, $\varphi_c\in\left(\underline{\varphi}_c(\Tilde{\mu}),\overline{\varphi}_c(\Tilde{\mu})\right)$, $\psi_k\in\left(\underline{\psi}_k(\Tilde{\mu}),\overline{\psi}_k(\Tilde{\mu})\right)$, we have
\begin{equation}\label{eqrk}
    \begin{split}
        &\quad \Tilde{W}(R\cup \{k\})-\Tilde{W}(R)\\&\geq \sum_{i\in R}\Lambda'(\phi_i)A_N(\theta_4+\theta_6\Tilde{\mu}_k)m_{ik}G_{ik}+\sum_{c\in \mathcal{N}\setminus R\cup\{k\}}\Lambda'(\varphi_c)A_N\theta_4m_{ck}G_{ck}\\&\quad+\Lambda'(\psi_k)(\theta_1+X_k'\theta_3+\theta_6A_N\sum_{l\in R}m_{kl}G_{kl}\Tilde{\mu}_l)\\
        &\geq \sum_{i\in R}\Lambda'(\phi_i)A_N\theta_4m_{ik}G_{ik}+\sum_{c\in \mathcal{N}\setminus R\cup\{k\}}\Lambda'(\varphi_c)A_N\theta_4m_{ck}G_{ck}+\Lambda'(\psi_k)(\theta_1+X_k'\theta_3+\theta_6A_N\sum_{l\in R}m_{kl}G_{kl}\Tilde{\mu}_l)\\
        &\geq \sum_{i\in R}\underline{\Lambda'(\phi_i)}A_N\theta_4m_{ik}G_{ik}+\sum_{c\in \mathcal{N}\setminus R\cup\{k\}}\underline{\Lambda'(\varphi_c)}A_N\theta_4m_{ck}G_{ck}+\underline{\Lambda'(\psi_k)}(\theta_1+X_k'\theta_3+\theta_6A_N\sum_{l\in R}m_{kl}G_{kl}\Tilde{\mu}_l)\\
        &\geq \underline{\Lambda'}\cdot \big(\sum_{i\neq k}A_N\theta_4m_{ik}G_{ik}+\theta_1+X_k'\theta_3+\theta_6A_N\sum_{l\in R}m_{kl}G_{kl}\Tilde{\mu}_l\big)\\
        &\geq \underline{\Lambda'}\cdot \big(\sum_{i\neq k}A_N\theta_4m_{ik}G_{ik}+\theta_1+X_k'\theta_3\big),
    \end{split}
\end{equation}
where 
\begin{equation}
    \underline{\Lambda'(\phi_i)}=\min\{\Lambda'(\underline{\phi}_i),\Lambda'(\overline{\phi}_i)\},
\end{equation}
\begin{equation}
    \underline{\Lambda'(\varphi_c)}=\min\{\Lambda'(\underline{\varphi}_c),\Lambda'(\overline{\varphi}_c)\},
\end{equation}
\begin{equation}
    \underline{\Lambda'(\psi_k)}=\min\{\Lambda'(\underline{\psi}_k),\Lambda'(\overline{\psi}_k)\},
\end{equation}
and noting that $\phi_i,\varphi_c,\psi_k\in\left[\theta_0+ \underline{X'\theta_2},\;\theta_0+\theta_1+\overline{X'\theta_2}+\overline{X'\theta_3}+A_N(\theta_4+\theta_5+\theta_6) \overline{m}\widebar{N}\right]$ for all $i,k,c\in\mathcal{N}$ with $\underline{X'\theta_2}=\min_{i\in \mathcal{N}} X_i'\theta_2$, $\overline{X'\theta_2}=\max_{i\in \mathcal{N}} X_i'\theta_2$, $ \overline{X'\theta_3}=\max_{i\in\mathcal{N}} X_i'\theta_3$,
\begin{equation}
    \underline{\Lambda'}=\min\{\Lambda'\big(\theta_0+ \underline{X'\theta_2}\big),\Lambda'\big(\theta_0+\theta_1+\overline{X'\theta_2}+\overline{X'\theta_3}+A_N(\theta_4+\theta_5+\theta_6) \overline{m}\widebar{N}\big)\}.
\end{equation}
Then,
\begin{equation}
    \xi^{up}=\max_{k\in\mathcal{N}} 1-\frac{1}{N}\underline{\Lambda'}\cdot\big(\sum_{i\neq k}A_N\theta_4m_{ik}G_{ik}+\theta_1+X_k'\theta_3\big).
\end{equation}
Under Assumptions \ref{assdp} and \ref{assptse}, for any $k\in\mathcal{N}$,
\begin{equation}\label{eq:weakcondi}
    \begin{split}  \sum_{i\neq k}A_N\theta_4m_{ik}G_{ik}+\theta_1+X_k'\theta_3>0.
    \end{split}
\end{equation}
In addition, we know that $\Lambda(x)$ is a logistic function, and so
\begin{equation}
    \begin{split}
        &\Lambda'\big(\theta_0+ \underline{X'\theta_2}\big)\in (0,0.25],\\
        &\Lambda'\big(\theta_0+\theta_1+\overline{X'\theta_2}+\overline{X'\theta_3}+A_N(\theta_3+\theta_4+\theta_5) \overline{m}\widebar{N}\big)\in(0,0.25].\\
    \end{split}
\end{equation}
Hence,
\begin{equation}
    \begin{split}
    \underline{\Lambda'}\in(0,0.25).
    \end{split}
    \end{equation}
By assuming that $\sum_{i\neq k}A_N\theta_4m_{ik}G_{ik}+\theta_1+X_k'\theta_3\leq4N$, we obtain
    \begin{equation}
        \begin{split}
           \underline{\Lambda'}\cdot\big(\sum_{i\neq k}A_N\theta_4m_{ik}G_{ik}+\theta_1+X_k'\theta_3\big)<N.
        \end{split}
    \end{equation}
We conclude that $\xi^{up}<1$.

\hfil\\
\noindent\textbf{Submodularity Ratio}:

The submodularity ratio of a non-negative set function is the largest $\gamma$ such that
\begin{equation}
    \sum_{k\in R\setminus S} [\Tilde{W}(S\cup \{k\})-\Tilde{W}(S)]\geq \gamma [\Tilde{W}(S\cup R)-\Tilde{W}(S)], \quad\forall S,R\subseteq\mathcal{N}.
\end{equation}
As a consequence,
\begin{equation}\label{eq20}
    \begin{split}
        \gamma&= \min_{S\neq R} \frac{\sum_{k\in R\setminus S} [\Tilde{W}(S\cup \{k\})-\Tilde{W}(S)]}{\Tilde{W}(S\cup R)-\Tilde{W}(S)}\\
        &= \min_{S\neq R}\frac{\sum_{k\in R\setminus S} \sum_i[\Tilde{W}_i(S\cup \{k\})-\Tilde{W}_i(S)]}{\sum_i [\Tilde{W}_i(S\cup R)-\Tilde{W}_i(S)]}
    \end{split}
\end{equation}
We can upper bound the denominator in Eq.\ref{eq20} by
\begin{equation}
        \sum_{i=1}^N [\Tilde{W}_i(S\cup R)-\Tilde{W}_i(S)]\leq N.
\end{equation}
We first rewrite $ \sum_{k\in R\setminus S} [\Tilde{W}(S\cup \{k\})-\Tilde{W}(S)]$ with a view to deriving a lower bound. 
\begin{equation}
    \begin{split}
        &\quad\sum_{k\in R\setminus S} \sum_i[\Tilde{W}_i(S\cup \{k\})-\Tilde{W}_i(S)]\\&=\sum_{k\in R\setminus S} \Bigg[\sum_{i\in S}\Big[\Lambda\big(\theta_0+\theta_1 +X_i'(\theta_2+\theta_3) +A_N\theta_5 \sum_{j\neq i}m_{ij}G_{ij}\Tilde{\mu}_j+A_N\sum_{\substack{j\neq i\\j\in S}}m_{ij}G_{ij}(\theta_4+\theta_6\Tilde{\mu}_j)\\
        &\quad+ A_Nm_{ik}G_{ik}(\theta_4+\theta_6\Tilde{\mu}_k)\big)-\Lambda\big(\theta_0+\theta_1 +X_i'(\theta_2+\theta_3)+A_N\theta_5 \sum_{j\neq i}m_{ij}G_{ij}\Tilde{\mu}_j'+A_N\sum_{\substack{j\neq i\\j\in S}}m_{ij}G_{ij}(\theta_4+\theta_6\Tilde{\mu}_j')\big)\Big]\\
        &\quad+\sum_{c\in \mathcal{N}\setminus S\cup\{k\}}\Big[\Lambda\big(\theta_0+X_c'\theta_2+A_N\theta_5\sum_{b\neq c}G_{cb}m_{bc}\Tilde{\mu}_b+A_N\theta_4\sum_{\substack{z\in S}}m_{cz}G_{cz}+A_N\theta_4m_{ck}G_{ck}\big)\\ &\quad-\Lambda(\theta_0+X_c'\theta_2+A_N\theta_5\sum_{b\neq c}G_{cb}m_{bc}\Tilde{\mu}_b'+A_N\theta_4\sum_{\substack{z\in S}}m_{cz}G_{cz})\Big]\\
        &\quad+\Lambda(\theta_0+\theta_1+X_k'(\theta_2+\theta_3) +A_N\theta_5\sum_{e\neq k}m_{ke}G_{ke}\Tilde{\mu}_{e}+A_N\sum_{\substack{l\in S}}m_{kl}G_{kl}(\theta_4+\theta_6\Tilde{\mu}_l))\\
        &\quad-\Lambda(\theta_0+X_k'\theta_2+A_N\theta_5\sum_{e\neq k}m_{ke}G_{ke}\Tilde{\mu}_{e}'+A_N\theta_4\sum_{l\in S}m_{kl}G_{kl})\Bigg].
    \end{split}
\end{equation}
We can lower bound the numerator in Eq.\ref{eq20} by
\begin{equation}
    \begin{split}
        \sum_{k\in R\setminus S} \sum_i[\Tilde{W}_i(S\cup \{k\})-\Tilde{W}_i(S)]&\geq  \sum_{k\in R\setminus S} \underline{\Lambda'}\cdot\big(A_N\theta_4\sum_{l\in\mathcal{S}}G_{kl}m_{kl}+\theta_1+X_k'\theta_3\big)\\
        &(\text{By Eq.\ref{eqrk}})\\
        &\geq \underline{\Lambda'}\cdot\big(A_N\theta_4\sum_{l\in\mathcal{S}}G_{kl}m_{kl}+\theta_1+X_k'\theta_3\big)
    \end{split}
\end{equation}
Defining:
\begin{equation}
\gamma^{low} =    \min_{k\in\mathcal{N},\mathcal{S}\subseteq\mathcal{N}} \frac{1}{N}\underline{\Lambda'}\cdot\big(A_N\theta_4\sum_{l\in\mathcal{S}}G_{kl}m_{kl}+\theta_1+X_k'\theta_3\big),
\end{equation}
and the submodularity ratio can be bounded from below by
\begin{equation}
   \gamma\geq  \gamma^{low}=1-\xi^{up}>0.
\end{equation}
\end{proof}

\subsection{Proof of Proposition \ref{propoten}}\label{appenlemma1} 
\begin{proof}
    A potential function is a function $\Phi:Y\rightarrow\mathbb{R}$ such that
     \begin{equation}
        \Phi(y_i=1,y_{-i},\mathcal{X},D,G)-\Phi(y_i=0,y_{-i},\mathcal{X},D,G)=U_i(y_i=1,y_{-i},\mathcal{X},D,G)-U_i(y_i=0,y_{-i},\mathcal{X},D,G).
    \end{equation}
    We have for any $i\in\mathcal{N}$,
    \begin{equation}
        \begin{split}
            &\quad\Phi(y_i=1,y_{-i},\mathcal{X},D,G)-\Phi(y_i=0,y_{-i},\mathcal{X},D,G)\\&=\alpha_i+ \sum_{j\in\mathcal{N}_i} \beta_{ij} y_j\\
            &=U_i(y_i=1,y_{-i},\mathcal{X},D,G)-U_i(y_i=0,y_{-i},\mathcal{X},D,G),
        \end{split}
    \end{equation}
    where the first equality holds since $j\in\mathcal{N}_i$ implies $i\in\mathcal{N}_j$, and $\beta_{ij}=\beta_{ji}$. Therefore, $\Phi$ is the potential of our interacted decision game.
 
\end{proof}

\begin{subsection}{Proof of Proposition \ref{prounimax}}\label{appenum}
\begin{proof}
    We first show that any maximizer is a solution of the first conditions by checking the following two conditions are satisfied.
    \begin{itemize}
        \item[1.] The objective function is continuous and differentiable in the interior.
        \item[2.] The boundary point cannot be a global optimum.
    \end{itemize}
Since our objective function is the sum of linear terms, quadratic functions and logarithmic functions, the first condition is trivially satisfied. 
To check the second condition, we need to verify that the derivative is positive at $\mu^Q_i=0$ and is negative at $\mu^Q_i=1$. 
The derivative is:
    \begin{equation}
        \begin{split}
            \frac{\partial }{\partial \mu^Q_i}\mathcal{A}(\mu^Q,\mathcal{X},D,G)&=\theta_0+\theta_1d_i+X_i'\theta_2+X_i'\theta_3d_i+A_N\sum_{j=1}^N\theta_4m_{ij}G_{ij}d_j\\&\quad+A_N\sum_{j=1}^Nm_{ij}G_{ij}(\theta_5+\theta_6d_id_j)\mu^Q_j-\log(\mu^Q_i)+\log(1-\mu^Q_i).
        \end{split}
    \end{equation}
    When $\mu^Q_i=0$,
    \begin{equation}
        -\log(\mu^Q_i)+\log(1-\mu^Q_i)=-\log(0)+\log(1)=+\infty.
    \end{equation}
    When $\mu^Q_i=1$,
    \begin{equation}
        -\log(\mu^Q_i)+\log(1-\mu^Q_i)=-\log(1)+\log(0)=-\infty.
    \end{equation}
    Since all of the elements in $\theta_0+\theta_1d_i+X_i'\theta_2+X_i'\theta_3d_i+A_N\sum_{j=1}^N\theta_4m_{ij}G_{ij}d_j+A_N\sum_{j=1}^Nm_{ij}G_{ij}(\theta_5+\theta_6d_id_j)\mu^Q_j$ are bounded,
    \begin{equation}
        \frac{\partial }{\partial \mu^Q_i}\mathcal{A}(\mu^Q,\mathcal{X},D,G)|_{\mu^Q_i=0}=+\infty,
    \end{equation}
    and 
    \begin{equation}
        \frac{\partial }{\partial \mu^Q_i}\mathcal{A}(\mu^Q,\mathcal{X},D,G)|_{\mu^Q_i=1}=-\infty.
    \end{equation}
    Away from the boundary, the objective function increases. 
    A global optimum therefore has to be in the interior, and by differentiability of the objective function, the first-order condition has to be satisfied at the optimum.

    Next, we apply the Banach fixed-point theorem to show that the solution of the first-order conditions is unique. 
    For this goal, we focus on the iteration procedure in Algorithm \ref{almu} and show it is a \textit{contraction mapping} for all $i\in \mathcal{N}$, for all $\{d_i\}_{i=1}^N\in \{0,1\}^{N}$, for all $\mathcal{X}\in\mathbb{R}^{N\times k}$, and for all $G\in\{0,1\}^{N\times N}$.    
    Recall the iteration in Algorithm \ref{almu}:
    \begin{equation}
           \Tilde{\mu}_i^{t+1}=\Lambda \Big[\theta_0+\theta_1 d_i+X_i'(\theta_2 +\theta_3d_i)+A_N\theta_4\sum\limits_{j\neq i}m_{ij}G_{ij}d_j+A_N\sum\limits_{j\neq i}m_{ij}G_{ij}(\theta_5+\theta_6d_id_j)\Tilde{\mu}_j^{t}\Big].
    \end{equation}
We denote this iteration process as $\{\Tilde{\mu}^{t}\}_{t=1}^\mathcal{T}$ and show the above mapping $T:[0,1]^{N} \rightarrow [0,1]^{N}$ is a contraction mapping. To prove the above iteration is a contraction mapping, we use $\ell_1$-distance. For any $t\geq 1$,
\begin{equation}
    d(T(\Tilde{\mu}^{t}),T(\Tilde{\mu}^{s}))=\sum_{i=1}^N\big\vert\Tilde{\mu}_i^{t+1}-\Tilde{\mu}_i^{s+1}\big\vert, 
\end{equation}

\begin{equation}
    d(\Tilde{\mu}^t,\Tilde{\mu}^{s}) = \sum_{i=1}^N\big\vert\Tilde{\mu}_i^t-\Tilde{\mu}_i^{s}\big\vert.
\end{equation}
First, since $\Lambda(\cdot)$ is a sigmoid function, its largest slope is 0.25, implying that 
\begin{equation}
    \begin{split}
        \left| \Tilde{\mu}^{t+1}_i-\Tilde{\mu}^{s+1}_i \right| &\leq 0.25 \left| A_N\sum_{j\neq i}(\theta_5+\theta_6d_id_j)m_{ij}G_{ij}(\Tilde{\mu}_j^t-\Tilde{\mu}_j^{s}) \right| \\
        &(\text{By Multivariate Mean Value Theorem})\\
        &\leq 0.25 A_N(\vert\theta_5\vert+\vert\theta_6\vert)\sum_{j\neq i}m_{ij}G_{ij}\vert\Tilde{\mu}_j^t-\Tilde{\mu}_j^{s}\big\vert.
    \end{split}
    \end{equation}
    Therefore,
    \begin{equation}
        \begin{split}
         \big\vert\Tilde{\mu}^{t+1}_i-\Tilde{\mu}^{s+1}_i\big\vert&\leq\frac{ A_N(\vert\theta_5\vert+\vert\theta_6\vert)}{4}\sum_{j\neq i}m_{ij}G_{ij}\vert\Tilde{\mu}_j^t-\Tilde{\mu}_j^{s}\big\vert.
        \end{split}
    \end{equation}
    Hence,
    \begin{equation}
        \begin{split}
          \sum_{i=1}^N\big\vert\Tilde{\mu}^{t+1}_i-\Tilde{\mu}^{s+1}_i\big\vert&\leq  \frac{ A_N(\vert\theta_5\vert+\vert\theta_6\vert)}{4}\sum_{i=1}^N\sum_{j\neq i}m_{ij}G_{ij}\vert\Tilde{\mu}_j^t-\Tilde{\mu}_j^{s}\big\vert\\
          &\leq \frac{ A_N(\vert\theta_5\vert+\vert\theta_6\vert)}{4}\sum_{i=1}^N\big\vert\Tilde{\mu}_i^t-\Tilde{\mu}_i^{s}\big\vert\max_{i\in\mathcal{N}}\sum_{j\neq i}m_{ij}G_{ij}.
        \end{split}
    \end{equation}
    where this inequality follows by the next inequality: for arbitrary $(b_j \in \mathbb{R}_+: j=1, \dots, N)$ and $(a_{ij} \in \mathbb{R}: 1 \leq i,j \leq N)$,
\begin{align}
\sum_{i=1}^N \sum_{j \neq i} a_{ij}b_j = \sum_{i=1}^N \sum_{j \neq i} a_{ji} b_i = \sum_{i=1}^N b_i \left( \sum_{j \neq i} a_{ji} \right) \leq \sum_{i=1} b_i \max_{1 \leq i \leq N} \left( \sum_{j \neq i} a_{ji} \right),
\end{align}
and the symmetry of $m_{ij}G_{ij}$.

    Therefore, under Assumption \ref{ass:boundspill}, $T$ is a contraction mapping. In addition, since $\Tilde{\mu}^t\in[0,1]$ for all $t\geq 1$, the metric space $(\Tilde{\mu},d)$ is a complete metric space. By Banach fixed-point theorem, the solution of $\Tilde{\mu} = T(\Tilde{\mu})$ is unique and $\{\Tilde{\mu}',\Tilde{\mu}^1,\Tilde{\mu}^2,...\}$ converges to a unique fixed point.
\end{proof}
\end{subsection}

\section{Theorem}\label{appendixC}
\subsection{Proof of Theorem \ref{theorem2}}\label{apptheo31}

We first state Theorem \ref{theorem16} that we use to prove Theorem \ref{theorem2}.
\begin{theorem}\citep[\S Theorem 1.6]{chatterjee2016nonlinear}\label{theorem16}
Suppose that $f : [0, 1]^N \rightarrow\mathbb{R}$ is twice continuously differentiable in $(0, 1)^N$, so that $f$ and all of its first- and second-order derivatives extend continuously to the boundary. 
Let $\lVert f\rVert$ denote the supremum norm of $f : [0, 1]^N \rightarrow\mathbb{R}$. 
For each $i$ and $j$, denote
\begin{equation}
    f_i:=\frac{\partial f}{\partial x_i},\quad f_{ij}:=\frac{\partial^2 f}{\partial x_i\partial x_j},
\end{equation}
and let
\begin{equation}
    a:=\lVert f\rVert, \quad b_i:=\lVert f_{i}\rVert,\quad c_{ij}:=\lVert f_{ij}\rVert.
\end{equation}
Given $\epsilon>0$, $\mathcal{M}(\epsilon)$ is a finite subset of $\mathbb{R}^N$ such that for any
$\Tilde{\mu}\in \{0, 1\}^N$, there exists $\eta=(\eta_1,...,\eta_N)\in\mathcal{M}(\epsilon)$ such that
\begin{equation}
 \sum_{i}\Big(\frac{\partial f(\Tilde{\mu})}{\partial\Tilde{\mu}_{i}}-\eta_i\Big)^2\leq N\epsilon^2.
\end{equation}
Let us define for any $\Tilde{\mu}=(\Tilde{\mu}_1,...,\Tilde{\mu}_N)\in[0,1]^N$,
\begin{equation}
    I(\Tilde{\mu}) = \sum_{i=1}^N\left[\Tilde{\mu}_i\log \Tilde{\mu}_i+(1-\Tilde{\mu}_i)\log(1-\Tilde{\mu}_i)\right].
\end{equation}
Let us define
\begin{equation}
    F\coloneqq\log\sum_{\mu\in\{0,1\}^N}\exp(f(\mu)).
\end{equation}
Then for an $\epsilon>0,$
\begin{equation}
    F\leq \sup_{\Tilde{\mu}\in[0,1]^N}(f(\Tilde{\mu})-I(\Tilde{\mu}))+\textit{complexity term} +\textit{smoothness term}.
\end{equation}
where
\begin{equation}
    \textit{complexity term} = \frac{1}{4}\Big(n\sum_{i=1}^N b_i^2\Big)^{1/2}\epsilon+3N\epsilon+\log\vert\mathcal{M}(\epsilon) \vert,
\end{equation}
and
\begin{equation}
 \begin{split}
     \textit{smoothness term} &= 4\Big(\sum_{i=1}^N(ac_{ii}+b_i^2)+\frac{1}{4}\sum_{i,j=1}^N (ac_{ij}^2+b_ib_jc_{ij}+4b_ic_{ij})\Big)^{1/2}\\ 
     &\quad+\frac{1}{4}\Big(\sum_{i=1}^N b_i^2\Big)^{1/2}\Big(\sum_{i=1}^N c_{ii}^2\Big)^{1/2}+3\sum_{i=1}^Nc_{ii}+\log 2.  
 \end{split} 
\end{equation}
\end{theorem}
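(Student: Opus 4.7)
The plan is to prove the stated two-sided bound on $F := \log \sum_{x \in \{0,1\}^N} e^{f(x)}$, the log-partition function attached to the potential $f$ on the hypercube, using the Gibbs variational principle for the lower bound and a gradient-discretisation (non-linear large deviations) argument for the upper bound. In both directions the target value is the mean-field optimum $\sup_{\tilde\mu \in [0,1]^N}\bigl(f(\tilde\mu) - I(\tilde\mu)\bigr)$, and the corrections must be tracked finely enough to reproduce the exact combination of $a$, $b_i$, $c_{ij}$, $\epsilon$ and $|\mathcal{M}(\epsilon)|$ in the theorem.

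For the lower bound I would start from the Donsker--Varadhan duality
$$F \;=\; N\log 2 \;+\; \sup_{Q}\bigl\{\mathbb{E}_{Q} f(X) - H(Q\|U)\bigr\}$$
(with $U$ the uniform measure on $\{0,1\}^N$) and substitute a product Bernoulli $Q_{\tilde\mu}$, for which $H(Q_{\tilde\mu}\|U)-N\log 2 = I(\tilde\mu)$. A second-order Taylor expansion of $f$ about $\tilde\mu$, controlled by the Hessian norms $c_{ij}$, yields $|\mathbb{E}_{Q_{\tilde\mu}}f(X) - f(\tilde\mu)| \le \tfrac{1}{2}\sum_i c_{ii}\tilde\mu_i(1-\tilde\mu_i)$, which is absorbed into the smoothness remainder. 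Maximising over $\tilde\mu$ delivers $F \ge \sup(f-I) - \text{smoothness}$, i.e.\ the informal ``lower-order terms'' appearing in the definition of $F$.

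For the upper bound, the heart of the argument, I would (i) slice the hypercube according to the discretised gradient, writing $Z = \sum_{\eta \in \mathcal{M}(\epsilon)} \sum_{x \in A_\eta} e^{f(x)}$ with $A_\eta := \{x : \sum_i (\partial_i f(x) - \eta_i)^2 \le N\epsilon^2\}$; (ii) inside each slice linearise $f(x) \approx f(\tilde\mu_\eta) + \eta \cdot (x - \tilde\mu_\eta)$ about the conjugate point $\tilde\mu_\eta := \Lambda(\eta)$ (the coordinate-wise Legendre dual of $I$), with quadratic remainder controlled by the Hessian bounds and the gradient-discretisation slack $\epsilon$; (iii) apply the product identity $\sum_{x \in \{0,1\}^N} e^{\eta\cdot x} = \exp\{\sup_{\tilde\mu}[\eta\cdot\tilde\mu - I(\tilde\mu)] + N\log 2\}$ to reduce each slice to a mean-field expression; and (iv) take logarithms and union-bound over the $|\mathcal{M}(\epsilon)|$ slices, matching $\sup_\eta\{f(\tilde\mu_\eta) - I(\tilde\mu_\eta)\}$ with $\sup_{\tilde\mu}\{f(\tilde\mu) - I(\tilde\mu)\}$ up to an $\epsilon$-sized adjustment via the gradient Lipschitz bound $\|\nabla f\|_2 \le (\sum_i b_i^2)^{1/2}$.

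The hard part is matching the cross-errors generated in step (ii) to the precise quantity $\sum_i (ac_{ii}+b_i^2) + \tfrac14\sum_{i,j}(ac_{ij}^2 + b_ib_jc_{ij} + 4b_ic_{ij})$ appearing under the square root of the smoothness term. The discretisation error trades $\epsilon \cdot \|x - \tilde\mu_\eta\|$ against the Taylor remainder $\tfrac12 \sum_{i,j} c_{ij} |x_i - \tilde\mu_{\eta,i}||x_j - \tilde\mu_{\eta,j}|$, and both must be estimated under the product tilt proportional to $e^{\eta\cdot x}$ on each slice. I would close this via an entropy-method (Herbst) argument producing joint subgaussian concentration for $\|x - \tilde\mu_\eta\|^2$ and $\nabla f(x) - \eta$, so that the Laplace transform of the quadratic remainder integrates to the stated expression; the additive $\tfrac14(N\sum_i b_i^2)^{1/2}\epsilon + 3N\epsilon + \log|\mathcal{M}(\epsilon)|$ part is then pure union-bound/complexity overhead, and the factors involving $\log 2$ and the constant $3$ are bookkeeping from extending continuously to the closed cube $[0,1]^N$. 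The parameter $\epsilon > 0$ remains free in the statement so that downstream applications can balance complexity against smoothness.
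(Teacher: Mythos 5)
There is an important mismatch of expectations here: the paper does not prove this statement at all. It is presented verbatim as Theorem 1.6 of \citet{chatterjee2016nonlinear} (``for completeness, we present the statement of\dots''), and is used as an imported black box in the proof of the paper's Theorem 3.2. So there is no ``paper proof'' to compare against; what you have written is an attempt to reprove Chatterjee and Dembo's nonlinear large deviations bound from scratch. On that standalone basis, your architecture is the right one and matches the original source: you correctly identify that $F$ is the log-partition function $\log\sum_{x\in\{0,1\}^N}e^{f(x)}$ (which the paper's paraphrase obscures), the lower bound is the classical naive mean-field/Gibbs variational inequality restricted to product measures, and the upper bound proceeds by slicing the cube according to the discretised gradient, linearising about the conjugate points $\Lambda(\eta)$, using the product identity for $\sum_x e^{\eta\cdot x}$, and union-bounding over $\lvert\mathcal{M}(\epsilon)\rvert$ slices. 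Steps (i)--(iv) are essentially the skeleton of Chatterjee--Dembo's proof of their Theorem 1.1.

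The genuine gap is the one you yourself flag and then defer: producing the exact smoothness term $4\bigl(\sum_i(ac_{ii}+b_i^2)+\tfrac14\sum_{i,j}(ac_{ij}^2+b_ib_jc_{ij}+4b_ic_{ij})\bigr)^{1/2}+\cdots$ is the entire technical content of the theorem, and ``close this via an entropy-method (Herbst) argument producing joint subgaussian concentration'' is not a proof of it. A crude Cauchy--Schwarz on the Taylor remainder inside each slice does not yield these constants (note in particular the appearance of $a=\lVert f\rVert$ multiplying $c_{ij}^2$, which signals a weighting by the Gibbs measure itself rather than a uniform bound), and the Herbst/log-Sobolev route you propose is not the mechanism used in the original argument, nor is it obvious it would reproduce the stated expression rather than some other subgaussian-flavoured bound. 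As a roadmap your proposal is sound; as a proof it stops exactly where the difficulty begins. For the purposes of this paper, the appropriate resolution is simply to cite the result, as the authors do, rather than to reprove it.
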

\begin{proof}[Proof of Theorem \ref{theorem2}]
Define $f:[0,1]^{N}\rightarrow\mathbb{R}$ as:
\begin{equation}
    f(\Tilde{\mu})=\sum_{i}(\theta_0+\theta_1 d_i+X_i'(\theta_2 +\theta_3d_i)+A_N\theta_4\sum\limits_{j\neq i}m_{ij}G_{ij}d_j)\Tilde{\mu}_i+\frac{A_N}{2}\sum_{i}\sum_{j} (\theta_{5}+ \theta_{6}d_id_j)m_{ij}G_{ij}\Tilde{\mu}_i\Tilde{\mu}_j.
\end{equation}

Therefore,
\begin{equation}
   \begin{split}
        \lVert f\rVert &\leq \sum_{i}(\vert\theta_0 \vert +\vert \theta_1\vert+\vert X_i'\theta_2\vert+\vert \theta_3 X_i\vert)+A_N\sum_{i}\sum_{j}\vert\theta_4m_{ij}\vert G_{ij}+\frac{A_N}{2}\sum_{ i}\sum_{j} G_{ij}(\vert\theta_5m_{ij}\vert+\vert\theta_6m_{ij}\vert)\\
        &\leq N(\vert\theta_0 \vert+\vert\theta_1 \vert+\max_i\vert X_i'\theta_2 \vert+\max_i\vert X_i'\theta_3 \vert)+\overline{m}A_N(
        \vert \theta_4\vert+\vert \theta_5\vert+\vert \theta_6\vert)\sum_i\sum_{j}G_{ij}\\
        &\leq N(\vert\theta_0 \vert+\vert\theta_1 \vert+\max_i\vert X_i'\theta_2\vert+\max_i\vert X_i'\theta_3 \vert)+\overline{m}A_NN\widebar{N}(
        \vert \theta_4\vert+\vert \theta_5\vert+\vert \theta_6\vert)\\&\eqqcolon\Tilde{a}.
   \end{split}
\end{equation}
The partial derivative of $f(\Tilde{\mu})$ with respect to $\Tilde{\mu}_{i}$ is:
\begin{equation}\label{eq:pd}
    \frac{\partial f(\Tilde{\mu})}{\partial\Tilde{\mu}_{i}}=\theta_0+\theta_1+X_i'\theta_2+X_i'\theta_3d_i+A_N\sum_{j\neq i}\theta_4m_{ij}G_{ij}d_j+A_N\sum_{j\neq i}\theta_5m_{ij}G_{ij}\Tilde{\mu}_j+A_N\sum_{j\neq i}\theta_6m_{ij}G_{ij}d_id_j\Tilde{\mu}_j.
\end{equation}
Therefore,
\begin{equation}
    \begin{split}
       \lVert \nabla_i f(\Tilde{\mu})\rVert\leq  \vert \theta_0\vert+\vert\theta_1\vert +\max_i\vert X_i'\theta_2\vert+\max_i\vert X_i'\theta_3\vert+\overline{m}A_N\widebar{N}(
        \vert \theta_4\vert+\vert \theta_5\vert+\vert \theta_6\vert)\eqqcolon \Tilde{b}.
    \end{split}
\end{equation}
The second partial derivative with respect to $\Tilde{\mu}_{j}$ is:
\begin{equation}
    \frac{\partial^2 f(\Tilde{\mu})}{\partial\Tilde{\mu}_{i}\partial\Tilde{\mu}_{j}}=A_N\theta_5m_{ij}G_{ij}+A_N\theta_6m_{ij}G_{ij}d_id_j.
\end{equation}
Therefore, for all $j\neq i$,
\begin{equation}
    \lVert \nabla_i\nabla_j f(\Tilde{\mu})\rVert\leq \overline{m}A_N(\vert \theta_5\vert + \vert \theta_6\vert)G_{ij} =\Tilde{c}G_{ij}\eqqcolon \Tilde{c}_{ij},
\end{equation}
with the second derivative zero if $i=j$.
Next, we need to compute $\vert \mathcal{M}(\epsilon)\vert$, where $\mathcal{M}(\epsilon)$ is the finite subset of $\mathbb{R}^{N}$ such that for any $\Tilde{\mu}\in\{0,1\}^{N}$, there exists $\eta=(\eta_1,...,\eta_{N})\in \mathcal{M}(\epsilon)$ such that
\begin{equation}\label{eq:mathmcon}
 \sum_{i}\Big(\frac{\partial f(\Tilde{\mu})}{\partial\Tilde{\mu}_{i}}-\eta_i\Big)^2\leq N\epsilon^2.
\end{equation}
Recalling Eq.\ref{eq:pd} and defining $T_1$ and $T_2$ as
\begin{equation}
    \begin{split}
      T_1(\Tilde{\mu})\coloneqq\frac{A_N\theta_5}{2}\sum_{i}\sum_{j}m_{ij}G_{ij}\Tilde{\mu}_i\Tilde{\mu}_j,\quad T_2(\Tilde{\mu})\coloneqq\frac{A_N\theta_6}{2}\sum_{i}\sum_{j}m_{ij}G_{ij}d_id_j\Tilde{\mu}_i\Tilde{\mu}_j,
    \end{split}
\end{equation}
we state the partial derivative of $f(\Tilde{\mu})$ as
\begin{equation}\label{Eq:125}
    \begin{split}
        \frac{\partial f(\Tilde{\mu})}{\partial\Tilde{\mu}_{i}}&=\theta_0+\theta_1+X_i'\theta_2+X_i'\theta_3d_i+\theta_4A_N\sum_{j}m_{ij}G_{ij}d_j+\frac{\partial T_1(\Tilde{\mu})}{\partial \Tilde{\mu}_{i}}+\frac{\partial T_2(\Tilde{\mu})}{\partial \Tilde{\mu}_{i}}.
    \end{split}
\end{equation}
To construct a $\mathcal{M}(\epsilon)$ that satisfies Eq.\ref{eq:mathmcon}, we define $\mathcal{M}(\epsilon)$ as:
\begin{equation}
    \begin{split}
        \mathcal{M}(\epsilon)&:=\Big\{\theta_0+\theta_1+X_i'\theta_2+X_i'\theta_3d_i+A_N\theta_4\sum_{j}m_{ij}G_{ij}d_j+\ell_1+\ell_2\\&\quad:\ell_1\in\mathcal{M}_1\left(\frac{\epsilon}{\sqrt{2}}\right),\ell_2\in\mathcal{M}_2\left(\frac{\epsilon}{\sqrt{2}}\right),i\in\mathcal{N}\Big\}.
    \end{split}
\end{equation}
We now first need to construct a set $\mathcal{M}_1(\epsilon)$, which is a finite subset of $\mathbb{R}^{N}$, such that for any $\tilde{\mu} \in \{0,1\}^{N}$, there exists a vector $\lambda = (\lambda_1,\ldots,\lambda_N) \in \mathcal{M}_1(\epsilon)$ satisfying
\begin{equation}\label{Eq:M1e}
 \sum_{i}\Big(\frac{\partial T_1(\Tilde{\mu})}{\partial\Tilde{\mu}_{i}}-\lambda_i\Big)^2\leq N\epsilon^2.
\end{equation}
We then need to construct a set $\mathcal{M}_2(\epsilon)$, which is a finite subset of $\mathbb{R}^{N}$, such that for any $\tilde{\mu} \in \{0,1\}^{N}$, there exists a vector $\vartheta=(\vartheta_1,...,\vartheta_{N})\in \mathcal{M}_2(\epsilon)$ satisfying
\begin{equation}\label{Eq:M2e}
 \sum_{i}\Big(\frac{\partial T_2(\Tilde{\mu})}{\partial\Tilde{\mu}_{i}}-\vartheta_i\Big)^2\leq N\epsilon^2.
\end{equation}
To do so, we define $\lambda_i$ and $\vartheta_i$ for all $i\in\mathcal{N}$ as
\begin{equation}
    \lambda_i = A_N\theta_5\sum_{j}m_{ij}G_{ij}\Tilde{y}_j,\quad \vartheta_i = A_N\theta_6\sum_{j}m_{ij}G_{ij}d_id_j\Tilde{v}_j,
\end{equation}
for some $\Tilde{y}_j\in\{0,1\}$ and $\Tilde{v}_j\in\{0,1\}$. Then,
\begin{equation}
    \begin{split}
        \Big(\frac{\partial T_1(\Tilde{\mu})}{\partial\Tilde{\mu}_{i}}-\lambda_i\Big)^2&=\Big(A_N\theta_5\sum_{j}m_{ij}G_{ij}(\Tilde{\mu}_j-\Tilde{y}_j)\Big)^2\\
         &\leq \Big(A_N\theta_5\sum_{j}m_{ij}^2G_{ij}\Big)\Big(A_N\theta_5\sum_{j}(\Tilde{\mu}_j-\Tilde{y}_j)^2\Big)\\
        &(\text{By the Cauchy–Schwarz inequality})\\
        &\leq A_N^2\theta_5^2N\sum_{j}(\Tilde{\mu}_j-\Tilde{y}_j)^2\max_{i,j}m_{ij}^2.
    \end{split}
\end{equation}
By Lemma \ref{lemmatau}, there always exists $\Tilde{y}_j\in\{0,1\}$ such that for any $\tau\in\mathbb{R}_{+}$
\begin{equation}
    \sum_{j}(\Tilde{\mu}_j-\Tilde{y}_j)^2\leq \tau^2.
\end{equation}
Therefore, by choosing $\tau_1=\sqrt{\frac{\epsilon^2}{2A_N^2\theta_5^2N\max_{ij}m_{ij}^2}}$, we have
\begin{equation}
   \begin{split}
        \Big(\frac{\partial T_1(\Tilde{\mu})}{\partial\Tilde{\mu}_{i}}-\lambda_i\Big)^2\leq A_N^2\theta_5^2N\tau_1^2\max_{i,j}m_{ij}^2
        =\frac{\epsilon^2}{2}
   \end{split}
\end{equation}
Applying the same argument to $\Tilde{v}_j\in\{0,1\}$,
\begin{equation}
    \begin{split}
        \Big(\frac{\partial T_2(\Tilde{\mu})}{\partial\Tilde{\mu}_{i}}-\vartheta_i\Big)^2&=\Big(A_N\theta_6\sum_{j}m_{ij}G_{ij}d_id_j(\Tilde{\mu}_j-\Tilde{v}_j)\Big)^2\\
         &\leq \Big(A_N\theta_6\sum_{j}m_{ij}^2G_{ij}d_id_j\Big)\Big(A_N\theta_6\sum_{j}(\Tilde{\mu}_j-\Tilde{v}_j)^2\Big)\\
        &(\text{By the Cauchy–Schwarz inequality})\\
        &\leq A_N^2\theta_6^2N\sum_{j}(\Tilde{\mu}_j-\Tilde{v}_j)^2\max_{i,j}m_{ij}^2.
    \end{split}
\end{equation}
Again, by Lemma \ref{lemmatau}, there always exists $\Tilde{v}_j\in\{0,1\}$ such that for any $\tau\in\mathbb{R}_{+}$, $\sum_{j}(\Tilde{\mu}_j-\Tilde{v}_j)^2\leq \tau^2$. By choosing $\tau_2=\sqrt{\frac{\epsilon^2}{2A_N^2\theta_6^2N\max_{ij}m_{ij}^2}}$
\begin{equation}
    \Big(\frac{\partial T_2(\Tilde{\mu})}{\partial\Tilde{\mu}_{i}}-\vartheta_i\Big)^2\leq A_N^2\theta_6^2 N\tau_2^2\max_{i,j}m_{ij}^2=\frac{\epsilon^2}{2}.
\end{equation}
Therefore, 
\begin{equation}
    \vert\mathcal{M}(\epsilon)\vert\leq N\cdot \frac{N(N+1)}{2}\cdot\Big\vert\mathcal{M}_1\left(\frac{\epsilon}{\sqrt{2}}\right)\Big\vert\cdot\Big\vert\mathcal{M}_2\left(\frac{\epsilon}{\sqrt{2}}\right)\Big\vert\leq 2^{2N-1}N^2(N+1).
\end{equation}
We now apply Theorem \ref{theorem16}, choosing $\epsilon=N^{-1}$, and noting the current choice of $f$ leads to
\begin{equation}
    \begin{split}
        \mathbb{K}(Q^*\lVert P)&
        \leq \frac{1}{4}(N\sum_{i}b_i^2)^{\frac{1}{2}}N^{-1}+3+\log(2^{2N-1}(N^3+N^2))\\&\quad+4\Big(\sum_{i}b_i^2+\frac{1}{4}\sum_{i,j}(ac_{ij}^2+b_ib_jc_{ij}+4b_ic_{ij})\Big)^{\frac{1}{2}}+\log2\\
        &\leq \frac{\Tilde{b}}{4}+3+2N\log 2+\log(N^3+N^2)+4\Big(\Tilde{b}^2N+\frac{1}{4}\sum_{i,j}(\Tilde{a}\Tilde{c}^2G_{ij}+\Tilde{b}^2\Tilde{c}G_{ij}+4\Tilde{b}\Tilde{c}G_{ij})\Big)^{\frac{1}{2}}\\
        &\leq \frac{\Tilde{b}}{4}+3+2N\log 2+\log(N^3+N^2)+4\Big(\Tilde{b}^2N+\frac{1}{4}\widebar{N}N(\Tilde{a}\Tilde{c}^2+\Tilde{b}^2\Tilde{c}+4\Tilde{b}\Tilde{c})\Big)^{\frac{1}{2}}\\
        &=C_1A_N\widebar{N}+C_2N\\&+\sqrt{(C_3A_N\widebar{N}+C_4A_N^2\widebar{N}^2)N+(C_5A_N+C_6A_N^2\widebar{N}+C_7A_N^3\widebar{N}^2+C_8A_N^2+C_9\widebar{N}A_N^3)N^2}+o(N)\\
        &=C_1A_N\widebar{N}+C_2N+\mathcal{O}\left(\sqrt{A_N^2\widebar{N}^2N}\right)+\mathcal{O}\left(\sqrt{A_N^3\widebar{N}^2N^2}\right)+\mathcal{O}\left(\sqrt{A_N^3\widebar{N}N^2}\right)+o(N),
    \end{split}
\end{equation}
\normalsize
where $o(N)$ collects those elements that are constant or that grow at a slower rate than $N$, and
\begin{equation}
    C_1 = \frac{1}{4}\overline{m}(\vert \theta_4\vert+\vert \theta_5\vert+\vert \theta_6\vert),
\end{equation}
\begin{equation}
    C_2 = 2\log2,
\end{equation}
\begin{equation}
    C_3=32\overline{m}(\vert \theta_0\vert+\vert\theta_1\vert +\max_i\vert X_i'\theta_2\vert+\max_i\vert X_i'\theta_3\vert)(\vert \theta_4\vert+\vert \theta_5\vert+\vert \theta_6\vert)
\end{equation}
\begin{equation}
    C_4=16\overline{m}^2(\vert \theta_4\vert+\vert \theta_5\vert+\vert \theta_6\vert)^2
\end{equation}
\begin{equation}
    C_5=4\overline{m}(\vert \theta_0\vert+\vert\theta_1\vert +\max_i\vert X_i'\theta_2\vert+\max_i\vert X_i'\theta_3\vert+4)(\vert \theta_0\vert+\vert\theta_1\vert +\max_i\vert X_i'\theta_2\vert+\max_i\vert X_i'\theta_3\vert)(\vert\theta_5 \vert+\vert\theta_6 \vert).
\end{equation}
\begin{equation}
    C_6=8\overline{m}^2[(\vert \theta_0\vert+\vert\theta_1\vert +\max_i\vert X_i'\theta_2\vert+\max_i\vert X_i'\theta_3\vert)+2](\vert \theta_4\vert+\vert \theta_5\vert+\vert \theta_6\vert)(\vert\theta_5 \vert+\vert\theta_6 \vert),
    \end{equation}
    \begin{equation}
    C_7=4\overline{m}^3(
        \vert \theta_4\vert+\vert \theta_5\vert+\vert \theta_6\vert)^2(\vert\theta_5 \vert+\vert\theta_6 \vert),
    \end{equation}
    \begin{equation}
    C_8=4\overline{m}^2(\vert\theta_0 \vert+\vert\theta_1 \vert+\max_i\vert X_i'\theta_2 \vert+\max_i\vert X_i'\theta_3 \vert)(\vert\theta_5 \vert+\vert\theta_6 \vert)^2,
    \end{equation}
    \begin{equation}
    C_9=4\overline{m}^3(
        \vert \theta_4\vert+\vert \theta_5\vert+\vert \theta_6\vert)(\vert\theta_5 \vert+\vert\theta_6 \vert)^2.
    \end{equation}
\end{proof}

\subsection{Proof of Theorem \ref{theoremgreedy}}\label{appentheorem32}
\begin{proof}

We need to apply Theorem \ref{theorembian}, which states

\begin{theorem}\citep[\S Theorem 1]{bian2017guarantees}\label{theorembian}
Let $F(\cdot)$ be a non-negative nondecreasing set function with submodularity ratio $\gamma\in[0, 1]$ and curvature
$\xi\in [0, 1]$. The greedy algorithm enjoys the following approximation guarantee for solving the maximization problem with cardinality constraint:
\begin{equation}
    F(D_G)\geq \frac{1}{\xi}(1-e^{-\xi\gamma}) F(D^*),
\end{equation}
where $D_G$ is the result of the greedy algorithm and $D^*$ is the optimal solution.
\end{theorem}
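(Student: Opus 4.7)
The plan is to decompose Theorem \ref{theoremgreedy} into two tasks: first, verify that Assumption \ref{assptse} places $\tilde{W}(\mathcal{D})$ into the framework of Bian et al. (Theorem \ref{theorembian}) so that the approximation inequality $\tilde{W}(D_G) \geq (1/\xi)(1-e^{-\xi\gamma})\tilde{W}(\tilde{D})$ can be deduced; second, rule out the degenerate endpoints $\xi = 1$ and $\gamma = 0$ by establishing explicit quantitative bounds that place both quantities strictly in $(0,1)$ under Assumption \ref{assN}.

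For the first task, the approximation guarantee itself is recovered via the inductive argument of Bian et al., which I would sketch as follows. Let $D_G^{(t)}$ denote the greedy set after $t$ additions, with $D_G^{(0)} = \emptyset$. By the definition of the submodularity ratio $\gamma$ combined with the monotonicity of $\tilde{W}$, one has $\sum_{j \in \tilde{D} \setminus D_G^{(t)}}[\tilde{W}(D_G^{(t)} \cup \{j\}) - \tilde{W}(D_G^{(t)})] \geq \gamma[\tilde{W}(\tilde{D} \cup D_G^{(t)}) - \tilde{W}(D_G^{(t)})] \geq \gamma[\tilde{W}(\tilde{D}) - \tilde{W}(D_G^{(t)})]$. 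Since greedy selects the element with maximum marginal gain, at least one of the at most $\kappa$ candidates in $\tilde{D} \setminus D_G^{(t)}$ delivers a gain of at least $\gamma/\kappa$ times the remaining gap to the optimum. The curvature $\xi$ then relates the marginal gain at the base $D_G^{(t)}$ to marginal gains at smaller bases, producing a recurrence of the form $\tilde{W}(\tilde{D}) - \tilde{W}(D_G^{(t+1)}) \leq (1 - \xi\gamma/\kappa)[\tilde{W}(\tilde{D}) - \xi^{-1}(\tilde{W}(D_G^{(t+1)}) - \tilde{W}(D_G^{(t)}))]$, whose solution iterated $\kappa$ times yields the stated factor $(1 - e^{-\xi\gamma})/\xi$. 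Non-negativity and monotonicity of $\tilde{W}$ are supplied directly by Assumption \ref{assptse}, so Theorem \ref{theorembian} applies to $\tilde{W}$ and delivers the second conclusion.

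For the first conclusion, I would derive explicit bounds $\xi \leq 1 - \zeta$ and $\gamma \geq \zeta$ for a single $\zeta \in (0,1)$, as formalized in Lemma \ref{lemmabound}. The trivial upper bound $\tilde{W}(S \cup \{k\}) - \tilde{W}(S) \leq N$ follows from the sigmoid form of Eq.\ref{eqset}. For a matching strictly positive lower bound on $\tilde{W}(R \cup \{k\}) - \tilde{W}(R)$, I would expand the difference as a sum of sigmoid increments: one for the direct treatment effect on $k$, and one for each neighbor's spillover shift $A_N m_{ik} G_{ik}(\theta_4 + \theta_6 \tilde{\mu}_k)$. The variational mean itself moves from $\tilde{\mu}'$ (allocation $R$) to $\tilde{\mu}$ (allocation $R \cup \{k\}$) when $k$ is treated; invoking Lemma \ref{lemmamonoto} gives $\tilde{\mu} \geq \tilde{\mu}'$ coordinatewise under Assumption \ref{assptse}, which lets me replace $\tilde{\mu}'$ inside each sigmoid argument without decreasing the difference. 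A mean-value expansion of each sigmoid then produces the uniform floor $\underline{\Lambda'} \cdot (A_N \theta_4 \underline{N} \cdot \underline{m} + \theta_1)$, where $\underline{\Lambda'}$ is the minimum of the logistic derivative over the bounded range of sigmoid arguments. Dividing by $N$ defines $\zeta$. Positivity of $\zeta$ is immediate from $\theta_1, \theta_4, A_N, \underline{N}, \underline{m} > 0$ and the strict positivity of $\Lambda'$ on bounded inputs; strict inequality $\zeta < 1$ is exactly what Assumption \ref{assN} delivers, since it forces the numerator to be at most $4N$ while $\underline{\Lambda'} \leq 1/4$.

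The main obstacle will be the sign-preserving lower bound on $\tilde{W}(R \cup \{k\}) - \tilde{W}(R)$. Without the monotone comparative statics of Lemma \ref{lemmamonoto}, the shift of the variational fixed point when $k$ is added would enter the mean-value expansion with indeterminate sign through the spillover terms $\theta_5 \tilde{\mu}_j$ and $\theta_6 d_i d_j \tilde{\mu}_j$, and the resulting lower bound could collapse. Assumption \ref{assptse} — satisfied in the canonical case where $\theta_1, \theta_3, \theta_4, \theta_5, \theta_6 \geq 0$ — is the substantive ingredient that simultaneously drives monotonicity of $\tilde{W}$ (required to invoke Bian et al.) and monotonicity of the variational fixed point (required for sign preservation in the marginal gain expansion), tying together both halves of the proof.
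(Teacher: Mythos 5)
The statement you were asked to prove is the generic guarantee of \citet{bian2017guarantees}. The paper itself offers no proof of it: it is imported verbatim ``for completeness'' and used as a black box, the proof of Theorem \ref{theoremgreedy} being the one-liner ``directly follows Lemma \ref{lemmabound}\ldots then we apply Theorem \ref{theorembian}.'' Most of your proposal --- the trivial upper bound $N$ on marginal gains, the monotone comparison of variational fixed points via Lemma \ref{lemmamonoto}, the mean-value floor $\underline{\Lambda'}\,(A_N\theta_4\underline{N}\cdot\underline{m}+\theta_1)$, and the role of Assumptions \ref{assptse} and \ref{assN} in placing $\zeta$ strictly in $(0,1)$ --- faithfully reproduces the paper's Lemma \ref{lemmabound}, but that is the content of Theorem \ref{theoremgreedy}, not of the statement at hand, so I set it aside.

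Where you do address the statement itself, the inductive sketch has a genuine gap in how curvature enters. After applying the submodularity ratio to obtain $\sum_{j\in \tilde{D}\setminus D_G^{(t)}}[\tilde{W}(D_G^{(t)}\cup\{j\})-\tilde{W}(D_G^{(t)})]\geq \gamma\,[\tilde{W}(\tilde{D}\cup D_G^{(t)})-\tilde{W}(D_G^{(t)})]$, you immediately weaken the right-hand side by monotonicity to $\gamma[\tilde{W}(\tilde{D})-\tilde{W}(D_G^{(t)})]$. That chain can only ever produce the factor $1-e^{-\gamma}$: once $\tilde{W}(\tilde{D}\cup D_G^{(t)})$ has been discarded there is no place left for $\xi$ to act, and the recurrence you then write down (with $D_G^{(t+1)}$ appearing inside the bracket as a marginal gain) does not telescope to $\frac{1}{\xi}(1-e^{-\xi\gamma})$. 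The correct argument keeps $\tilde{W}(\tilde{D}\cup D_G^{(t)})$ and applies the curvature inequality to each greedy element $k_i\in D_G^{(t)}\setminus\tilde{D}$ with $S=D_G^{(i-1)}$ and $R=\tilde{D}\cup D_G^{(i-1)}$; telescoping gives $\tilde{W}(\tilde{D}\cup D_G^{(t)})\geq \tilde{W}(\tilde{D})+(1-\xi)\bigl[\tilde{W}(D_G^{(t)})-\tilde{W}(\emptyset)\bigr]$. Combined with the greedy choice and $\tilde{W}(\emptyset)\geq 0$ this yields, for $h_t=\tilde{W}(D_G^{(t)})$, the recursion $h_{t+1}\geq (1-\xi\gamma/\kappa)\,h_t+(\gamma/\kappa)\,\tilde{W}(\tilde{D})$, whence $h_\kappa\geq \frac{1}{\xi}\bigl[1-(1-\xi\gamma/\kappa)^{\kappa}\bigr]\tilde{W}(\tilde{D})\geq \frac{1}{\xi}(1-e^{-\xi\gamma})\tilde{W}(\tilde{D})$. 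If you intend to prove the imported theorem rather than cite it as the paper does, this curvature step is the missing ingredient.
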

The definitions of submodularity, the submodularity ratio, and the curvature of a set function $f$ are as follows.
\begin{definition}{(\textbf{Submodularity})}: A set function is a submodular function if:
\begin{equation}
    \sum_{k\in R\setminus S} [f(S\cup \{k\})-f(S)]\geq  f(S\cup R)-f(S), \quad\forall S,R\subseteq\mathcal{N}.
\end{equation}
\end{definition}
\begin{definition}
    (\textbf{Submodularity Ratio}) The submodularity ratio of a non-negative set function $f(\cdot)$ is the largest $\gamma$ such that
\begin{equation}
    \sum_{k\in R\setminus S} [f(S\cup \{k\})-f(S)]\geq \gamma [f(S\cup R)-f(S)], \quad\forall S,R\subseteq\mathcal{N}.
\end{equation}
\end{definition}

\begin{definition}
    (\textbf{Curvature}) The curvature of a non-negative set function $f(\cdot)$ is the smallest value of $\xi$ such that
\begin{equation}
f(R\cup \{k\})-f(R)\geq (1-\xi)[f(S\cup \{k\})-f(S)],\quad   \forall S\subseteq R\subseteq \mathcal{N}, \forall k\in \mathcal{N}\setminus R.
\end{equation}
\end{definition}

The submodularity of a set function is analogous to concavity of a real function and implies that the function has diminishing returns. 
The marginal increase in the probability of choosing action $1$ decreases with the number of treated units. 
The submodularity ratio captures how much greater the probability of choosing action $1$ is from providing treatment to a group of units versus the combined benefit of treating each unit individually. 
Curvature can be interpreted as how close a set function is to being additive.

We associate the set function $f(\cdot)$ in the above definitions with the variationally approximated welfare $\Tilde{W}(\cdot)$, which we view as a real-valued mapping of treatment allocation sets $\mathcal{D} \subset \mathcal{N}$ (i.e., $\mathcal{D}=\{i\in\mathcal{N}:d_i=1\}$):

\begin{equation}\label{eqset}
            \begin{split}
                \Tilde{W}(\mathcal{D})&= \sum_{i\in \mathcal{D}}\Lambda\big[\theta_0+\theta_1 +X_i'(\theta_2+\theta_3) +A_N\theta_5 \sum_{\substack{j\neq i\\j\in\mathcal{N}}}m_{ij}G_{ij}\Tilde{\mu}_j+A_N\sum_{\substack{j\neq i\\j\in \mathcal{D}}}m_{ij}G_{ij}(\theta_4+\theta_6\Tilde{\mu}_j)\big]\\&\quad+\sum_{k\in\mathcal{N}\setminus\mathcal{D}}\Lambda\big[\theta_0+X_k'\theta_2+A_N\theta_4\sum_{\ell\in\mathcal{D}}m_{k\ell}G_{k\ell}+A_N\theta_5\sum_{\substack{\ell\neq k\\\ell\in\mathcal{N}}}m_{k\ell}G_{k\ell}\Tilde{\mu}_{\ell}\big].
            \end{split}
        \end{equation}  
We characterize the submodularity ratio and curvature of $\tilde{W}(\cdot)$ to obtain an analytical performance guarantee for our greedy algorithm.

Under Assumption \ref{assptse}(i), set function $\Tilde{W}(\cdot)$ is non-decreasing, its curvature $\xi$ and its submodularity ratio $\gamma$ must belong to $[0,1]$ \citep{bian2017guarantees}. 
Having $\xi \in [0,1]$ and $\gamma \in [0,1]$ is not, however, enough to attain a nontrivial performance guarantee. 
For instance, if $\gamma=0$, the lower bound in Theorem \ref{theoremgreedy} equals 0, which is a trivial lower bound; 
if $\xi=0$, then the lower bound equals $\gamma$, which could be $0$. when $\gamma = 1$ and $\xi = 0$, this represents the most favorable scenario for our performance guarantee. A lower value of $\gamma$ implies that the objective function is less likely to exhibit the submodularity property, meaning it no longer satisfies diminishing returns. In such cases, the greedy algorithm becomes less effective because selecting the unit with the highest marginal gain is no longer well justified. Similarly, when the curvature $\xi$ is high, the function becomes less additive, indicating that selecting only one unit at each iteration of the greedy algorithm is suboptimal. This provides an intuition for why the problem becomes more challenging in these settings. Assumption \ref{assptse} (ii) gives a sufficient condition to bound the submodularity ratio and curvature away from 0 and 1.

To summarize, the first statement in Theorem \ref{theoremgreedy} directly follows Theorem \ref{theorembian} with $F(D) = \Tilde{W}(D)$. Let $C_{greedy}=1-\frac{1}{\xi}(1-e^{-\xi\gamma})$, Lemma \ref{lemmabound} guarantees $C_{greedy}>0$.  
\end{proof}

\end{document}